\newtheorem{theorem}{Theorem}[section]
\newtheorem{lemma}[theorem]{Lemma}
\newtheorem{claim}[theorem]{Claim}
\newtheorem{corollary}[theorem]{Corollary}
\newtheorem{definition}[theorem]{Definition}
\newtheorem{observation}[theorem]{Observation}
\newtheorem{question}[theorem]{Question}
\newcommand{\ignore}[1]{}
\newcommand{\cA}{{\cal A}}
\newcommand{\cC}{{\cal C}}
\newcommand{\cD}{\mathcal{D}}
\newcommand{\cE}{{\cal E}}
\newcommand{\cR}{{\cal R}}
\newcommand{\poly}{\mathrm{poly}}
\newcommand{\ceil}[1]{\lceil#1\rceil}
\newcommand{\Exp}{\EX}
\newcommand{\floor}[1]{\lfloor#1\rfloor}
\newcommand{\EX}{\mathbf{E}}
\newcommand{\Sec}[1]{\hyperref[sec:#1]{\Cref*{sec:#1}}} 
\newcommand{\Eqn}[1]{\hyperref[eq:#1]{(\ref*{eq:#1})}} 
\newcommand{\Fig}[1]{\hyperref[fig:#1]{Fig.\,\ref*{fig:#1}}} 
\newcommand{\Tab}[1]{\hyperref[tab:#1]{Tab.\,\ref*{tab:#1}}} 
\newcommand{\Thm}[1]{\hyperref[thm:#1]{Theorem\,\ref*{thm:#1}}} 
\newcommand{\Fact}[1]{\hyperref[fact:#1]{Fact\,\ref*{fact:#1}}} 
\newcommand{\Lem}[1]{\hyperref[lem:#1]{Lemma\,\ref*{lem:#1}}} 
\newcommand{\Prop}[1]{\hyperref[prop:#1]{Prop.~\ref*{prop:#1}}} 
\newcommand{\Cor}[1]{\hyperref[cor:#1]{Corollary~\ref*{cor:#1}}} 
\newcommand{\Conj}[1]{\hyperref[conj:#1]{Conjecture~\ref*{conj:#1}}} 
\newcommand{\Def}[1]{\hyperref[def:#1]{Definition~\ref*{def:#1}}} 
\newcommand{\Alg}[1]{\hyperref[alg:#1]{Alg.~\ref*{alg:#1}}} 
\newcommand{\Obs}[1]{\hyperref[obs:#1]{Obs.~\ref*{obs:#1}}} 
\newcommand{\Ex}[1]{\hyperref[ex:#1]{Ex.~\ref*{ex:#1}}} 
\newcommand{\Clm}[1]{\hyperref[clm:#1]{Claim~\ref*{clm:#1}}} 
\newcommand{\Step}[1]{\hyperref[step:#1]{Step~\ref*{step:#1}}} 
\newcommand{\supp}{\mathsf{supp}}
\newcommand{\query}{\mathsf{count}}
\title{Clustering with Non-adaptive Subset Queries\thanks{A preliminary version of this work appeared at NeurIPS 2024. This collaboration is the result of an EnCORE Institute workshop.}}
\author{%
  Hadley Black\thanks{CUNY Baruch College. This work was completed while the author was a Postdoctoral Researcher at UC San Diego supported by NSF TRIPODS Institute grant 2217058 (EnCORE) and NSF 2133484.}
\and
  Euiwoong Lee\thanks{University of Michigan. Supported in part by NSF grant 2236669 and Google. }
\and
  Arya Mazumdar\thanks{UC San Diego. Supported by NSF TRIPODS Institute grant 2217058 (EnCORE) and NSF 2133484.}
\and
  Barna Saha \thanks{UC San Diego. Supported by NSF TRIPODS Institute grant 2217058 (EnCORE) and NSF 2133484.} 
}
\begin{document}

\date{}
\maketitle

\begin{abstract}
Recovering the underlying $k$-clustering of a set $U$ of $n$ points by asking pair-wise \emph{same-cluster} queries has garnered significant interest in the past few years. Given a query $S \subset U$, $|S|=2$, the oracle returns {\em yes} if the points are in the same cluster and {\em no} otherwise. For \emph{adaptive} algorithms, the query complexity is known to be $\Theta(nk)$, while \emph{non-adaptive} algorithms are extremely limited: even for $k=3$, such algorithms require $\Omega(n^2)$ queries, matching the trivial upper bound. However, non-adaptivity is highly desirable since it allows queries to be asked in parallel. To break the quadratic barrier for non-adaptive queries, we study a natural generalization of this problem to {\em subset queries} for $|S|>2$, where the oracle returns the number of clusters intersecting $S$. Previous work obtained an $O(n)$ query adaptive algorithm, but the realm of \emph{non-adaptive} algorithms remained completely unknown.

In this paper, we give the first \emph{non-adaptive} algorithms for clustering with subset queries. Our main result is a non-adaptive algorithm making $O(n \log k \cdot (\log k + \log\log n)^2)$ queries, improving to $O(n \log \log n)$ when $k$ is constant. In addition to non-adaptivity, we make other practical considerations, such as enforcing a bound, $s$, on the query size. We show $\Omega(\max(n^2/s^2,n))$ queries are necessary and obtain algorithms making $\smash{\widetilde{O}(n^2k/s^2)}$ queries for any $s \leq \sqrt{n}$ and $\smash{\widetilde{O}(n^2/s)}$ queries for any $s \leq n$. Finally, we obtain improved upper bounds when the clusters are roughly balanced, and when the algorithm is allowed \emph{two rounds} of adaptivity.

\end{abstract}

\newpage

\tableofcontents
\newpage

\section{Introduction}

Clustering is one of the most fundamental problems in unsupervised machine learning, and permeates beyond the boundaries of statistics and computer science to social sciences, economics, and so on. The goal of clustering is to partition items so that similar items are in the same group. The applications of clustering are manifold. However, finding the underlying clusters is sometimes hard for an automated process due to data being noisy, incomplete, but easily discernible by humans. Motivated by this scenario, in order to improve the quality of clustering,  early works have studied the so-called  clustering under ``limited supervision'' (e.g., \cite{cohn2003semi,bair2013semi}). Balcan and Blum initiated the study of clustering under active feedback \cite{balcan2008clustering} where given the current clustering solution, the users can provide feedback whether a cluster needs to be merged or split. Perhaps a simpler query model would be where users only need to answer the number of clusters, and that too only on a subset of points without requiring to analyze the entire clustering. This scenario is common in unsupervised learning problems, where
a centralized algorithm aims to compute a clustering by crowdsourcing. The crowd-workers play the role of an oracle here, and are able to answer simple queries that involve a small subset of the universe (e.g.~\cite{wang2012crowder,vesdapunt2014crowdsourcing}). Learning a partition in a graph with edge-queries is a closely related problem, and had been studied in \cite{reyzin2007learning}.
Mazumdar and Saha \cite{mazumdar2017theoretical, MS17a,MS17b}, and in independent works Mitzenmacher and Tsourakis \cite{mitzenmacher2016predicting}, as well as Asthani, Kushagra and Ben-David \cite{ashtiani2016clustering} initiated various theoretical studies of clustering with pair-wise aka {\em same-cluster} queries (which are same as edge-queries). Given any pair of points $u,v$, the oracle returns whether $u$ and $v$ belong to the same cluster or not. This type of query is easy to answer and lends itself to simple implementations~\cite{mazumdar2017semisupervised}. This has been subsequently extremely well-studied in the literature, e.g.~\cite{saha2019correlation,del2022clustering,mazumdar2017theoretical,bressan2020exact,huleihel2019same}.
Moreover, clustering with pair-wise queries is intimately related to several well-studied problems such as correlation clustering \cite{BBC04, ACN08, CMSY15, saha2019correlation, CCLLNV24}, edge-sign prediction problem \cite{leskovec2010predicting, mitzenmacher2016predicting}, stochastic block model \cite{abbe2018community, mukherjee2024recovering}, etc.

Depending on whether there is an interaction between the
learner/algorithm and the oracle, the querying algorithms can be classified as adaptive and non-adaptive \cite{MS17a}. In adaptive querying, the learner can decide the next query based on the answers to the previous queries. An algorithm is called \emph{non-adaptive} if all of its queries can be specified in one round. Non-adaptive algorithms can parallelize the querying process as they decide the entire set of queries apriori. This may greatly speed up the algorithm in practice, significantly reducing the time to acquire answers \cite{gu2012towards}. Thus, in  a crowdsourcing setting being non-adaptive is a highly desirable property. On the flip side, this makes non-adaptive algorithms significantly harder to design.
In fact, when adaptivity is allowed, $\Theta(nk)$ pair-wise queries are both necessary and sufficient to recover the entire clustering, where $n$ is the number of points in the ground set to be clustered and $k$ (unknown) is the number of clusters. However as shown in \cite{MS17a} and our \Cref{thm:3-2-NA-LB}, even for $k = 3$, even randomized \emph{non-adaptive algorithms can do no better than the trivial $O(n^2)$ upper bound attained by querying all pairs}. 



We study a generalization of pair-wise queries to subset queries, where given any subset of points, the oracle returns the number of clusters in it. We consider the problem of recovering an unknown $k$-clustering (a partition) on a universe $U$ of $n$ points via black-box access to a \emph{subset query oracle}. More precisely, we assume that there exists a groundtruth partitioning of $U = \bigsqcup_{i=1}^k C_i$, and upon querying with a subset $S \subseteq U$, the oracle returns $\query(S) = |\{i: C_i \cap S \neq \emptyset\}|$, the number of clusters intersecting $S$. Considering the limitations of pair-wise queries for non-adaptive schemes, we ask the question if it is possible to use subset queries to design significantly better non-adaptive algorithms. We remark that subset queries as a generalization of pairwise queries are also actively studied in the context of estimation problems in graphs (e.g. \cite{BHRRS20, LW20, DBLP:journals/corr/abs-2603-14306}).

In addition to being a natural model for interactive clustering, this problem also falls into the growing body of work known as \emph{combinatorial search} \cite{A88,DH00} where the goal is to reconstruct a hidden object by viewing it through the lens of some indirect query model (such as group testing~\cite{du2000combinatorial,HS87, DH00,flodin2021probabilistic,mazumdar2016nonadaptive,PR08}). The problem is also intimately connected to coin weighing where given a hidden vector $x \in \{ 0, 1 \}^n$, the goal is to reconstruct $x$ using queries of the form $q(S) := \sum_{i \in S} x_i$ for $S \subseteq [n]$. It is known that $\Theta(n / \log n)$ is the optimal number of queries
        ~\cite{lindstrom1964combinatory, lindstrom1966combinatory, cantor1966determination}, which can be obtained by a non-adaptive algorithm. There are improvements for the case when $\| x \|_1 = d$ for $d \ll n$~\cite{bshouty2009optimal, BM11a, BM11b}. Moreover, there has been significant work on graph reconstruction where the task is to reconstruct a hidden graph $G = (V, E)$ from queries of the form $q(S, T) := |\{ (u, v) \in E: u \in S, v \in T \}|$ for subsets $S, T \subseteq V$ ~\cite{AC04, choi2008optimal, mazzawi2010optimally, choi2013polynomial}.
        There are also algorithms that perform certain tasks more efficiently than learning the whole graph (sometimes using different types of queries)~\cite{rubinstein2017computing,MN20, assadi2021graph, AEGLMN22, liao2024learning, BHRRS20, LW20, AMM22}, and quantum algorithms that use fewer queries than classical algorithms~\cite{lee2021quantum}.

It is not too difficult to show that an algorithm making $O(n \log k)$ queries (\Cref{sec:simple-adaptive}) is possible for $k$-clustering, while $\Omega(n)$ queries is an obvious information theoretic lower bound since each query returns $\log k$ bits of information and the number of possible $k$-clusterings is $k^n = 2^{n \log k}$. In fact, there exists an algorithm with $O(n)$ query complexity \cite{CL24}.
 However, both of these algorithms are adaptive, ruling them out for the non-adaptive setting. So far, the non-adaptive setting of this problem has remained unexplored.

\subsection{Results}

Our main results showcase the significant strength of using subset queries in the non-adaptive setting. We give randomized algorithms that recover the exact clustering with probability $1-\delta$, for any constant $\delta > 0$ using only near-linear number of subset queries. For the purpose of clarity in the introduction, our theorems are stated for small constant $\delta$, and the exact dependence on this parameter can be seen in the formal theorem statements in the technical sections. Though our focus in this work is on \emph{query complexity}, we remark that all of our algorithms are computationally efficient and run in polynomial time\footnote{For the sake of brevity and to not distract from the main message, we do not include runtime analysis in our proofs or in our theorem statements. We hope that it is clear from our pseudocode that each algorithm can be implemented in polynomial time.}. A tabular view of our results is given in table \ref{table:results}.

\paragraph{Algorithms with unrestricted query size.} Our main result is the following algorithm for $k$-clustering with subset queries of unrestricted size.

\begin{theorem} [\Cref{thm:1}, informal] \label{thm:1-informal} There is a randomized, non-adaptive $k$-clustering algorithm making $O(n \log k \cdot (\log k + \log \log n)^2)$ subset queries. \end{theorem}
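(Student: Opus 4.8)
My plan is to build everything from one observation: for a set $S\subseteq U$ and a point $u\notin S$, adding $u$ to $S$ can create at most one new ``hit'' cluster, so $\query(S\cup\{u\})-\query(S)\in\{0,1\}$, and this difference is $0$ exactly when $u$'s cluster $C(u)$ already meets $S$. Thus the two queries $\query(S)$ and $\query(S\cup\{u\})$ reveal the bit $b_S(u):=\mbone[\,C(u)\cap S=\emptyset\,]$, and $b_S$ is constant on each cluster. Fix a family $\mathcal S$ of subsets and ask, once, all $|\mathcal S|$ queries $\{\query(S)\}_{S\in\mathcal S}$ and, for every point $u$, the $|\mathcal S|$ queries $\{\query(S\cup\{u\})\}_{S\in\mathcal S}$. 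This learns the signature $\sigma(u):=(b_S(u))_{S\in\mathcal S}$; two points in the same cluster have equal signatures, and if $\mathcal S$ \emph{separates every pair of clusters} (for all $i\ne j$ some $S\in\mathcal S$ meets exactly one of $C_i,C_j$) then distinct clusters get distinct signatures, so grouping points by signature recovers the clustering exactly. The whole scheme is oblivious, hence non-adaptive, so the theorem reduces to constructing a random family $\mathcal S$ with $|\mathcal S|=O\!\big(\log k\,(\log k+\log\log n)^2\big)$ that, with probability $\ge 1-\delta$, separates all $\binom k2$ pairs.

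\textbf{Multi-scale random sets, and a warm-up.} Call a set a $p$-\emph{sample} if it contains each point independently with probability $p$. A routine computation shows that if $|C_i|\ge|C_j|$ and $p=\Theta(1/|C_i|)$ then a single $p$-sample meets $C_i$ but not $C_j$ with constant probability; more generally, for clusters of comparable size one needs $p$ at the matching scale $\Theta(1/|C_i|)$, while a cluster far larger than another can be separated from it over a wide range of scales. So if for each scale $j\in\{0,1,\dots,\lceil\log n\rceil\}$ we put $r$ independent $2^{-j}$-samples into $\mathcal S$, then each pair is separated except with probability $2^{-\Omega(r)}$; a union bound over the $\binom k2$ pairs gives $r=\Theta(\log k+\log(1/\delta))$. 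This already yields a non-adaptive algorithm, but with $|\mathcal S|=O(\log n\cdot\log k)$ and hence $O(n\log n\log k)$ queries. The spurious $\log n$ is the crux: a priori the cluster sizes occupy the whole interval $[1,n]$, the useful scale for a pair of comparable clusters is essentially pinned down by their (unknown) size, and such pairs may occur at \emph{any} of the $\Omega(\log n)$ scales, so the oblivious family seems forced to stock all of them.

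\textbf{Removing the $\log n$: the main work.} The key structural fact is that the $k$ clusters realize at most $k$ distinct sizes, so only $O(k)$ scales are ever relevant---but non-adaptivity forbids first learning the sizes and then choosing scales. The plan is to let each point extract, from \emph{its own} query answers, a short certificate of the scale of its cluster. Concretely, augment $\mathcal S$ with a nested family of $O(\log\log n)$ samples whose densities are doubly-geometrically spaced ($2^{-2^t}$ for $t=0,1,\dots,\lceil\log\log n\rceil$): the monotone string $(b_S(u))_S$ over this family localizes $\log|C(u)|$ to an interval of scales of multiplicative width $2$, and applying the same idea recursively inside that interval---together with the fact that only $O(k)$ distinct scales actually occur---narrows the relevant scales for $u$ to a list of size $O(\log k+\log\log n)$. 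For each such scale we keep $r=O(\log k+\log\log n+\log(1/\delta))$ independent samples (the extra $\log\log n$ now enters the union bound because it ranges over the reduced---but still up to $\log n$ in aggregate---set of scales that some point uses), giving $O\!\big((\log k+\log\log n)^2\big)$ queries per point; iterating the construction over $O(\log k)$ ``rounds''---an oblivious surrogate for binary-searching $u$'s cluster among the $k$ survivors, in the spirit of the simple adaptive $O(n\log k)$ algorithm---produces the final $\log k$ factor and the claimed $O\!\big(n\log k\,(\log k+\log\log n)^2\big)$ bound. Very small clusters (size at most $\poly(\log k,\log\log n)$) are handled by a dedicated high-density layer: $O(\log k+\log\log n)$ dense samples isolate a representative of each such cluster, and every point is then matched to such a representative or certified to lie in a ``large'' cluster handled by the main scheme.

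\textbf{Where the difficulty lies.} If a second round of adaptivity were allowed, one could simply learn the $\le k$ cluster sizes (or a representative set) with $O(\log n)$/$O(n)$ queries and then classify in $O(n\log k)$---indeed the paper's two-round algorithm attains $O(n\log k)$. The entire difficulty is to do this in one shot, with every query fixed in advance; the main obstacle is to engineer the oblivious multi-scale family and the per-point localization gadget so that, simultaneously and with probability $\ge 1-\delta$: (i) every one of the $\binom k2$ pairs is separated; (ii) the doubly-logarithmic scale localization is correct for all $n$ points at once, which is what holds the overhead at $\log\log n$ rather than $\log n$; and (iii) the small-cluster layer and the large-cluster scheme interact cleanly, leaving no pair unseparated. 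Proving (i)--(iii) within the stated query budget, with an honest accounting of the failure probability, is the substance of the argument; the primitive and the signature-based recovery are routine.
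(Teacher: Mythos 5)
Your primitive and the signature reduction are sound (this is exactly the paper's observation that $\query(S\cup\{x\})-\query(S)$ reveals whether $C(x)$ meets $S$), and your warm-up $O(n\log n\log k)$ scheme is correct. The gap is the step that removes the $\log n$, which is where the whole difficulty lies and which your sketch does not actually supply. In your framework every point is tested against every set of one fixed family $\mathcal S$ that must separate all pairs of clusters, and with i.i.d.\ density-$p$ samples such a family cannot have size $o(\log n)$: for two clusters of equal size $s=2^j$, a $p$-sample meets exactly one of them with probability at most $\min\bigl(2ps,\;2(1-p)^s\bigr)$, so a sample of density $2^{-j'}$ contributes roughly $2^{-(j'-j)}$ (if $j'\ge j$) or $e^{-2^{j-j'}}$ (if $j'<j$) toward separating that pair; summing these weights over $j=1,\dots,\log n$ gives $O(1)$ per sample, so handling an equal-size pair at \emph{every} possible scale forces $\Omega(\log n)$ sets and hence $\Omega(n\log n)$ queries --- already above the claimed bound when $k$ is constant. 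Your proposed repair (doubly-geometric densities to localize $\log|C(u)|$, then ``recursion inside that interval'' using that only $O(k)$ scales occur) is circular under non-adaptivity: all $(S,u)$ queries are fixed in advance, so learning $u$'s scale from the answers cannot retroactively spare you the queries you would have had to place at the other scales, and you cannot pre-restrict $u$'s queries to its scale because the scale is unknown; moreover a single sample per density localizes a given point only with constant probability, not for all $n$ points simultaneously. Your dedicated small-cluster layer does not resolve this either, since the problematic pairs are two clusters of comparable size sitting at an arbitrary intermediate scale.

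What is missing is a second mechanism that makes the per-scale cost vanish for small clusters, which is how the paper's proof of this theorem actually works. The paper uses the per-point, group-testing ($\mathsf{FINDREP}$) strategy only for the $O(\log k+\log\log(n/\delta))$ largest scales, paying $O(n\log(k/\delta)\cdot p)$ at scale $p$; for every smaller scale it abandons per-point classification entirely: it queries random sets $T$ of size $2^p/k$, and whenever $T$ contains exactly two not-yet-clustered points, the answer (interpreted in post-processing using the already reconstructed larger clusters, so no adaptivity is needed) reveals whether that pair is same-cluster; these pairs form a random graph and each remaining cluster is recovered as a connected component via the connectivity threshold. The number of sets needed at scale $p$ is $O\bigl(nk^2\log(n/\delta)/2^p\bigr)$, which decays geometrically, so all small scales together cost only $O(n)$ queries. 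Without this idea (or some substitute achieving the same decay), your scheme is stuck at $\Omega(n\log n)$ queries, so the proposal as written does not establish the theorem.
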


Additionally, we obtain the following algorithm which improves the dependence on $n$, at the cost of a worse dependence on $k$. In particular, for constant $k$ this algorithm has query complexity $O(n \log \log n)$, improving on \Cref{thm:1-informal} by a $O(\log \log n)$ factor.

\begin{theorem} [\Cref{thm:nloglogn}, informal] \label{thm:nloglogn-informal} There is a randomized, non-adaptive $k$-clustering algorithm making $O(n \log \log n \cdot k \log k)$ subset queries.
\end{theorem}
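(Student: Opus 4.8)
The plan is a divide-and-conquer scheme whose recursion tree has depth only $O(\log\log n)$, because each level \emph{squares} the granularity rather than doubling it. Think of the algorithm as maintaining (through the decoding of its pre-committed queries) a partition $\Pi_0 \succeq \Pi_1 \succeq \cdots$ of $U$ in which every part is contained in a single true cluster: $\Pi_0$ is the partition into singletons, and at each step we merge parts that belong to the same cluster until we reach the true $k$-clustering. Level $0$ partitions $U$ into disjoint base blocks of size $s_0 = \Theta(k\log k)$ and, within each block, queries every pair (recall $\query(\{u,v\})=1$ iff $u\sim v$); this reveals the true clustering restricted to each block using $\binom{s_0}{2}$ queries per block, i.e.\ $O(n s_0)=O(nk\log k)$ queries total. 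At level $i\ge 1$ we group the blocks of level $i-1$ into super-blocks, each consisting of $s$ of the previous blocks (so a level-$i$ block has size $s^2$ where $s$ was the previous block size), and \emph{merge} the sub-clusterings inside each super-block. Since the block size goes $s_0 \to s_0^2 \to s_0^4 \to \cdots$, after $O(\log\log n - \log\log(k\log k)) = O(\log\log n)$ levels a single block is all of $U$ and we are done. (If $k\log k \ge n$ the recursion is trivial/degenerate and is handled directly; one may otherwise assume the stated bound is meaningful.)

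\textbf{The merge step and why it costs $O(nk\log k)$ per level.} Fix a super-block $G$ at level $i$, composed of previously-resolved pieces $P_1,\dots,P_s$ (each a level-$(i-1)$ block, $|P_j|=s$, each carrying a decoded partition into at most $k$ sub-clusters). Because each sub-cluster is contained in one true cluster, the "same true cluster" relation among the $\le sk$ sub-clusters of $G$ is determined by any single representative of each: merging $G$ amounts to, for each piece $P_{j'}$ and each of its sub-clusters $Y$, locating which sub-cluster of $P_j$ (if any) lies in the same cluster as $Y$, for each $j\ne j'$. Done with foresight of the lower-level decoding, each such lookup is a binary search over the $\le k$ sub-clusters of $P_j$ using the representative of $Y$ against unions of $P_j$'s sub-clusters, i.e.\ $O(\log k)$ queries; summing over the $\le sk$ sub-clusters of $G$, the $s$ pieces, and the $n/s^2$ super-blocks gives $O\big((n/s^2)\cdot sk\cdot s\cdot \log k\big)=O(nk\log k)$ per level, hence $O(nk\log k\log\log n)$ overall. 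Correctness follows by a union bound over the $\le n$ parts and $O(\log\log n)$ levels, with the failure probability $\delta$ absorbed into constants (the $\delta$-dependence suppressed in the informal statement is recovered in the formal version).

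\textbf{The main obstacle: making the merge non-adaptive.} The difficulty is that the algorithm is non-adaptive, so the level-$i$ queries must be fixed \emph{before} the lower-level partitions (and hence the sub-cluster representatives, and the "unions of sub-clusters of $P_j$" that the binary search would probe) are known. A naive universal substitute --- querying all pairs of points across pieces within a super-block --- costs $\binom{s^2}{2}$ per super-block, i.e.\ $\Theta(n^2)$ at the top level, which is exactly the quadratic barrier we must avoid; so the crux is to design pre-committed, "structure-oblivious" query sets that nevertheless allow the binary-search decoding once the lower-level structure is in hand. The plan here is to fix, for each piece $P_j$, a uniformly random ordering (equivalently, independent random priorities on the points), which canonically orders $P_j$'s sub-clusters and designates the min-priority point of each sub-cluster as its representative; we then pre-commit to a family of random contiguous-interval subsets of $P_j$ under this ordering (a random balanced binary tree over $P_j$), together with their unions with small random samples that stand in for the unknown representatives of other pieces. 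One shows that, conditioned on the correct lower-level decoding, these answers suffice to simulate every required binary search, and a charging argument shows the number of such queries that are "useful" is $O(nk\log k)$ per level with high probability --- the extra factor of $k$ (relative to an adaptive $O(n\log k)$-type bound) being the price of handling all $\le k$ sub-clusters of a piece simultaneously without knowing which point represents which. Verifying this non-adaptive simulation, and the accompanying concentration for the charging argument, is the technically delicate part; everything else (the recursion-depth bookkeeping and the base-case pairwise queries) is routine.
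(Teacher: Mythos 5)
Your proposal has a genuine gap, and it sits exactly at the crux of the problem. The hierarchical merge scheme (blocks of size $\Theta(k\log k)$, then repeated squaring, with an $O(\log k)$-query binary search to match each sub-cluster of one piece against the sub-clusters of a neighboring piece) is an inherently \emph{adaptive} design: the sets you would query in the binary search are unions of sub-clusters of $P_j$ (or representative sets of $P_j$), and these are only known after the lower-level answers have been decoded. You acknowledge this and propose to fix it with pre-committed random orderings, random contiguous intervals, and ``small random samples that stand in for the unknown representatives,'' but you never show that these oblivious queries actually let you simulate the required searches, and it is not at all clear that they can. A contiguous interval of points under a random ordering is not a union of sub-clusters, and the paper's $\mathsf{FINDREP}$-style primitive (which \emph{is} non-adaptive) only works when the probed set meets the target cluster in exactly one point --- which fails for a whole piece $P_j$ and would require probing the unknown representative set. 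Simulating a $\log k$-depth search obliviously would naively require committing to queries for every branch, and your accounting only bounds the number of ``useful'' queries; in a non-adaptive algorithm you pay for every pre-committed query, so that charging argument does not bound the query complexity even if the simulation worked. Since the entire difficulty of the theorem is non-adaptivity, deferring this step is not a routine omission; as written, the argument does not go through.

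For contrast, the paper's proof avoids any merge-and-refine structure whose queries would depend on earlier decoding. It stratifies clusters by size: for each $p\le \log\log(n/\delta)$ it queries $O(k\log k)$ random sets $T$ of size $2^p$ together with $T\cup\{x\}$ for all $x$ --- if $T$ happens to contain exactly one point of the not-yet-recovered region, those answers identify that point's entire cluster --- and for larger $p$ it queries random sets whose traces on the unrecovered region yield random same-cluster pairs, recovering small clusters via Erd\H{o}s--R\'enyi connectivity. All query sets are chosen obliviously up front; only the \emph{interpretation} of the answers (in the reconstruction phase) conditions on previously recovered clusters, which is what makes non-adaptivity come for free. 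If you want to salvage your approach, you would need a concrete oblivious gadget with a proof that it simulates the cross-piece matching, plus a bound on the total (not just useful) number of committed queries; that is precisely the missing ingredient.
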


\noindent{\bf Bounding query size:} Another practical consideration is the allowed query size, $s$.  Depending on the scenario, and capabilities of the oracle, it may be easier to handle queries on small subsets. An extreme case is \emph{pair-wise queries} ($s=2$), 
where $O(nk)$ pair queries are enough with adaptivity, but any non-adaptive algorithm has to use $\Omega(n^2)$ queries even for $k = 3$. 
Since a subset query on $S$ can be simulated by $\smash{{|S| \choose 2}}$ pair queries, 
we immediately get an $\Omega(n^2/s^2)$ lower bound. Combining with the $\Omega(n)$ information-theoretic lower bound gives the following theorem.

\begin{theorem} [\Cref{cor:3-s-LB}, restated] \label{thm:LB-NA} Any non-adaptive $k$-clustering algorithm that is only allowed to query subsets of size at most $s$ must make at least $\smash{\Omega(\max(n^2/s^2, n))}$ queries. \end{theorem}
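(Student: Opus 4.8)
\emph{Proof idea.} The claimed bound is the larger of two quantities, so it suffices to prove $q = \Omega(n)$ and $q = \Omega(n^2/s^2)$ separately for every non-adaptive algorithm that makes $q$ queries, each of size at most $s$; combining the two then gives the theorem.

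The plan for the $\Omega(n^2/s^2)$ term is to reduce to the non-adaptive pair-query lower bound of \Cref{thm:3-2-NA-LB}. The key observation is that the answer to a subset query on a set $S$ is determined by the answers to the $\binom{|S|}{2}$ same-cluster (pair) queries on all pairs inside $S$: two points of $S$ lie in the same cluster iff their pair query returns ``same'', so these answers reveal the partition of $S$ induced by the ground-truth clustering, and $\query(S)$ is exactly the number of blocks of that partition. Consequently, from a non-adaptive algorithm $\cA$ that recovers every $3$-clustering of $[n]$ (with probability $1-\delta$) using $q$ subset queries $S_1,\dots,S_q$ of size at most $s$, I would build a non-adaptive pair-query algorithm $\cA'$: it queries all $\binom{|S_i|}{2} \le \binom{s}{2} \le s^2/2$ pairs inside each $S_i$, reconstructs every value $\query(S_i)$ from the results, and then runs the decoder of $\cA$. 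This $\cA'$ is non-adaptive (its pair queries depend only on the fixed sets $S_i$ and the fixed internal randomness of $\cA$), has the same success probability, and makes at most $q s^2/2$ pair queries. Since a $k$-clustering algorithm with $k\ge 3$ in particular recovers $3$-clusterings (a $3$-clustering is a $k$-clustering when the model allows at most $k$ parts; otherwise one pads the hard instance with $k-3$ singleton clusters, changing $n$ by only an additive $O(k)$), \Cref{thm:3-2-NA-LB} yields $q s^2/2 = \Omega(n^2)$, hence $q = \Omega(n^2/s^2)$.

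For the $\Omega(n)$ term I would use the counting/information-theoretic argument already noted in the introduction: a fixed set of $q$ queries produces, on a given $k$-clustering, an answer vector lying in a set of size at most $(k+1)^q$, and two clusterings with the same answer vector are indistinguishable by a non-adaptive deterministic algorithm. Since $[n]$ admits $2^{\Omega(n\log k)}$ distinct $k$-clusterings when $k \le n$ (for instance at least $k^{n-k}$ by fixing one point per cluster and assigning the remaining points freely when $k \le n/2$, and $2^{\Theta(n\log n)}$ in general via the Bell number when $k$ is close to $n$), we get $q\log_2(k+1) = \Omega(n\log k)$, i.e.\ $q = \Omega(n)$. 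Randomization is handled by fixing the best random seed, which leaves a deterministic algorithm succeeding on a $1-o(1)$ fraction of instances and hence still needing $\Omega(n)$ queries.

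There is essentially no technical obstacle here: the argument is a short reduction plus a counting bound. The only points that need care are confirming that the subset-to-pair simulation preserves both non-adaptivity and the success probability, and invoking \Cref{thm:3-2-NA-LB} in a legitimate parameter regime — which is why the $n^2/s^2$ term is stated for $k\ge 3$, the boundary cases $k\in\{1,2\}$ being subsumed by (and matched by) the $\Omega(n)$ bound.
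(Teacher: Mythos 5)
Your proposal is correct and follows essentially the same route as the paper: the $\Omega(n^2/s^2)$ term comes from simulating each size-$\le s$ subset query by its $\binom{s}{2}$ internal pair queries and invoking \Cref{thm:3-2-NA-LB}, exactly as in \Cref{cor:3-s-LB}, and the $\Omega(n)$ term is the information-theoretic counting bound the paper cites. You merely spell out details the paper leaves implicit (preservation of non-adaptivity, padding to general $k$, fixing the random seed), all of which are fine.
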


\Cref{thm:1-informal,thm:nloglogn-informal} above show that this can be bypassed by allowing larger subset queries. However, some of these queries are of size $\Omega(n)$, and this raises the question, \emph{is there a near-linear non-adaptive algorithm which only queries subsets of size at most $O(\sqrt{n})$?} We answer this in the affirmative, implying that our lower bound is tight in terms of $s$.\footnote{Since the preliminary version of this work, \cite{DBLP:conf/colt/BlackMS25} obtained a non-adaptive $\widetilde{O}(n^2/s^2)$ query algorithm, subsuming \Cref{thm:bounded-2-informal}.}

\begin{theorem} [\Cref{thm:bounded-2}, informal] \label{thm:bounded-2-informal} For any $s \in [2,\sqrt{n}]$, there is a non-adaptive $k$-clustering algorithm making $\widetilde{O}(n^2k/s^2)$ subset queries of size at most $s$. \end{theorem}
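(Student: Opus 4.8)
The plan is to split the universe into many small blocks, solve an independent local problem in each, and then link the local solutions together, all in a single non-adaptive round. Fix $m := \lceil 2n/s\rceil$ and partition $U$ arbitrarily into blocks $B_1,\dots,B_m$ of size at most $s/2$. \emph{Within-block phase:} for each block $B_i$, run the non-adaptive clustering algorithm of \Cref{thm:1-informal} on the universe $B_i$; since $|B_i|\le s/2$ and $B_i$ meets at most $\kappa := \min(k,s/2)$ clusters, this uses $\widetilde O(s)$ queries, all of size at most $s/2\le s$, and recovers the partition of $B_i$ into its parts $B_i=P^i_1\sqcup P^i_2\sqcup\cdots$. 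Over all blocks this is $\widetilde O(n)$ queries, which is $\widetilde O(n^2k/s^2)$ since $s\le\sqrt n$ gives $s^2\le n\le nk$. \emph{Linking phase:} for every pair $i<j$ issue a family $\calF_{ij}$ of $\widetilde O(k)$ subsets of $B_i\cup B_j$, each of size at most $s$, with the property that once the partitions of $B_i$ and $B_j$ are known the answers determine the partial matching $\sigma_{ij}$ on parts, i.e.\ which part of $B_i$ and which part of $B_j$ lie in a common cluster. This totals $\binom m2\cdot\widetilde O(k)=\widetilde O(n^2k/s^2)$ queries. Decoding proceeds by first recovering every block's partition, then recovering every $\sigma_{ij}$; finally, form the graph on the disjoint union of all parts whose edges are the pairs linked by some $\sigma_{ij}$, and output its connected components. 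Each component is exactly one cluster (a cluster $C$ meeting blocks $B_{i_1},B_{i_2},\dots$ contributes one part per block, and the matchings $\sigma_{i_a i_b}$ link all of them pairwise), and since every point lies in exactly one part this recovers the clustering on all of $U$; equivalently, every same-cluster pair $\{u,v\}$ with $u\in B_i,v\in B_j$ is witnessed by the within-block phase if $i=j$ and by $\sigma_{ij}$ otherwise.

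It remains to design and decode $\calF_{ij}$. The starting point is the identity $\query(S_i\cup S_j)=\query(S_i)+\query(S_j)-\ell(S_i,S_j)$ for $S_i\subseteq B_i,\ S_j\subseteq B_j$, where $\ell(S_i,S_j)$ counts the clusters meeting both $S_i$ and $S_j$. At decoding time $\query(S_i)$ and $\query(S_j)$ are known, so each query reveals $\ell(S_i,S_j)=\bigl|\{(a,b)\in\sigma_{ij}: P^i_a\cap S_i\ne\emptyset,\ P^j_b\cap S_j\ne\emptyset\}\bigr|$, which is precisely a \emph{bipartite additive query} on the graph $\sigma_{ij}$, probing the vertex sets ``parts of $B_i$ hit by $S_i$'' and ``parts of $B_j$ hit by $S_j$'' (both computable from $S_i,S_j$ and the revealed partitions). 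Since $\sigma_{ij}$ is a partial matching with at most $\kappa\le k$ edges on at most $2\kappa$ vertices, it is extremely sparse, and $\widetilde O(\kappa)=\widetilde O(k)$ suitably random such queries suffice to reconstruct it non-adaptively (by standard arguments for sparse recovery / graph reconstruction with additive queries, cf.\ \cite{mazzawi2010optimally,choi2008optimal}). Concretely, we let $\calF_{ij}$ consist, for each pair of density scales $(t_1,t_2)\in\{0,1,\dots,\lceil\log(s/2)\rceil\}^2$, of a batch of $\widetilde O(\kappa)$ sets $S_i\cup S_j$ in which each point of $B_i$ is kept independently with probability $2^{-t_1}$ and each point of $B_j$ with probability $2^{-t_2}$; this is $\widetilde O(\kappa\log^2 s)=\widetilde O(k)$ queries per pair, each of size at most $s$.

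The main difficulty is the decoding of $\calF_{ij}$: because a random subset at scale $t_1$ hits a part of $B_i$ with a probability that depends strongly on that part's (unknown, widely varying) size, no single scale gives near-uniform probes of all parts simultaneously. This is handled by a peeling argument over part sizes. Group the parts of each block into $O(\log s)$ size classes by powers of two, and process class-pairs $(\tau_1,\tau_2)$ in lexicographic order. When processing $(\tau_1,\tau_2)$, the scale-$(\tau_1,\tau_2)$ queries probe the size-$\tau_1$ class of $B_i$ and the size-$\tau_2$ class of $B_j$ with near-fair coins, while strictly larger parts are hit almost surely and strictly smaller ones almost surely missed; subtracting off the contributions of the $\sigma_{ij}$-edges already recovered for earlier class-pairs, each such query becomes a uniform random bipartite additive query on the sub-matching of $\sigma_{ij}$ between these two classes, which is reconstructed by the results cited above. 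A union bound over all parts, using concentration to control the $o(1)$ ``almost surely'' errors, makes this work. To finish, boost each of the $m$ within-block and $\binom m2$ linking subroutines to failure probability $\delta/\poly(n)$, an $O(\log(n/\delta))$ overhead absorbed into $\widetilde O(\cdot)$, and union bound. (Here we take $k$ to be known, which is used only to set each $|\calF_{ij}|$ to $\widetilde O(\kappa)$ rather than the trivial $\widetilde O(s)$.)
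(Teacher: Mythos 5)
Your overall architecture is different from the paper's (the paper never decomposes $U$ into blocks; it iterates over size scales, querying random sets of size up to $s$, subtracting the already-recovered large clusters from each query, and invoking the connectivity threshold of random graphs to stitch small clusters together). Your block decomposition, within-block use of the unrestricted-size algorithm, and the stitching via connected components of the part-level graph are all sound, and the query budget $\binom{m}{2}\cdot\widetilde{O}(k)=\widetilde{O}(n^2k/s^2)$ is the right count \emph{if} the linking subroutine works. The gap is precisely there: you have not established that the partial matching $\sigma_{ij}$ can be reconstructed non-adaptively from $\widetilde{O}(k)$ queries of the form $\query(S_i\cup S_j)$. The results you cite for graph reconstruction with additive queries assume the algorithm can probe arbitrary subsets of the \emph{vertices} of the hidden graph; here the vertices are the (unknown) parts, and all you can choose non-adaptively are subsets of \emph{points}, which induce size-biased, correlated random subsets of parts. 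Bridging this is exactly the content of your peeling argument, and as stated it does not go through.

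Concretely, the claim that at scale $(t_1,t_2)$ ``strictly larger parts are hit almost surely and strictly smaller ones almost surely missed'' is false for neighboring size classes: a part one class above the target scale is hit with probability about $1-e^{-2}$ and one class below with probability about $1-e^{-1/2}$, both constants bounded away from $0$ and $1$, so these errors are not $o(1)$ terms controllable by concentration plus a union bound. Worse, consider an edge of $\sigma_{ij}$ joining a part of $B_i$ in a class just \emph{below} $\tau_1$ to a part of $B_j$ in a class far \emph{above} $\tau_2$ (or symmetrically). Under any lexicographic (indeed any linear) processing order, one of the two ``mixed'' quadrants of class pairs is necessarily unprocessed when you probe $(\tau_1,\tau_2)$, so such edges are not yet known and cannot be subtracted exactly, yet at scale $(t_1,t_2)$ they contribute to the count with constant probability (one endpoint hit almost surely, the other with probability $\approx 1/2$). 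The residual is therefore not ``a uniform random bipartite additive query on the sub-matching between these two classes'' but a count contaminated by unknown edges, and the sparse-recovery step you invoke requires exact answers. Fixing this would need a genuinely new ingredient --- e.g.\ wider size classes with $\mathrm{poly}(\log)$ separation plus a recovery routine robust to the resulting non-uniform hit probabilities, or a two-dimensional peeling with an error-tolerant decoder --- none of which is sketched. As it stands, the linking phase, which carries the whole $\widetilde{O}(n^2k/s^2)$ bound, is unproven; the paper sidesteps this entire difficulty by only ever extracting, from each random query $T$, the pairs with $|T\setminus\mathcal{R}_p|=2$ and using random-graph connectivity, at the cost of no block structure at all.
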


In particular, for constant $k$, the above algorithm matches the lower bound of \Cref{thm:LB-NA} up to $\log n$ factors. We leave the question of improving the dependence on $k$ as an interesting open problem. Using completely different techniques we obtain the following alternative algorithm with improved dependence on $k$, but worse dependence on $s$. 

\begin{theorem} [\Cref{thm:bounded-1}, informal] \label{thm:bounded-1-informal} For any $s \in [2,n]$, there is a randomized, non-adaptive $k$-clustering algorithm making $\widetilde{O}(n^2/s)$ subset queries of size at most $s$. \end{theorem}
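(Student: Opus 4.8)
The plan is to split $U$ into $\Theta(n/s)$ pieces, probe every point against every piece, and reconstruct the whole partition from the resulting combinatorial fingerprints — the only real subtlety being to keep everything in a single non‑adaptive round even though clusters can be large.

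Fix $L=\Theta(\log(n/\delta))$ independent uniformly random partitions $\mathcal P_1,\dots,\mathcal P_L$ of $U$ into $\Theta(n/s)$ blocks, each of size at most $s-1$ (taken roughly equal in size rather than literally $s$, which matters only when $s$ is close to $n$). For every point $u$, every $\ell\in[L]$ and every block $B\in\mathcal P_\ell$ with $u\notin B$, query $\{u\}\cup B$; also query $\query(B)$ for each block (only $\widetilde O(n/s)$ of these). Using the elementary identity that $\query(\{u\}\cup S)=\query(S)+1$ exactly when $S\cap C_u=\emptyset$ — where $C_u$ denotes $u$'s hidden cluster — each such comparison reveals whether $C_u$ meets $B$. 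Collecting this over all $\ell$ and $B$ gives the \emph{footprint} of $u$: the list of blocks, over all $L$ partitions, that $C_u$ intersects. The footprint depends only on $C_u$, so same‑cluster points have identical footprints. This step uses $L\cdot n\cdot\Theta(n/s)=\widetilde O(n^2/s)$ queries of size at most $s$, and it is the dominant cost.

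Call $C_u$ \emph{large} if it meets every block of $\mathcal P_1$ and \emph{small} otherwise; which case a point falls into is read off from its footprint. A large cluster contains a point in each block of $\mathcal P_1$, hence has at least $n/s$ points, so there are at most $s$ large clusters. For small clusters the footprint already suffices: for any two distinct small clusters $C\neq C'$ and a single fresh random partition, the probability that the footprints of $C$ and $C'$ coincide in that partition is bounded away from $1$ (worst case — two singletons — it is $(s-1)/(n-1)$, which is $\le 1/2$ since blocks have size $\le s-1$ and there are $\ge 2$ of them), so with $L=\Theta(\log(n/\delta))$ partitions and a union bound over the $\le n^2$ pairs, w.h.p.\ distinct small clusters receive distinct footprints; and a small cluster can never share a footprint with a large one. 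Hence grouping the "small" points by footprint exactly recovers every small cluster. The delicate part is the at‑most‑$s$ large clusters: the naive route — binary‑search $u$ against one representative per large cluster — needs to know the representatives, which would cost a second round. Instead, for each scale $\sigma\in\{1,2,4,\dots,s\}$ and each $m\in[\Theta(\log(n/\delta))]$ we fix upfront a uniformly random set $W_{\sigma,m}\subseteq U$ of size $\sigma$, order its elements, and for every $u$ and every bit position $b\le\log\sigma$ query $\{u\}\cup(W_{\sigma,m})_b$, where $(W_{\sigma,m})_b$ is the set of elements of $W_{\sigma,m}$ whose rank has bit $b$ set; we also precompute the $\widetilde O(1)$ values $\query((W_{\sigma,m})_b)$. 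By the same identity, for each $(\sigma,m)$ we learn, bit by bit, the bitwise‑OR of the ranks of $C_u\cap W_{\sigma,m}$; the full tuple over all $\sigma,m$ is the \emph{signature} of $u$, again a function of $C_u$ alone. A large cluster has size $t\in[n/s,n]$, so at the scale $\sigma\approx n/t\in[1,s]$ its intersection with a random $W_{\sigma,m}$ has $\Theta(1)$ elements, and across the $\Theta(\log(n/\delta))$ sets at that scale the OR‑patterns are "generic" enough that distinct large clusters get distinct signatures w.h.p.\ (union bound over $\le s^2$ pairs). This phase costs $n\cdot\Theta(\log(n/\delta)\log s)\cdot\log s=\widetilde O(n)$ queries of size $\le s$. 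Post‑processing is query‑free: group the "large" points by signature to recover every large cluster, union with the small‑cluster groups, and output the partition; correctness holds with probability $1-\delta$ by the union bounds above, and the total query complexity is $\widetilde O(n^2/s)+\widetilde O(n)=\widetilde O(n^2/s)$.

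The main obstacle is exactly the large‑cluster handling: with subsets of size at most $s$ one cannot reliably "hit" a cluster of size $<n/s$ by random sampling (which is why small clusters must go through the partition/footprint mechanism — also consistent with the $\Omega(n^2/s^2)$ lower bound, since the $n^2/s$ term cannot be beaten this way), while large clusters cannot be separated by footprints and must go through the multi‑scale signatures; making the two mechanisms cover exactly complementary cases, and pinning down the "generic signature" and "footprint separation" probability estimates, is where I expect the real work to lie. Finally, the degenerate case $s\ge n$ is trivial (a single block equal to $U$, so \Cref{thm:1-informal} applies directly with queries of size $\le n=s$), and the "roughly equal blocks" choice keeps all the estimates above valid down to $s$ within a constant factor of $n$.
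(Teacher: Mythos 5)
Your overall architecture differs from the paper's (the paper samples random points at geometric size scales and recovers each sampled point's cluster directly via a group-testing support-recovery subroutine whose large OR queries are chopped into size-$s$ pieces; you instead build an ``every point versus every block'' footprint matrix plus a multi-scale bitwise signature), and your query accounting for the dominant footprint phase is indeed $\widetilde{O}(n^2/s)$. However, there is a genuine gap in the small-cluster separation step. Your claim that for two distinct small clusters the per-partition footprint-coincidence probability is bounded away from $1$, ``worst case two singletons, $(s-1)/(n-1)$,'' is false, because ``small'' is not an intrinsic property of the cluster but a random event determined by $\mathcal{P}_1$, and it is correlated with exactly the footprints you use for separation. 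Consider clusters of size $T \approx \frac{n}{s}\ln\bigl(\frac{n}{s}\cdot\gamma^{-1}\bigr)$ sitting just above the coupon-collector threshold, so that each one misses some block of a random partition with probability $\gamma$ (take $\gamma \approx 1/\log n$). Such a cluster is classified small with probability $\approx \gamma$, and two such clusters have coinciding footprints on a fresh partition with probability at least $(1-\gamma)^2 \approx 1-2\gamma$ (both hit every block). For a pair of them, the failure event ``both classified small and footprints agree on all $L$ partitions'' has probability roughly $\frac{\gamma^2}{m}\,e^{-2\gamma L}$ with $m = \Theta(n/s)$ blocks, which for $L = \Theta(\log (n/\delta))$ and $\gamma \approx 1/L$ is $\Theta\bigl(\frac{1}{L^2 m}\bigr) = \Theta\bigl(\frac{s}{n\log^2 n}\bigr)$ per pair. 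Since there can be $\Theta\bigl(\frac{s}{\log(n/s)}\bigr)$ clusters of this size, hence $\Theta\bigl(\frac{s^2}{\log^2(n/s)}\bigr)$ pairs, the total (and indeed the expected number of wrongly merged pairs) is $\Omega\bigl(\frac{s^3}{n\cdot\mathrm{polylog}(n)}\bigr)$, which exceeds any constant $\delta$, and even exceeds $1$, once $s \gg n^{1/3}$. So for large $s$ (e.g.\ $s = n^{0.9}$ with many clusters of size $\Theta(\frac{n}{s}\log\frac{n}{s})$) the union bound does not close, and heuristically the algorithm itself merges distinct clusters with high probability.

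Patching this by increasing $L$ does not work within your budget: making $\frac{1}{L^2 m}$ smaller than $\delta/k^2$ for $k$ as large as $\Theta(s)$ forces $L = \Omega\bigl(s\sqrt{s/(\delta n)}\bigr)$, which destroys the $\widetilde{O}(n^2/s)$ bound. The fix has to change the mechanism for borderline clusters, e.g.\ routing every cluster of size $\Omega(n/s)$ (not just those that happen to hit all blocks of $\mathcal{P}_1$) through the signature mechanism and proving footprint separation only for pairs in which at least one cluster has size $O(n/s)$ --- your signature scales only reach $\sigma \le s$, so this needs care at the boundary. The paper avoids this issue entirely: its subroutine recovers the cluster of a sampled point directly from OR queries (with failure probability $\delta/2k$ per invocation), so correctness never rests on distinguishing two clusters from each other, and no cross-cluster collision analysis is needed. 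Your signature phase for large clusters, by contrast, looks sound in outline, though the ``generic enough'' separation claim would still need to be written out.
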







\paragraph{The balanced case.} Next, we consider the natural special case of recovering a $k$-clustering when the cluster sizes are close to being balanced, within some multiplicative factor $B \geq 1$. We call such a clustering $B$-balanced. Balanced clustering is a well-studied assumption, notably in the area of image segmentation~\cite{shi2000normalized}.

For brevity in the introduction we state our theorems for this setting with constant $B$, and the exact dependence on this parameter can be seen in \Cref{sec:balanced}.  

\begin{theorem} [\Cref{thm:k-bal-1,thm:k-bal-2}, informal] \label{thm:bal-informal} There are non-adaptive algorithms for recovering an $O(1)$-balanced $k$-clustering making (a) $O(n \log k)$ subset queries if $\smash{k \leq O(\frac{n}{\log^3 n})}$, and (b) $O(n \log^2 k)$ subset queries for any $k \leq n$. \end{theorem}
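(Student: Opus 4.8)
The plan is to exploit balancedness in two ways. For part (a), the key observation is that if every cluster has size between $\frac{n}{Bk}$ and $\frac{Bn}{k}$, then a uniformly random subset $S$ obtained by including each point of $U$ independently with probability $p \approx \frac{k}{n}$ (suitably scaled) will, with good probability, hit a given cluster $C_i$ exactly once while its intersection with $C_i$ is a Bernoulli-like event we can control. More precisely, I would take $\Theta(\log k + \log n)$ fresh random subsets at each of $\log k$ ``scales'' $p_j \asymp 2^j / n$, and use the count oracle on these subsets together with subsets restricted to already-identified clusters. The idea: once we have tentatively identified some clusters, querying $\query(S)$ and $\query(S \setminus (\text{known clusters}))$ lets us ``peel off'' contributions and detect whether a random $S$ isolates a single new point from a new cluster. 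Because the clusters are balanced, a single scale $p \asymp \frac{k}{n}$ already has constant probability of isolating any fixed point within its cluster, so $O(\log n)$ repetitions suffice for a union bound over all $n$ points. Assembling the identity of each point's cluster from the pattern of hits across the $O(\log n)$ subsets is where the $O(n \log k)$ bound should come from: each point needs $\log k$ bits to name its cluster, and the restriction-query trick lets us read those bits off non-adaptively. The constraint $k \le O(n/\log^3 n)$ enters to guarantee the random subsets are large enough (size $\Omega(\log^3 n)$) that concentration holds simultaneously across all scales and all repetitions.

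For part (b), when $k$ can be as large as $n$, the above breaks because a random subset at scale $\frac{k}{n}$ is too small to concentrate. Instead I would use a recursive/divide-and-conquer scheme: randomly partition $U$ into $\Theta(\log k)$ blocks, recurse on each block to recover the clustering induced on it, and then stitch the partial clusterings together. The stitching step is the delicate part — we must decide, for each pair of clusters found in different blocks, whether they are the same groundtruth cluster. Balancedness helps because each groundtruth cluster is spread across the blocks roughly evenly (by a Chernoff bound), so we can use a modest number of ``bridging'' queries that mix points from two blocks; a cluster's ``signature'' across a fixed set of $O(\log k)$ random test sets identifies it uniquely with high probability. The recursion has depth $O(\log k)$ and each level contributes $O(n)$ queries (plus lower-order terms for stitching), giving $O(n \log^2 k)$ in total; the extra $\log k$ factor over part (a) is the price of recursion replacing the single-shot isolation argument.

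\textbf{Main obstacle.} The hardest part in both cases is the non-adaptive \emph{stitching / decoding}: we must commit to all queries in advance, yet the natural peeling arguments (``subtract off the clusters found so far, then look at the remainder'') are inherently adaptive. The resolution I anticipate is to replace peeling with a fixed random ``fingerprint'' construction — a collection of $O(\log k)$ (or $O(\log k + \log n)$) random subsets whose intersection pattern with any single cluster is, with high probability, a codeword that uniquely determines the cluster — and then argue that $\query$ evaluated on unions of these fingerprints with the isolating sets reveals enough to decode every point's label. Making the fingerprint sets simultaneously work for \emph{all} $k$ clusters and \emph{all} $n$ points, under only the balancedness promise and without adaptivity, while keeping the total size linear, is the crux; I expect this to require a careful two-level union bound and possibly a pairwise-independence or expander-based derandomization of the fingerprint family to control the subset sizes in the $k \le O(n/\log^3 n)$ regime.
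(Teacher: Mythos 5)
There is a genuine gap in both parts. For part (a), your ``fingerprint'' instinct is close to what actually works, but you never carry it out, and the shape of your scheme would not meet the budget. Under balancedness a \emph{single} scale suffices: the paper samples $q = \Theta(B^2\log(k/\delta))$ sets $T_1,\dots,T_q$, each of $k/B$ uniform points, and for every $x$ queries $T_i$ and $T_i\cup\{x\}$; the bit $\mathbf{1}(\query(T_i\cup\{x\})=\query(T_i))$ tells whether $T_i$ hits $x$'s cluster, and the key (and missing from your write-up) calculation is that for any ordered pair of clusters $(j,j')$, a single $T_i$ hits $C_j$ but misses $C_{j'}$ with probability $\Omega(1/B^2)$, so $O(\log k)$ sets separate all pairs. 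Your multi-scale plan with $\Theta(\log k)$ scales and $\Theta(\log k+\log n)$ sets per scale, each tested against every point, already costs $\omega(n\log k)$ queries, and the isolation probability you invoke (``constant probability of isolating any fixed point within its cluster'') is actually $\Theta(k/n)$ for a fixed point, not constant. More importantly, your explanation of the condition $k\le O(n/\log^3 n)$ is wrong: it has nothing to do with subsets needing size $\Omega(\log^3 n)$ for concentration. In the paper it arises because the decoding step needs to know \emph{which} clusters each $T_i$ hits, which is obtained by running the general unrestricted-size algorithm on the $O(k\log k)$ sampled points at a cost of $\widetilde{O}(k)$ queries; the constraint is exactly what makes this additive term $O(n\log k)$. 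Your proposed fix (peel off ``already-identified clusters'') is adaptive, as you note, and your fingerprint resolution never explains how the labels of the fingerprint sets themselves are learned non-adaptively, which is precisely the crux the paper resolves.

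For part (b), the recursive divide-and-conquer with stitching is not a valid non-adaptive algorithm as described: which queries to make inside each block, and which bridging queries to make between pairs of clusters ``found in different blocks,'' depend on the answers from earlier levels, so the recursion cannot be flattened into one round without a concrete replacement, and the claim that each level costs $O(n)$ queries is unsupported (no base case, and the stated depth $O(\log k)$ does not match partitioning into $\Theta(\log k)$ blocks). The paper's route is much more direct: sample $O(B^2\log(k/\delta))$ random sets $R_i$ of size $k/B$ so that each cluster is hit \emph{exactly once} by some $R_i$ with high probability, then for every $(x,R_i)$ run the deterministic group-testing primitive $\mathsf{FINDREP}$ ($O(\log k)$ non-adaptive queries) to identify the unique representative of $x$'s cluster in $R_i$ when it exists, and read off the clustering from the resulting bipartite graph; this gives $O(n\log^2 k)$ with no recursion. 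To repair your proposal you would need, at minimum, the exact-hit/separation probability lemmas for random $\Theta(k)$-size sets and a concrete non-adaptive decoding mechanism (either the restricted-clustering subroutine for (a) or a $\mathsf{FINDREP}$-style primitive for (b)).
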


\paragraph{Allowing two rounds of adaptivity.} Lastly, we show that if we allow an extra round of adaptivity, then we obtain the following algorithms with improvements to the logarithmic factors. 

\begin{theorem} [\Cref{thm:2-round,thm:2-round-bal}, informal] \label{thm:2round-informal} There is a  $2$-round deterministic $k$-clustering algorithm making $O(n \log k)$ subset queries. There is a randomized $2$-round algorithm for recovering an $O(1)$-balanced $k$-clustering making $O(n \log \log k)$ subset queries. \end{theorem}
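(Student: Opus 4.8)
The plan is to use the first round to pin down, for each cluster, a single canonical representative, and the second round to decode every remaining point's cluster in binary against those representatives. Fix any ordering $u_1,\dots,u_n$ of $U$ and, in round $1$, query all $n$ prefixes $S_i=\{u_1,\dots,u_i\}$. Since $\query(S_i)-\query(S_{i-1})\in\{0,1\}$ and equals $1$ exactly when $u_i$ is the first point of its cluster in the ordering, the set $R=\{u_i:\query(S_i)>\query(S_{i-1})\}$ contains exactly one point per cluster and $|R|=k=\query(S_n)$ is revealed; this costs $n$ queries. In round $2$, write $R=\{r_1,\dots,r_k\}$, put $b=\lceil\log_2 k\rceil$, and for each bit position $j\in[b]$ let $R_j\subseteq R$ be the representatives whose index has a $1$ in position $j$. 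For every $x\in U\setminus R$ and every $j$, query $\{x\}\cup R_j$. Because distinct representatives lie in distinct clusters, $\query(R_j)=|R_j|$ is known without querying, so $\query(\{x\}\cup R_j)$ is $|R_j|$ or $|R_j|+1$ according to whether the representative of $x$'s cluster lies in $R_j$; reading off the $b$ answers recovers the index of that representative, hence $x$'s cluster. This is $O(n\log k)$ queries and is entirely deterministic, with no real obstacle beyond the degenerate cases ($k=1$, or $R_j=\emptyset$).

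\textbf{Part 2: the balanced $2$-round algorithm.} Here I would keep the same two-phase shape --- representatives in round $1$, assignment in round $2$ --- but the assignment phase must cost only $O(\log\log k)$ \emph{amortized} queries per point, so the per-point binary decoding above is too expensive and must be replaced by a \emph{batched} scheme exploiting balance. For round $1$ we do not need prefixes over all of $U$: since every cluster has $\Omega(n/(Bk))$ points, a uniformly random sample $T$ of $\Theta(Bk\log k)$ points hits every cluster with high probability, so running the prefix construction of Part~1 on $T$ produces a representative set $R$ with $|R|=k$ using only $\widetilde O(Bk)$ queries. (When $k$ is within a polylogarithmic factor of $n$ the clusters have near-constant size and $\log\log k=\Theta(\log\log n)$, so one can just invoke the general algorithm; I would dispatch that regime separately.)

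\textbf{Part 3: the batched assignment, and the main obstacle.} In round $2$, partition $U$ uniformly at random into batches of size $t=2^{\Theta(\log k/\log\log k)}$, so that $t=k^{o(1)}\ll\sqrt k$ and, by balance, each batch meets $t$ \emph{distinct} clusters apart from a tiny number of collisions that the analysis must bound but cannot eliminate (a union bound over the $\Theta(n/t)$ batches cannot force collision-freeness once clusters are small, since $k\le n$). Within a batch $\{x_1,\dots,x_t\}$ I would run a non-adaptive coin-weighing / quantitative-group-testing design: for chosen $A\subseteq[t]$ and $D\subseteq R$ query $\{x_i:i\in A\}\cup\{r\in R:r\in D\}$, whose value --- using the known baseline $|D|$ and the (near-)absence of intra-batch collisions --- reveals $\sum_{i\in A}\mbone[\mathrm{cluster}(x_i)\in D]$; letting $D$ range over the $\lceil\log_2 k\rceil$ ``bit sets'' of $R$ and $A$ over a coin-weighing ensemble on $[t]$ recovers, for each bit position, which batch points have that bit set in their representative index, hence the entire assignment of the batch, in $O\!\big(\tfrac{t\log k}{\log t}\big)=O(t\log\log k)$ queries; summing over the $n/t$ batches gives $O(n\log\log k)$. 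The real work --- and where I expect the difficulty to concentrate --- is twofold: (i) making the per-batch recovery ensemble robust to the $O(1)$-scale within-batch cluster collisions, i.e.\ treating them as adversarial measurement errors and using an error-tolerant coin-weighing design (and flagging/repairing inconsistent batches within the same round), and (ii) the probabilistic argument that one random choice of batching and ensemble simultaneously succeeds for all $\Theta(n/t)$ batches and all $O(\log k)$ bit positions while keeping the total at $O(n\log\log k)$ --- the choice of $t$ being exactly the knob that trades the amortized $\log k/\log t$ recovery cost against the collision and failure probabilities.
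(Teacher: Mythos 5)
Your Part 1 is essentially the paper's own algorithm (Algorithm \ref{alg:2round}): the same round-1 prefix trick to extract one representative per cluster, and your round-2 bit-decoding of each point against the bit-buckets $R_j\subseteq R$ is just a streamlined version of the paper's $\mathsf{FINDREP}$ primitive (Corollary \ref{cor:1-bit-1}), made slightly cheaper by exploiting that $R$ is an independent set so $\query(R_j)=|R_j|$ is known for free. This part is correct and needs no further comment.

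For the balanced $O(n\log\log k)$ result your route is genuinely different from the paper's, but it has real gaps as written. The paper (Theorem \ref{thm:2-round-bal}) keeps all $n$ prefix queries to get $R$, samples roughly $\sqrt{B/k}\cdot n\log(\log k/\delta)$ random sets of size $\sqrt{k/(10B)}$, checks each for independence with a single round-1 query, and in round 2 reconstructs the matching between each independent set and $R$ via Grebinski--Kucherov edge-count reconstruction (one edge-count query is simulated by three subset queries precisely because both sides are independent sets), at $O(\sqrt{k})$ queries per set, i.e.\ $O(1)$ amortized per covered point; its $\log\log k$ comes from the oversampling needed to leave only $n/\log k$ points uncovered, which are then finished with $\mathsf{FINDREP}$. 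You instead partition $U$ into batches of size $t=2^{\Theta(\log k/\log\log k)}$ and pay $\Theta(\log k/\log t)=\Theta(\log\log k)$ per point in coin-weighing decoding; the per-batch measurement identity and the $O((t/\log t)\log k)$ count per batch are correct, so the skeleton is plausible. The first gap is the collision handling, which you acknowledge but do not resolve: ``repairing inconsistent batches within the same round'' is not available, because round 2 is the final round --- repair queries cannot be issued after the round-2 answers are seen --- and an error-tolerant ensemble must cope with per-measurement corruptions of magnitude $\pm 1$, the same magnitude as the signal differences it must resolve, so robustness is not a routine add-on. The natural fix is to detect bad batches already in round 1 with one query per batch (checking whether $\query$ of the batch equals its size, exactly the paper's independence check) and route the few affected points to per-point $\mathsf{FINDREP}$ in round 2; this step is missing from your plan. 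The second gap is your round 1: building $R$ from a sample of size $\Theta(Bk\log k)$ costs $\Theta(Bk\log k)$ queries, which exceeds $O(n\log\log k)$ once $k$ is close to $n$, and your proposed fallback to ``the general algorithm'' does not rescue that regime, since the paper's general algorithms cost $\Omega(n\log^2 k)$ there. The simple repair is the one the paper uses: query all $n$ prefixes of $U$, which costs only $n$ queries and needs no case split.
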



\begin{table*}[t]
\label{table:results}
\centering
\begin{tabular}{|c|c|c|c|}
\hline
 & \textbf{1-round} & \textbf{2-round} & \textbf{1-round}, $s \leq \sqrt{n}$  \\
\hline
\textbf{General} &
$O(n \log k \cdot (\log k + \log\log n)^2)$ (\ref{thm:1-informal}) & $O(n \log k)$ (\ref{thm:2round-informal}) &
$\widetilde{O}(n^2/s)$ (\ref{thm:bounded-2-informal}) \\[5pt]
&
$O(n \log \log n \cdot k\log k)$ (\ref{thm:nloglogn-informal}) & &
$\widetilde{O}(n^2k/s^2)$ (\ref{thm:bounded-1-informal})
 \\[5pt]
&
$\Omega(n)$ \emph{(info-theoretic)} & $\Omega(n)$ \emph{(info-theoretic)} &
$\Omega(n^2/s^2)$ (\ref{thm:LB-NA}) \\[5pt]
\hline
\textbf{Balanced} &
$O(n \log k + k\log^4 k)$ (\ref{thm:bal-informal}) &
$O(n\log \log k)$ (\ref{thm:2round-informal}) &
 \\[5pt]
&
$\Omega(n)$ \emph{(info-theoretic)} &
$\Omega(n)$ \emph{(info-theoretic)} &
 \\[5pt]
\hline
\end{tabular}
\caption{\small{Table of our results. The last column lists our results regarding algorithms that are only able to query subsets of size at most $s$. We remark that after the preliminary version of this work, the non-adaptive query complexity for the bounded query setting was resolved (up to logarithmic factors) by \cite{DBLP:conf/colt/BlackMS25}, who found an algorithm using $\widetilde{O}(n^2/s^2)$ queries via different ideas. More generally, \cite{DBLP:conf/colt/BlackMS25} also gave smooth trade-offs with round complexity in the bounded query setting. We also remind the reader that there is an adaptive $O(n)$ query algorithm (using size $O(k)$ queries) due to \cite{CL24}, matching the information-theoretic lower bound. Our results for learning "balanced" partitions are stated here for the case when cluster sizes are the same up to a constant factor. See the formal theorem statements for the dependence on the balance parameter. }}
\end{table*}

\paragraph{Organization.} The remainder of the paper is organized as follows. In \Cref{sec:tech} we describe our main ideas and techniques and give high level descriptions of how they lead to our main results. We discuss some open questions in \Cref{sec:discussion}. \Cref{sec:unbounded} contains formal proofs of our main results for unrestricted query size, \Cref{thm:1-informal,thm:nloglogn-informal}. In \Cref{sec:bounded} we give formal proofs of our results for bounded query size, \Cref{thm:bounded-1-informal,thm:bounded-2-informal} as well as our lower bound \Cref{thm:LB-NA}. Finally, we prove our results for the balanced setting in \Cref{sec:balanced} and our results for two-round algorithms in \Cref{sec:2round}.

\subsection{Ideas and Techniques} \label{sec:tech}

In this section we describe our main ideas and give high-level descriptions of how they lead to our main results. We start with a discussion of the approach taken by \cite{CL24} to obtain an optimal adaptive algorithm.

\paragraph{High-level description of \cite{CL24} algorithm.} Before diving into our techniques, we briefly describe how \cite{CL24} obtain their $O(n)$ query adaptive algorithm and the apparent roadblock in making their approach non-adaptive. Using their terminology, call $I \subseteq U$ an \emph{independent set} if every element of $I$ belongs to a distinct cluster. The high-level strategy is to start with $n$ singleton independent sets, which are iteratively "merged". When $I_1,I_2$ are merged, they are replaced by a new independent set, $I_3$, containing a single element from each cluster represented in $I_1 \cup I_2$ and the remaining elements are set aside (the algorithm has found a representative for their clusters). This merge can be viewed as the reconstruction of a bipartite graph with an edge from $u \in I_1$ to $v \in I_2$ if $u,v$ are in the same cluster. In fact, since $I_1,I_2$ are independent sets, this graph is a \emph{matching}. This allows the subset query (called "rank" queries in \cite{CL24}) to be used to simulate \emph{additive} (edge-count) queries in this graph: for $S \subseteq I_1$, observe that 
\[
\big(\# \text{ clusters intersected by } S \cup I_2\big) = |S| + |I_2| - \big(\# \text{ edges from } S \text{ to } I_2\big)\text{.}
\]
Using this observation, one can leverage known graph reconstruction algorithms using additive queries. 

Note that this simulation crucially relies on the fact $I_1,I_2$ are independent sets and so the graph being reconstructed is a \emph{matching}. In particular, this critically relies on adaptivity to discard repeated elements from the same cluster so that this query simulation is possible. In the non-adaptive setting that we consider, it is not clear how to salvage this idea. As a result, our approach differs from \cite{CL24} in that we do not attempt to simulate graph reconstruction algorithms directly, as doing so seems to require adaptivity. Nonetheless, we are still able to make use of group testing primitives, which also are foundational in graph reconstruction. 

\paragraph{Roadmap.} Our techniques broadly fall under two categories which we describe in the following subsections. In \Cref{sec:tech-1} we describe a connection that our problem has with non-adaptive \emph{group testing}. This technique alone (essentially) leads to \Cref{thm:bounded-1-informal}, and in particular an $O(n \cdot\mathrm{poly}(\log n))$ query non-adaptive algorithm. Improving to $O(n \cdot\mathrm{poly}(\log\log n))$ requires a second idea enabling the use of large random subsets to discover small clusters, which we describe in \Cref{sec:tech-2}. In \Cref{sec:tech-3}, we describe how to combine these ideas to prove our main result,  \Cref{thm:1-informal}. 


\paragraph*{Warm up.} 
When there are only $2$ clusters, there is a trivial non-adaptive algorithm making $O(n)$ \emph{pair queries}: Choose an arbitrary $x \in U$ and query $\{x,y\}$ for every $y \in U$. The set of points $y$ where $\query(\{x,y\}) = 1$ form one cluster, and the second cluster is the complement. If we allow one more round of adaptivity, then for $3$-clustering we could repeat this one more time and again get an $O(n)$ query algorithm. However, for \emph{non-adaptive} $3$-clustering it is impossible to do better than the trivial $O(n^2)$ algorithm (see \Cref{thm:3-2-NA-LB}). Essentially, this is because in order to distinguish the clusterings $(\{x\},\{y\},U \setminus \{x,y\})$ and $(\{x,y\}, \emptyset, U \setminus \{x,y\})$ the algorithm must query $\{x,y\}$ and there are ${n \choose 2}$ ways to hide this pair. Overcoming this barrier using subset queries requires significant new ideas. 

\subsubsection{A Connection with Combinatorial Group Testing} \label{sec:tech-1}

Given a universe $U = \{x_1,\ldots,x_n\}$ of $n$ points, recall that our goal is to reconstruct a hidden $k$-partition $C_1 \sqcup \cdots \sqcup C_k = U$ where we are allowed to query a set $S \subseteq U$ and receive back $\query(S) := |\{i \in [k] \colon S \cap C_i \neq \emptyset\}|$. For each cluster $i \in [k]$, consider the corresponding Boolean vector $\smash{v^{(i)} \in \{0,1\}^n}$ defined as $\smash{v^{(i)}_j = \boldsymbol{1}(x_j \in C_i)}$. Let us use $\smash{\supp(v^{(i)}) = \{j \colon v^{(i)}_j = 1\}}$ to denote the support of such a vector and note that $\smash{(\supp(v^{(i)}) \colon i \in [k])}$ forms a partition $[n]$. An equivalent formulation of our problem is thus to reconstruct $\smash{v^{(1)},\ldots,v^{(k)}}$. The classic problem of \emph{combinatorial group testing} considers the problem of reconstructing a Boolean vector $v \in \{0,1\}^n$ via $\mathsf{OR}$ queries: given $S \subseteq [n]$, the query $\mathsf{OR}_S(v)$ returns whether or not $v$ has a $1$-valued coordinate in $S$. Classic results in group testing show that efficient non-adaptive algorithms for this problem exist. The following observation connects subset queries for $k$-clustering and $\mathsf{OR}$ queries for group testing. We use $\boldsymbol{1}(\cdot)$ to denote the Boolean indicator for the event $(\cdot)$. 

\begin{restatable}{observation}{subsetOR} \label{obs:subset->OR}Consider an arbitrary point $x \in U$ and let $C_i$ be the cluster containing it. Then 
\[
\mathsf{OR}_S(v^{(i)}) = \boldsymbol{1}(\query(S) = \query(S \cup \{x\})) \text{.}
\]
\end{restatable}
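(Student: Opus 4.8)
The plan is a short case analysis on whether $S$ meets the cluster $C_i$ containing the fixed point $x$. Write $T(S) := \{j \in [k] : C_j \cap S \neq \emptyset\}$, so that $\query(S) = |T(S)|$ and, by definition of the support vectors, $\mathsf{OR}_S(v^{(i)}) = \boldsymbol{1}(i \in T(S))$. The key elementary fact I would isolate first is that, since $\{x\} \subseteq C_i$ and $x$ lies in no other cluster, adjoining $x$ to $S$ can only possibly add the index $i$ to the set of clusters hit: concretely, $T(S \cup \{x\}) = T(S) \cup \{i\}$. This identity does all the work.

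From there the two cases are immediate. If $x$'s cluster already meets $S$, i.e. $i \in T(S)$, then $\mathsf{OR}_S(v^{(i)}) = 1$; and $T(S \cup \{x\}) = T(S) \cup \{i\} = T(S)$, so $\query(S \cup \{x\}) = \query(S)$ and the right-hand indicator is also $1$. If instead $i \notin T(S)$, then $\mathsf{OR}_S(v^{(i)}) = 0$; and now $T(S \cup \{x\}) = T(S) \sqcup \{i\}$ is a strict superset, so $\query(S \cup \{x\}) = \query(S) + 1 \neq \query(S)$ and the right-hand indicator is $0$. In both cases the two sides agree, which proves the observation.

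Honestly there is no real obstacle here; the only thing to be careful about is the direction of the argument in the second case — one must note not just that $\query(S\cup\{x\}) \geq \query(S)$ but that it is \emph{strictly} larger, which is where we use that $x$ belongs to $C_i$ and to no other $C_j$, so the count goes up by exactly one rather than staying put. I would state the $T(S\cup\{x\}) = T(S)\cup\{i\}$ identity explicitly as a one-line justification precisely so that this strictness is transparent, and then the rest is a two-line case split.
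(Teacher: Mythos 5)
Your proof is correct and follows essentially the same reasoning as the paper, which simply notes that since $x \in C_i$, the set $S$ intersects $C_i$ if and only if $\query$ does not increase when $x$ is added; your identity $T(S \cup \{x\}) = T(S) \cup \{i\}$ is just a slightly more explicit rendering of this. Nothing is missing.
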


The reason for this observation is simply that since $x \in C_i$, the set $S$ intersects $C_i$ if and only if the subset query does not increase when $x$ is added to $S$. This simple observation allows us to leverage group testing procedures as black-box subroutines in our algorithms. In the following subsections, we describe how we obtain algorithms for our problem using two group testing lemmas as basic primitives. 

\paragraph{Obtaining a simple $\widetilde{O}(n)$ algorithm.} 

Our first group testing primitive comes from the following lemma which we prove in \Cref{sec:supp-recovery}.

\begin{restatable}{lemma}{tER} \label{lem:t-ER} Let $v \in \{0,1\}^n$. There is a non-adaptive algorithm that makes $O(t \log (n/\alpha))$ $\mathsf{OR}$ queries on subsets of size $\ceil{n/t}$, and if $|\supp(v)| \leq t$, returns $\supp(v)$ with probability $1-\alpha$, and otherwise certifies that $|\supp(v)| > t$ with probability $1$. 
\end{restatable}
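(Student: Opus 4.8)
The plan is to run several independent random partitions of the coordinates and then recover $\supp(v)$ as exactly the set of coordinates that are never \emph{excluded}, where a coordinate $j$ is called excluded if it lies in some queried block whose $\mathsf{OR}$-answer is $0$. After padding $[n]$ with fewer than $t$ all-zero dummy coordinates we may assume $t$ divides $n$, so that $[n]$ splits into $t$ blocks each of size exactly $n/t = \ceil{n/t}$. Set $R = \Theta(\log(n/\alpha))$ and draw $R$ independent uniformly random such partitions; the non-adaptive algorithm queries $\mathsf{OR}_B(v)$ for every block $B$ of every partition, a total of $Rt = O(t\log(n/\alpha))$ queries, each on a set of size $\ceil{n/t}$. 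From the answers it forms
\[
A \;=\; \bigl\{\, j \in [n] \;:\; j \text{ is not excluded, i.e., every block containing } j \text{ has } \mathsf{OR}\text{-answer } 1 \,\bigr\},
\]
and it outputs $A$ if $|A| \le t$, and the certificate ``$|\supp(v)| > t$'' otherwise.

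First I would record two facts that hold with probability $1$. (i) Every $j \in \supp(v)$ lies in $A$, since in each partition the block containing $j$ also contains the coordinate $j$ (which has $v_j = 1$) and hence has $\mathsf{OR}$-answer $1$; thus $\supp(v) \subseteq A$ always. (ii) In particular, if $|\supp(v)| > t$ then $|A| \ge |\supp(v)| > t$, so the algorithm outputs ``$|\supp(v)| > t$'' with probability $1$, which settles the ``otherwise'' clause. From now on assume $|\supp(v)| = d \le t$; by (i) it remains to show that with probability $\ge 1-\alpha$ every $j \notin \supp(v)$ is excluded.

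Fix such a $j$ and one random partition, and condition on which block $B$ contains $j$. Then the other $n/t - 1$ members of $B$ form a uniformly random $(n/t-1)$-subset of the remaining $n-1$ coordinates, so the probability that $B$ avoids $\supp(v)$ equals $\prod_{i=0}^{d-1} \frac{n - n/t - i}{n-1-i}$; since $i \le d-1 \le t-1$, each factor is at least $1-1/t$, hence this probability is at least $(1-1/t)^{d} \ge (1-1/t)^{t} \ge c_0$ with $c_0 = 1/4$ (valid for all $t \ge 2$, and the bound only improves as $t$ grows, approaching $1/e$). On that event $j$ is excluded by this partition. Therefore, over the $R$ independent partitions, the probability that $j$ is never excluded is at most $(1-c_0)^R \le \alpha/n$ once $R = \ceil{c_0^{-1}\ln(n/\alpha)}$, and a union bound over the at most $n$ coordinates outside $\supp(v)$ shows that, with probability $\ge 1-\alpha$, all of them are excluded. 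On that event $A = \supp(v)$, and since $|\supp(v)| \le t$ the algorithm outputs $A = \supp(v)$, as required.

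The only step requiring genuine care is the constant lower bound $(1-1/t)^{t} \ge c_0$ and how it is used: one should observe that $t = 1$ is degenerate (the single block is all of $[n]$ and can never isolate anything), so the statement is really about $t \ge 2$. Everything else — the deterministic containment $\supp(v) \subseteq A$, the union bound over the non-support coordinates, and the cosmetic padding that makes all blocks have size exactly $\ceil{n/t}$ — is routine, so I do not anticipate any further obstacle.
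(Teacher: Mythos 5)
Your proof is correct and follows essentially the same route as the paper's: random tests of size about $n/t$, declare a coordinate excluded when it lies in a query answering $0$, note that $\supp(v)$ is never excluded (which gives the probability-$1$ certification when $|\supp(v)|>t$), and union-bound over the non-support coordinates using a constant per-round exclusion probability. The only real difference is the randomization device: the paper draws $e\,t\ln(n/\alpha)$ independent sets, each formed by $n/t$ i.i.d. uniform samples, and shows each fixed non-support coordinate is excluded by a given query with probability at least $1/(et)$, whereas you draw $O(\log(n/\alpha))$ independent random equipartitions into $t$ blocks of size $\ceil{n/t}$ and get exclusion probability at least $1/4$ per partition via a hypergeometric product bound; both yield $O(t\log(n/\alpha))$ queries and your blocks have exactly the advertised size. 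On your $t=1$ caveat: the paper's i.i.d.-sampling variant does handle $t=1$ (a set of $n$ uniform samples still omits each fixed coordinate with constant probability, so isolation is possible), though the point is immaterial for the paper since every application of the lemma uses $t\ge 2$, and the $t=1$ case also follows deterministically from \Cref{lem:1-bit-cert-1}.
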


In particular, combining this lemma with \Cref{obs:subset->OR} immediately yields the following corollary.

\begin{corollary} \label{cor:t-ER-cor} Given a $k$-clustering on $U$ of size $n$ and an element $x \in U$, let $C(x)$ denote the cluster containing $x$. There is a non-adaptive algorithm that makes $O(t \log (n/\alpha))$ subset queries of size at most $\ceil{n/t} + 1$, and if $|C(x)| \leq t$, returns $C(x)$ with probability $1-\alpha$, and otherwise certifies that $|C(x)| > t$ with probability $1$.\end{corollary}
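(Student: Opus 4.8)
The plan is to derive this directly from \Cref{lem:t-ER}, using \Cref{obs:subset->OR} to simulate each $\mathsf{OR}$ query by a pair of subset queries. Fix the given element $x \in U$ and let $C_i = C(x)$ be the cluster containing it, with associated Boolean vector $v^{(i)} \in \{0,1\}^n$ whose support is exactly $\{j \colon x_j \in C(x)\}$. Then $|\supp(v^{(i)})| = |C(x)|$, and recovering $\supp(v^{(i)})$ is equivalent to recovering $C(x)$.

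First I would instantiate the non-adaptive group testing algorithm of \Cref{lem:t-ER} on $v^{(i)}$ with parameters $t$ and $\alpha$. That algorithm specifies in advance a collection of $O(t \log(n/\alpha))$ subsets $S \subseteq [n]$, each of size $\ceil{n/t}$, and for each it asks for $\mathsf{OR}_S(v^{(i)})$. I would replace each such $\mathsf{OR}$ query by the two subset queries $\query(S)$ and $\query(S \cup \{x\})$, identifying subsets of $[n]$ with subsets of $U$ in the natural way. By \Cref{obs:subset->OR} we have $\mathsf{OR}_S(v^{(i)}) = \boldsymbol{1}(\query(S) = \query(S \cup \{x\}))$, so the two subset-query answers determine the exact value of $\mathsf{OR}_S(v^{(i)})$ — with no error, since \Cref{obs:subset->OR} is an identity. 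Feeding these values into the group testing algorithm reproduces its behavior and output exactly.

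The query complexity is $2 \cdot O(t \log(n/\alpha)) = O(t \log(n/\alpha))$ subset queries, each of size $\ceil{n/t}$ or $\ceil{n/t}+1$, hence at most $\ceil{n/t}+1$. Non-adaptivity is preserved: the sets $S$ are fixed in advance by \Cref{lem:t-ER}, and since $x$ is supplied up front, the sets $S \cup \{x\}$ are determined in advance as well, so all queries can be issued in a single round. Correctness transfers verbatim from \Cref{lem:t-ER}: if $|C(x)| = |\supp(v^{(i)})| \le t$, then with probability $1-\alpha$ the algorithm returns $\supp(v^{(i)}) = C(x)$; and if $|C(x)| > t$, then since the simulated $\mathsf{OR}$ answers are always exactly correct, with probability $1$ it certifies $|\supp(v^{(i)})| > t$, i.e. $|C(x)| > t$.

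I do not anticipate a genuine obstacle, as the statement is an immediate translation of \Cref{lem:t-ER} through \Cref{obs:subset->OR}. The only points needing care are that the $\mathsf{OR}$-query simulation is exact — which is precisely what lets the ``certifies with probability $1$'' guarantee carry over unchanged rather than degrading to a high-probability statement — and that issuing both $\query(S)$ and $\query(S \cup \{x\})$ keeps the scheme non-adaptive and within the size bound $\ceil{n/t}+1$.
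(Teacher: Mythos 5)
Your proposal is correct and follows essentially the same route as the paper: instantiate \Cref{lem:t-ER} on the indicator vector of $C(x)$ and simulate each $\mathsf{OR}$ query by the two subset queries $\query(S)$ and $\query(S\cup\{x\})$ via \Cref{obs:subset->OR}, which preserves non-adaptivity, doubles the query count, and keeps query sizes at most $\ceil{n/t}+1$. Your added remark that the simulation is exact (so the probability-$1$ certification carries over) is a fine elaboration of what the paper leaves implicit.
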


\begin{proof} Let $U = \{x_1,\ldots,x_n\}$ denote a fixed, arbitrary ordering on $U$ and let $v \in \{0,1\}^n$ denote the Boolean vector such that $v_i = 1$ iff $x_i \in C(x)$. By \Cref{obs:subset->OR} we can simulate the algorithm of \Cref{lem:t-ER} using only twice as many subset queries, each of which has size at most $\ceil{n/t} + 1$. Moreover, clearly if the algorithm outputs $\supp(v)$, then we can return $C(x)$, and if the algorithm certifies that $|\supp(v)| > t$, then we can certify that $|C(x)| > t$. \end{proof}

The above corollary tells us the following. Consider a cluster $C$ of size $|C| = \gamma n$ for $\gamma \in (0,1]$, so that a uniform random point $x$ from $U$ will land in $C$ with probability $\gamma$. With this $x$ in hand, we can leverage \Cref{cor:t-ER-cor} to learn $C$ with probability $1-\alpha$ using $O(\gamma n \log (n/\alpha))$ non-adaptive queries, if we know a constant factor upper bound on $|C|$. This suggests the following approach. For each $p \in [\log n]$, try to learn the clusters of size $\approx \frac{n}{2^p}$ by sampling $\approx 2^p$ independent uniform $x$'s and running the procedure from \Cref{cor:t-ER-cor} with $t$ set to $\approx \frac{n}{2^p}$. Setting $\alpha = 1/\poly(n)$, this results in an algorithm making $\approx n \log^2 n$ queries. Boosting the number of $x$'s that are chosen in each stage by a factor of $O(\log k)$ allows us to show that all $k$ clusters are learned with high probability, resulting in an $O(n \log^2 n \log k)$ query non-adaptive algorithm.

In \Cref{sec:bounded-1}, we give a formal proof of this result. In fact, we obtain a more general algorithm (see \Cref{thm:bounded-1}), which queries subsets of size at most $s$ and has query complexity $O(n \log n \log k(\frac{n}{s} + \log s))$. This generalization comes by observing that for $s < |S|$, a single $\mathsf{OR}$ query on $S$ can be simulated by $\ceil{|S|/s}$ $\mathsf{OR}$ queries on sets of size at most $s$ simply by breaking up $S$ into smaller sets. This yields a simple extension of the group testing primitive \Cref{lem:t-ER} and consequently of \Cref{cor:t-ER-cor}. 


\subsubsection{Discovering Small Clusters with Large Queries} \label{sec:tech-2}

Our second idea uses only elementary probability theory and exploits the connectivity threshold of random graphs. The idea is best explained by focusing on the case of $3$-clustering. Recall that using \emph{pair queries only}, there is a trivial $O(n)$ non-adaptive algorithm for $2$-clustering, but $\Omega(n^2)$ non-adaptive pair queries are needed for $3$-clustering (\Cref{thm:3-2-NA-LB}). Thus, the $k=3$ setting of the problem is already interesting for non-adaptive algorithms.

It suffices to correctly reconstruct the two largest clusters, since the third cluster is just the complement of their union. Let $A,B$ denote the largest, and second largest clusters, respectively. Since $|A| \geq n/3$, it is easy to find: sample a random $x \in U$ and query $\{x,y\}$ for every $y \in U$. The cluster containing $x$ is precisely $\{y \in U \colon \query(\{x,y\}) = 1\}$. With probability at least $1/3$, we have $x \in A$ and so repeating this a constant number of times will always recover $A$. On the other hand, $B$ may be arbitrarily small and in this case the procedure clearly fails to recover it. 
The first observation is that once we know $A$, we can exploit larger subset queries to explore $U \setminus A$ since $\query(S \setminus A) = \query(S) - \mathbf{1}(S \cap A \neq \emptyset)$. Importantly, the algorithm is non-adaptive and so the choice of $S$ cannot depend on $A$, but we are still able to exploit this trick with the following two strategies. Let $\gamma n = |B|$ denote the size of $B$ and note that this implies $|A| \geq (1-2\gamma)n$ since the third cluster is of size at most $B$.

\underline{\emph{Strategy 1:}} Suppose a query $S$ contains exactly one point outside of $A$, i.e. $S \setminus A = \{x\}$ . Then, for $y \notin A$, $\query(S \cup \{y\}) = \query(S)$ iff $x,y$ belong to the same cluster. Thus, we can query $S \cup \{y\}$ for every $y \in U$ to learn the cluster containing $x$. If $S$ is a random set of size $t \approx 1/\gamma$, then the probability that $|S \setminus A| = 1$ is at least $t \cdot \gamma \cdot (1-2\gamma)^{t-1} = \Omega(1)$. Of course, we do not know $\gamma$, but we can try $t = 2^p$ for every $p \leq \log n$ and one of these choices will be within a factor of $2$ from $1/\gamma$. This gives us an $O(n \log n)$ query algorithm since we make $n$ queries per iteration.

\underline{\emph{Strategy 2:}} Suppose $S$ intersects $A$ and contains exactly two points outside of $A$, i.e. $S \setminus A = \{x,y\}$. Then, $\query(\{x,y\}) = \query(S) - 1$ which tells us whether or not $x,y$ belong to the same cluster. If $x,y$ belong to same cluster, add it to a set $E$, and let $G(U \setminus A, E)$ denote a graph on the remaining points with this set of edges. By transitivity, a connected component in this graph corresponds to a subset of one of the remaining two clusters. In particular, if the induced subgraph, $G[B]$, is connected, then we recover $B$. Moreover, if $S$ is a random set of size $t \approx 1/\gamma$, then the probability that two points land in $B$ and the rest land in $A$ is at least ${t \choose 2} \cdot \gamma^2 \cdot (1-2\gamma)^{t-2} = \Omega(1)$. A basic fact from random graph theory says that after $\approx |B|\log|B| \leq \gamma n \ln n$ occurrences of this event, $G[B]$ becomes connected with high probability and so querying $\Omega(\gamma n \log n)$ random $S$ of size $\approx 1/\gamma$ will suffice. Again, we try $t = 2^p$ for every $p \leq \log n$, resulting in a total of $\approx n\log n \sum_{p} 2^{-p} = O(n\log n)$ queries.

In fact, we can combine strategies (1) and (2) as follows to obtain an $O(n \log \log n)$ query algorithm. (More generally, this is how we obtain the algorithm of \Cref{thm:nloglogn-informal} for small $k$.) The main observation is that the query complexity of strategy (2) improves greatly if we assume that $|B|$ is small enough. If we know that $\smash{\gamma \leq \frac{1}{\log n}}$, then we only need to try $t = 2^p \geq \log n$ and so the query complexity becomes $\smash{\approx n\log n\sum_{p\geq \log\log n} 2^{-p} = O(n)}$. On the other hand, if we assume that $\smash{\gamma > \frac{1}{\log n}}$, then in strategy (1) we only need to try $p \leq \log \log n$ yielding a total of $O(n \log \log n)$ queries. Combining these yields the final algorithm.

\paragraph{Bounded query size.} In fact, we can restrict the query size to $s \leq \sqrt{n}$, and still achieve a near-linear query complexity  using strategy (2) above. 
Again, let $A,B$ denote the largest, and second largest clusters, respectively, where $|B| = \gamma n$ and so $|A| \geq (1-2\gamma) n$. If we take a random set $S$ of size $t$ where $\sqrt{1/\gamma} \leq t \leq 1/\gamma$, then the probability that exactly two points land in $B$ and the rest land in $A$ is at least ${t \choose 2} \cdot \gamma^2 \cdot (1-2\gamma)^{t-2} = \Omega(\gamma)$. Recalling the definition of the graph $G$ and the discussion above, after querying $\Omega(n \log n)$ such $S$, the induced subgraph $G[B]$ becomes connected with high probability, thus recovering the clustering. This yields an $O(n \log n \log \log n)$ algorithm since we only need to try every $t \in \{n^{1/2}, n^{1/4}, n^{1/8}, \ldots, 2\}$.

A straightforward extension of this idea lets us generalize to any $s \leq \sqrt{n}$, and obtain an algorithm with query complexity $\widetilde{O}(n^2/s^2)$, achieving the optimal dependency on $s$ for any constant $k$. See \Cref{sec:bounded-2} and \Cref{thm:bounded-2} for the precise statement and a formal proof.

\subsubsection{Combining the Two Ideas for our Main Algorithm} \label{sec:tech-3}

A second useful group testing primitive is the following, which we prove in \Cref{sec:supp-recovery}. 

\begin{restatable}{lemma}{onebitcertone} \label{lem:1-bit-cert-1} Let $v \in \{0,1\}^n$. There is a deterministic, non-adaptive algorithm that makes $2 \lceil \log n \rceil$ $\mathsf{OR}$ queries, and certifies whether $|\supp(v)| = 1$ or $|\supp(v)| \neq 1$. If $|\supp(v)| = 1$, then it outputs $\supp(v)$. \end{restatable}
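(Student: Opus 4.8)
The plan is a standard ``non-adaptive binary search'' construction. Let $m = \lceil \log n \rceil$ and assign to each element $i \in [n]$ a distinct binary string of length $m$ (possible since $2^m \geq n$), writing $\mathrm{bit}_b(i) \in \{0,1\}$ for the $b$-th coordinate of the string of $i$. For each $b \in [m]$ define $S_b = \{i \in [n] : \mathrm{bit}_b(i) = 1\}$ and $\bar S_b = [n] \setminus S_b$, and let the algorithm query $a_b := \mathsf{OR}_{S_b}(v)$ and $\bar a_b := \mathsf{OR}_{\bar S_b}(v)$ for every $b \in [m]$. This is $2m = 2\lceil \log n\rceil$ non-adaptive $\mathsf{OR}$ queries. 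The decision rule is: declare $|\supp(v)| = 1$ if and only if for every bit $b \in [m]$ exactly one of $a_b, \bar a_b$ equals $1$; and in that case output the unique index $i^\star$ whose $b$-th bit equals $1$ precisely when $a_b = 1$.

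Correctness is a short case analysis on $|\supp(v)|$. First, if $\supp(v) = \{i\}$ is a singleton, then $i \in S_b$ iff $\mathrm{bit}_b(i) = 1$ and $i \in \bar S_b$ otherwise, so exactly one of $a_b, \bar a_b$ equals $1$ for each $b$, the rule fires, and the reconstructed string equals that of $i$, so $i^\star = i$. Second, if $\supp(v) = \emptyset$, then every query returns $0$, so for no bit is exactly one of $a_b, \bar a_b$ equal to $1$, and the rule correctly reports $|\supp(v)| \neq 1$. Third, if $|\supp(v)| \geq 2$, pick distinct $i, j \in \supp(v)$; their assigned strings differ in some coordinate $b_0$, so one of $i,j$ lies in $S_{b_0}$ and the other in $\bar S_{b_0}$, forcing $a_{b_0} = \bar a_{b_0} = 1$; hence the rule again correctly reports $|\supp(v)| \neq 1$. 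In all three cases the algorithm is correct with probability $1$, as required for a deterministic guarantee.

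There is essentially no hard step here: the construction is elementary and the analysis is the three-line case check above. The only minor points worth a sentence in the write-up are (i) ensuring the $m$-bit labels assigned to distinct elements of $[n]$ are themselves distinct, which holds exactly because $2^{\lceil \log n\rceil} \geq n$, and (ii) the degenerate case $n = 1$, which we sidestep by assuming $n \geq 2$ (so that $\lceil \log n \rceil \geq 1$) since that case is vacuous for our applications. No probabilistic argument is needed, so nothing else is delicate.
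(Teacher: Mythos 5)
Your construction is exactly the paper's: the sets $S_b$ and $\bar S_b$ are precisely the rows of the matrix $M$ (columns = binary labels) and of $\mathbf{1}-M$ used in the paper's proof, and your decision rule (exactly one of $a_b,\bar a_b$ equal to $1$ for every $b$) is equivalent to the paper's two checks for $|\supp(v)|=0$ and $|\supp(v)|>1$. The proof is correct and takes essentially the same approach.
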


Again, combining with \Cref{obs:subset->OR} yields the following corollary. 

\begin{restatable}{corollary}{corollaryonebit} \label{cor:1-bit-1} Given a $k$-clustering on $U$ of size $n$ and an element $x \in U$, let $C(x)$ denote the cluster containing $x$. There is a deterministic non-adaptive algorithm, "$\mathsf{FINDREP}$" which takes as input $x$ and a set $R \subseteq U$, makes $4\ceil{\log |R|}$ subset queries, and certifies whether $|R \cap C(x)| = 1$ or $|R \cap C(x)| \neq 1$. If $|R \cap C(x)| = 1$, then it outputs $R \cap C(x)$. \end{restatable}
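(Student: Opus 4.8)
The plan is to imitate the proof of \Cref{cor:t-ER-cor} verbatim, with \Cref{lem:1-bit-cert-1} playing the role of \Cref{lem:t-ER}, and restricting attention to the coordinates indexed by $R$. First I would fix an arbitrary ordering $R = \{y_1,\ldots,y_m\}$ with $m = |R|$, and define the Boolean vector $v \in \{0,1\}^m$ by $v_j = \boldsymbol{1}(y_j \in C(x))$. Then $\supp(v)$, read back inside $R$, is exactly $R \cap C(x)$, and in particular $|\supp(v)| = |R \cap C(x)|$.

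Next I would invoke \Cref{lem:1-bit-cert-1} with ambient dimension $m$: it gives a deterministic, non-adaptive algorithm making $2\lceil \log m \rceil$ $\mathsf{OR}$ queries on subsets of $[m]$ — equivalently, on subsets $S \subseteq R$ — which certifies whether $|\supp(v)| = 1$ or $|\supp(v)| \ne 1$, and outputs $\supp(v)$ in the former case. By \Cref{obs:subset->OR}, applied with the point $x$ (whose cluster is $C(x)$), each such query $\mathsf{OR}_S(v)$ can be evaluated as $\boldsymbol{1}(\query(S) = \query(S \cup \{x\}))$, i.e.\ using two subset queries. Substituting these evaluations into the algorithm of \Cref{lem:1-bit-cert-1} yields the procedure $\mathsf{FINDREP}(x,R)$, which makes $4\lceil \log |R| \rceil$ subset queries and is still non-adaptive, since the queries prescribed by \Cref{lem:1-bit-cert-1} are fixed in advance and \Cref{obs:subset->OR} merely replaces each one by a fixed pair. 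Finally, translating the output back: ``$|\supp(v)| = 1$ versus $\ne 1$'' is precisely ``$|R \cap C(x)| = 1$ versus $\ne 1$'', and when the value is $1$, the unique element of $\supp(v)$ is the unique element of $R \cap C(x)$, which we return.

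I do not expect a genuine obstacle here; this is a black-box reduction of the same flavor as \Cref{cor:t-ER-cor}. The only points that merit a sentence of care are that the subsets queried by \Cref{lem:1-bit-cert-1} live inside $R$ rather than all of $U$ (this is exactly what makes the query count scale with $\log|R|$ rather than $\log n$), and that augmenting each queried set with the single point $x$ via \Cref{obs:subset->OR} changes set sizes by at most one and does not disturb non-adaptivity.
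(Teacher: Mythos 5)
Your proposal is correct and is essentially the paper's own argument: order $R$, encode $R \cap C(x)$ as a Boolean vector $v \in \{0,1\}^{|R|}$, run the deterministic non-adaptive procedure of \Cref{lem:1-bit-cert-1} on $v$, and simulate each of its $2\lceil\log|R|\rceil$ $\mathsf{OR}$ queries by the pair $\query(S)$, $\query(S\cup\{x\})$ via \Cref{obs:subset->OR}, giving $4\lceil\log|R|\rceil$ subset queries with the stated output guarantees. Nothing further is needed.
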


\begin{proof} Let $R = \{x_1,\ldots,x_{m}\}$ denote a fixed, arbitrary ordering on $R$ and let $v \in \{0,1\}^m$ denote the Boolean vector such that $v_i = 1$ iff $x_i \in C(x)$. By \Cref{obs:subset->OR} we can simulate the algorithm of \Cref{lem:1-bit-cert-1} using only twice as many subset queries. Moreover, clearly if the algorithm outputs $\supp(v)$, then we can return $R \cap C(x)$, and if the algorithm certifies that $|\supp(v)| \neq 1$, then we can certify that $|C(x)| \neq 1$. \end{proof}

\Cref{cor:1-bit-1} is quite versatile; we apply it as a subroutine in our algorithms of \Cref{thm:1-informal,thm:bal-informal,thm:2round-informal}. To demonstrate its utility, suppose we have some set $R \subseteq U$ such that $R$ intersects the cluster $C_i$ on exactly one element, $z$. Then, the cluster $C_i$ can be recovered with $4n \ceil{\log |R|}$ queries by running $\mathsf{FINDREP}(x,R)$ for every $x \in U$. The procedure will identify $z$ as a representative of $x$'s cluster if and only if $x \in C_i$. Consider now the balanced case, where every cluster is of size $n/k$. If one samples a set $R$ containing $k$ uniform random elements of $U$, then for each $i \in [k]$, there is an $\Omega(1)$ probability that $R$ hits $C_i$ exactly once. A straightforward calculation shows that if one samples $q \approx \log k$ such sets $R_1,\ldots,R_q$, then every cluster will be hit exactly once by at least one $R_i$ with high probability. Moreover, if this occurs, then running $\mathsf{FINDREP}(x,R_i)$ for every $x \in U$ and $i \in [q]$ recovers the entire clustering. Since $|R_i| = k$, the total number of queries is $O(n \log^2 k)$, and this yields item (b) of \Cref{thm:bal-informal} (see \Cref{thm:k-bal-2} for the formal statement and proof). 

This idea can be ported to the general (not necessarily balanced) case as follows. Suppose we have some cluster $C$ of size $|C| \approx \frac{n}{2^p}$ for some $p \in [\log n]$. A random $R$ of size $2^p$ will hit $C$ exactly once with probability $\Omega(1)$. Thus, choosing $p$ such sets\footnote{In fact, it suffices to pick $\min(\log k,p)$ sets since there are only $k$ clusters, but this only leads to at best a $\log n$ factor improvement.} $R_1,\ldots,R_p$ will result in every such $C$ being hit exactly once by some $R_i$. Summing over all $p \leq \log n$ yields an algorithm with query complexity $\approx \sum_{p \leq \log n} n \cdot p^2 \approx n \log^3 n$.

To prove our main result \Cref{thm:1-informal} (see \Cref{thm:1} for the formal statement) we combine this idea with strategy (2) from \Cref{sec:tech-2}. The idea is that the strategy described above using the group testing primitive requires less queries when learning larger clusters (when $p$ is small), while strategy (2) of \Cref{sec:tech-2} requires less queries when learning smaller clusters (when $p$ is large). Our algorithm then follows by choosing an appropriate threshold $\tau$ for which we switch strategies once $p > \tau$.


\subsection{Open Questions and Discussion} \label{sec:discussion}

The most obvious question left open by our work is whether it is possible to obtain an $O(n)$ query non-adaptive algorithm, or if it is possible to prove a super-linear lower bound. In fact, we don't know the answer to this question even for \emph{non-adaptive deterministic} algorithms, and the best non-adaptive \emph{randomized} algorithm we know of even for $k=3$ has query complexity $O(n \log \log n)$ (see \Cref{thm:nloglogn}).

\begin{question} Is there a \emph{randomized non-adaptive} algorithm making $O(n)$ queries, even for $k=3$? On the other hand, can one prove a super-linear lower bound for \emph{deterministic non-adaptive} algorithms? \end{question}

Another interesting question is to obtain a linear algorithm using relatively few rounds of adaptivity. The $O(n)$ query algorithm of \cite{CL24} uses $O(\log n)$ rounds, and we have obtained a simple $2$-round algorithm making $O(n \log k)$ queries in \Cref{sec:2round-1}. It is not clear whether our techniques or those of \cite{CL24} can be used to obtain a smooth round-complexity vs. query-complexity trade-off, but we believe this is an interesting direction for future work. Concretely, we leave the following open question.


\begin{question} \label{question:rounds} Is there an $O(n)$ query algorithm using $2$ rounds of adaptivity? If not, what is the minimum round complexity needed to achieve $O(n)$ query complexity? Is there an $O(n)$ query algorithm using $o(\log n)$ rounds?\end{question}

\noindent In fact, the above question is open even in the balanced cluster setting, where the best 2-round algorithm we know of makes $O(n \log \log k)$ queries.

\paragraph{On minimizing query size and recent work since the conference version.} Next, we believe that minimizing the \emph{size} of subset queries is a worthwhile direction to consider. With queries of size $s \leq \sqrt{n}$, we obtain upper and lower bounds of $\widetilde{O}(\frac{n^2}{s} \min(1,\frac{k}{s}))$ and $\Omega(n^2/s^2)$. Since the preliminary conference version of this work this gap has been resolved by \cite{DBLP:conf/colt/BlackMS25} who gave a $\widetilde{O}(n^2/s^2)$ query algorithm, matching the lower bound up to log-factors. We also remark that \cite{DBLP:conf/colt/BlackMS25} give smooth trade-offs with query and round complexity in the bounded query size setting. However, the best known algorithms in this setting incur large $\mathrm{polylog}$-dependencies, and so these results do not speak to the above open questions.



\section{Clustering with Subset Queries of Unbounded Size} \label{sec:unbounded}

In this section we prove our main results regarding algorithms that run without any restriction on the size of the subsets they query. This means that some of the queries made by these algorithms are as large as $\Omega(n)$. In \Cref{sec:unbounded-1} we prove our main result (\Cref{thm:1}), and in \Cref{sec:unbounded-2} we obtain an improved algorithm for the case when $k$ is sufficiently small (\Cref{thm:nloglogn}). 

\subsection{An $O(n \log k\cdot (\log k + \log \log n)^2)$ Algorithm} \label{sec:unbounded-1}

\begin{theorem} \label{thm:1} There is a non-adaptive algorithm for $k$-clustering which for any $\delta > 0$, uses 
\[
O\big(n \cdot \log (k/\delta) \cdot (\log k + \log \log (n/\delta))^2\big)
\]
subset queries and succeeds with probability $1-\delta$. \end{theorem}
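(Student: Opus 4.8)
The plan is to combine the two main primitives exactly as sketched in \Cref{sec:tech-3}: the $\mathsf{FINDREP}$-based approach of \Cref{cor:1-bit-1} for ``large'' clusters and strategy (2) of \Cref{sec:tech-2} (connectivity of random graphs on a small cluster) for ``small'' clusters, with a threshold $\tau$ on the dyadic scale $p$ separating the two regimes. Throughout, think of a cluster $C$ as being at \emph{scale} $p$ if $|C| \approx n/2^p$ for $p \in \{0,1,\ldots,\lceil \log n\rceil\}$; there are at most $k$ clusters total, so only at most $\min(\log k, p)+O(1)$ distinct ``attempts'' are ever needed at each scale. I would set $\tau \approx \log(k/\delta) + \log\log(n/\delta)$ and handle scales $p \le \tau$ with the group-testing primitive and scales $p > \tau$ with the connectivity strategy.

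First I would set up the large-cluster regime. For each $p \le \tau$, sample $q \approx \log(k/\delta)$ independent random subsets $R^{(p)}_1,\ldots,R^{(p)}_q \subseteq U$, each of size $2^p$. A standard calculation (each element of a scale-$p$ cluster $C$ is included independently with probability $2^p/n$, so $|R \cap C|$ is roughly Poisson with mean $\Theta(1)$) shows $\Pr[|R^{(p)}_j \cap C| = 1] = \Omega(1)$, hence with $q \approx \log(k/\delta)$ repetitions, \emph{every} cluster $C$ at scale $p$ is hit exactly once by at least one $R^{(p)}_j$, except with probability $\delta/\poly$. Conditioned on this good event, for each such pair $(p,j)$ I run $\mathsf{FINDREP}(x, R^{(p)}_j)$ for every $x \in U$; by \Cref{cor:1-bit-1} this costs $4n\lceil \log 2^p\rceil = O(np)$ queries per pair and recovers $C$ exactly whenever $R^{(p)}_j$ hits $C$ once. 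Summing over $p \le \tau$ and $j \le q$ gives $O(n \log(k/\delta) \sum_{p \le \tau} p) = O(n \log(k/\delta)\,\tau^2) = O(n\log(k/\delta)(\log k + \log\log(n/\delta))^2)$ queries, which is exactly the claimed bound — so all the ``slack'' in the budget is in this regime.

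Next, the small-cluster regime, $p > \tau$, i.e.\ clusters of size at most $n/2^\tau \lesssim n / (k \log(n/\delta))$ — in particular all such clusters together occupy a $1/(k\log(n/\delta))$ fraction, and more usefully each individual one has size $\gamma n$ with $\gamma \le 2^{-\tau}$. Here I mimic strategy (2): for each dyadic $t = 2^p$ with $p > \tau$, query $\Theta(n \cdot 2^{-p} \cdot \log(n/\delta) \cdot \log(k/\delta))$ random sets of size $\approx t$. Arguing as in \Cref{sec:tech-2}: for a target small cluster $B$ of size $\gamma n$ with $2^{-p-1} \le \gamma \le 2^{-p}$, a random $t$-set with $t \approx 1/\gamma$ has its intersection with $B$ equal to exactly two elements and the rest inside the (necessarily huge) union of larger clusters with probability $\Omega(\gamma)$; collecting $\Omega(|B|\log(|B|/\delta))$ such events makes the induced random graph $G[B]$ connected w.h.p., which recovers $B$. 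The subtlety versus the $k=3$ sketch is that ``the rest lands in a big cluster'' must be replaced by ``the rest lands outside $B$ and, crucially, the algorithm can still read off the pairwise relation of the two points in $B$.'' For that I would rely on the fact that the clusters \emph{at scales $\le \tau$ have already been recovered} (by the first phase), so any set $S$ can be projected to $U \setminus (\text{recovered clusters})$ via $\query(S \setminus A) = \query(S) - \mathbf{1}(S\cap A \ne \emptyset)$, and within the unrecovered part only small clusters remain; a random $t$-set of size $\approx 1/\gamma$ then lands with $\Omega(1)$ probability on exactly two points of the unrecovered part, both in the same small cluster $B$, letting us add that edge to $G$. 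Summing the per-scale cost over $p > \tau$ gives $\sum_{p>\tau} n\,2^{-p}\log(n/\delta)\log(k/\delta) = O(n \cdot 2^{-\tau}\log(n/\delta)\log(k/\delta)) = O(n\log(k/\delta))$ by the choice of $\tau$, which is absorbed.

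The main obstacle I anticipate is making the two phases interlock cleanly in a genuinely \emph{non-adaptive} way: all sets $R^{(p)}_j$ and all random $t$-sets must be committed in advance, and the ``projection onto the unrecovered part'' trick, the ``exactly-two-in-$B$'' event, and the graph-connectivity argument must all be shown to hold \emph{simultaneously} for every cluster at every scale, which requires a careful union bound and a clean definition of the global good event (scale-by-scale hitting for $p\le\tau$, plus enough two-point samples for each small $B$). A secondary nuisance is the boundary case where a cluster's size does not fall neatly between consecutive powers of two and where $k$ is small enough that capping the number of attempts at $\min(\log k,p)$ matters; these only affect constants and $\log$ factors but must be tracked to land exactly on $O(n\log(k/\delta)(\log k + \log\log(n/\delta))^2)$. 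Finally I would verify that conditioning on the first-phase success does not bias the randomness used in the second phase — easy since they use disjoint batches of random sets — and assemble the union bound to conclude success probability $1-\delta$.
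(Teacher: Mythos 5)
Your overall architecture is the same as the paper's (\Cref{alg:1} and \Cref{lem:main-analysis-1}): for each dyadic scale $p$ up to a threshold $\tau \approx \log k + \log\log(n/\delta)$, sample $\Theta(\log(k/\delta))$ random sets of size $2^p$ and run $\mathsf{FINDREP}$ of \Cref{cor:1-bit-1} on every point, so that any cluster at scale $p$ hit exactly once is recovered; for $p>\tau$ switch to the random-pair/graph-connectivity strategy; and indeed the first phase alone accounts for the entire $O\big(n\log(k/\delta)(\log k+\log\log(n/\delta))^2\big)$ budget. The large-cluster phase of your proposal is essentially the paper's Case~1 and is fine.

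The gap is in your small-cluster phase. At a scale $p>\tau$, even conditioned on all clusters of size $\ge n/2^{p-1}$ being recovered, the unrecovered region $U\setminus\cR_p$ can contain up to $k\cdot n/2^{p-1}$ points (up to $k$ clusters each just below the cutoff), and this bound does not improve by raising $\tau$. If you sample $T$ of size $t\approx 2^p$, the expected size of $T\cap(U\setminus\cR_p)$ is $\approx 2k$, so the probability that $T$ meets the unrecovered region in exactly two points, both in the target cluster $B$, carries a factor $(1-|U\setminus\cR_p|/n)^{t-2}\approx e^{-2k}$; it is exponentially small in $k$, not $\Omega(1)$ or $\Omega(\gamma)$ as you assert. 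That assertion is imported from the $k=3$ warm-up of \Cref{sec:tech-2}, where the set only has to avoid a region of size $\le 2\gamma n$; for general $k$ the region to avoid has size up to $\sim k\gamma n$. Consequently, with your $\Theta(n2^{-p}\log(n/\delta)\log(k/\delta))$ sets per scale the per-pair edge probability in $G[B]$ is roughly $(2^p/n)\log(n/\delta)\log(k/\delta)e^{-2k}$, which falls below the $\approx \log(\cdot)/|B|$ threshold needed for \Cref{fact:random-connected} once $k=\omega(1)$, so the second phase fails to connect $G[B]$ except for constant $k$. The paper's remedy is to sample sets of size $2^p/k$ (so the expected overlap with $U\setminus\cR_p$ is $O(1)$) and to compensate with $\approx nk^2\log(nk^2/\delta)/2^p$ sets per scale; the extra $k^2$ factor still sums to $O(n)$ over $p>\tau$ precisely because the threshold is chosen with $2^\tau \ge k^2\log(n/\delta)$. (A minor slip, not the main issue: $k$ clusters of size at most $n/2^\tau$ occupy up to a $k/2^\tau$ fraction of $U$, not a $1/(k\log(n/\delta))$ fraction.) With the set size and repetition count corrected as in the paper, the rest of your plan (disjoint randomness across phases, scale-by-scale conditioning, union bound over at most $k$ relevant scales) goes through as in the paper's proof.
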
 

Note that for $\delta = 1/\poly(k)$ and $k = \Omega(\log n)$, this algorithm makes $O(n \log^3 k)$ queries. On the other hand if $k = O(\log n)$, then this algorithm makes $O(n \cdot (\log \log n)^2 \cdot \log k)$ queries. 

\paragraph*{High-level algorithm description.} A full description of the algorithm is given in pseudocode \Alg{1}, which is split into two phases: a "query selection phase", which describes how queries are chosen by the algorithm, and a "reconstruction phase" which describes how the algorithm uses the query responses to determine the clustering. Both phases contain a for-loop iterating over all $p \in \{1,2,\ldots,\log n\}$ where the goal of the algorithm during the $p$'th iteration is to learn all remaining clusters of size at least $\frac{n}{2^p}$. This is accomplished by two different strategies depending on whether $p$ is small or large. We refer the reader to \Cref{sec:tech-2,sec:tech-3} for an intuitive description of the ideas and for a proof of \Cref{cor:1-bit-1} which is used in our algorithm in lines 9-10. \\

\noindent \emph{Learning large clusters:} When $p \leq \log (k^2\log (n/\delta))$, we adapt the strategy used in \Cref{sec:bal-2} for the case of balanced clusters. We will use the following subroutine discussed in \Cref{sec:tech-3}.

\corollaryonebit*


By \Cref{cor:1-bit-1}, if for a cluster $C$, we have a set $R$ such that $|C \cap R| = 1$, then we can learn $C$ using $O(n \log |R|)$ non-adaptive queries by calling $\mathsf{FINDREP}(x,R)$ for every $x \in U$. In particular, for each $x \in C$, $\mathsf{FINDREP}(x,R)$ will return the unique representative of $C$ from $R$, which then enables the recovery of $C$. In line (5) of \Alg{1}, we sample $s$ uniform random subsets $R_1^{(p)},\ldots,R_s^{(p)} \subset U$, each of size $2^p$. For sufficiently large $s$ (see line 5), with high probability every cluster $C$ of size $|C| \in [\frac{n}{2^p},\frac{n}{2^{p-1}}]$ will satisfy $|R^{(p)}_i \cap C| = 1$ for some $i \in [s]$. When this occurs, running the procedure $\mathsf{FINDREP}(x,R^{(p)}_i)$ for each $x \in U$ exactly recovers the cluster $C$ (see lines 9-11 and 29-30). \\


\noindent \emph{Learning small clusters:} When $p > \log (k^2 \log (n/\delta))$, the algorithm queries $O(nk^2 \cdot \frac{\log (n/\delta)}{2^p}) \leq O(n)$ random sets $T$ formed by $2^p/k$ samples from $U$ (see lines 16-19 of \Alg{1}). When all but exactly two points $x,y$ in the query $T$ belong to the already reconstructed clusters (see line 34) we deduce whether or not $x,y$ are in the same cluster (this effectively simulates a pair query on $x,y$). When it is deduced that $x,y$ belong to the same cluster (see line 35) we add the edge $(x,y)$ to the graph $G_p$ in line 36. The connected components of $G_p$ of size at least $n/2^p$ are then guaranteed (with high probability) to be exactly the remaining clusters of this size. The analysis crucially relies on the connectivity threshold for random graphs. See strategy (2) in \Cref{sec:tech-2} for a discussion on the ideas used in this case.

\paragraph*{Analysis.} The following \Cref{lem:main-analysis-1} establishes that after the first $p$ iterations of the algorithm's query selection and reconstruction phases, all clusters of size at least $\frac{n}{2^p}$ have been learned with high probability. This is the main technical component of the proof. After stating the lemma we show it easily implies that \Alg{1} succeeds with probability at least $1-\delta$ by an appropriate union bound. 

\begin{lemma} \label{lem:main-analysis-1} For each $p = 1,\ldots,\log n$, let $\cE_p$ denote the event that all clusters of size at least $\frac{n}{2^p}$ have been successfully recovered immediately following iteration $p$ of \Alg{1}. Then, 
\[
\Pr[\neg \cE_0] \leq \delta/k ~\text{ and } ~\Pr[\neg \cE_p ~|~ \cE_{p-1}] \leq \delta/k~ \text{ for all } p \in \{1,2\ldots,\log n\} \text{.}
\]
\end{lemma}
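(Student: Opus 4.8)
The plan is to prove the lemma by analyzing one iteration $p$ of \Alg{1} conditioned on $\cE_{p-1}$, so that the only clusters iteration $p$ still has to recover are the ``band-$p$'' clusters $C$ with $n/2^p \le |C| < n/2^{p-1}$, i.e. with relative size $\gamma := |C|/n$ satisfying $2^p\gamma \in [1,2)$. The base case $\cE_0$ is trivial: no cluster has size exceeding $n$, and a lone cluster of size $n$ (the case $k=1$) is recovered as the complement of the empty set, so $\Pr[\neg\cE_0]=0$. For the inductive step I would split according to the threshold $\tau := \log(k^2\log(n/\delta))$ that the algorithm uses, treating $p \le \tau$ with the group-testing primitive and $p > \tau$ with the random-graph argument.

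\emph{The regime $p \le \tau$ ($\mathsf{FINDREP}$).} Here the algorithm draws $q = \Theta(\log(k/\delta))$ independent uniform subsets $R_1,\dots,R_q \subseteq U$ of size $2^p$ and, via \Cref{cor:1-bit-1}, runs $\mathsf{FINDREP}(x,R_j)$ for all $x\in U$, $j\in[q]$. The deterministic fact to establish first is exact reconstruction: if $|R_j\cap C|=\{z\}$ then $\{x : \mathsf{FINDREP}(x,R_j)=\{z\}\}$ equals $C$ (it contains $C$, and for $x\notin C$ the output cannot be $\{z\}$ since $z\notin C(x)$). So it suffices that every band-$p$ cluster is hit exactly once by some $R_j$. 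For $\gamma<\tfrac12$ (which is exactly the case $p\ge2$) we have $2^p\gamma\in[1,2)$ and $(1-\gamma)^{2^p-1}=\Omega(1)$, hence $\Pr[|R_j\cap C|=1]=2^p\gamma(1-\gamma)^{2^p-1}=\Omega(1)$; the at-most-one cluster of relative size $\ge\tfrac12$ needs a small separate argument (the warm-up routine, or recovering it as the complement of all other clusters). Taking $q=\Theta(\log(k/\delta))$ drives the failure probability for a fixed band-$p$ cluster below $\delta/k^2$, and a union bound over the at most $k$ of them gives $\Pr[\neg\cE_p\mid\cE_{p-1}]\le\delta/k$. Each $R_j$ costs $O(np)$ queries, so iteration $p$ costs $O(np\log(k/\delta))$ and $\sum_{p\le\tau}$ gives the $O(n\log(k/\delta)(\log k+\log\log(n/\delta))^2)$ term.

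\emph{The regime $p > \tau$ (random graph).} Conditioned on $\cE_{p-1}$, let $K$ be the union of clusters already recovered and $V := U\setminus K$; then $V$ is a union of whole clusters, each of size $< n/2^{p-1}$, so $|V| < k\cdot n/2^{p-1}$, which — since $2^{p-1} > k^2\log(n/\delta)/2$ — forces $|V|/n = O(k/2^p)$. This smallness of $V$ is the crux. Knowing $K$, the algorithm computes $\query(S\cap V) = \query(S) - \#\{K\text{-clusters meeting }S\}$ for every query $S$, i.e. it effectively has subset-query access to $V$. It queries $M_p = \Theta(nk^2\log(n/\delta)/2^p)$ random sets $T$ of size $t\approx 2^p/k$; for a band-$p$ cluster $C$, the event $\{|T\cap C|=2\}\cap\{T\cap(V\setminus C)=\emptyset\}$ makes $T\cap V$ a two-point set lying in one cluster, and the computed count reveals it — yielding a uniformly random edge inside $C$. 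This event has probability $\Omega(1/k^2)$: the factor $\binom{t}{2}\gamma^2 = \Theta(1/k^2)$ since $t\gamma\in[1/k,2/k]$, and the ``avoid $V\setminus C$'' factor is $\Omega(1)$ precisely because $t\cdot|V|/n = O(1)$ by the bound on $|V|$. As in strategy (2), these edges build a graph $G$ on $V$ whose components refine the clustering; a Chernoff bound shows $M_p$ queries collect $\Omega(|C|\log(nk/\delta))$ edges inside each band-$p$ $C$ except with probability $\delta/k^2$, and the connectivity threshold of the random-graph process then makes $G[C]$ connected (so $C$ is output as a component) except with probability $\delta/k^2$. A union bound over the at most $k$ band-$p$ clusters again gives $\Pr[\neg\cE_p\mid\cE_{p-1}]\le\delta/k$, and $\sum_{p>\tau}M_p = O(n)$.

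\emph{Main obstacle.} I expect the delicate point to be the regime $p>\tau$: one must see that conditioning on $\cE_{p-1}$ shrinks the residual universe to a $O(k/2^p)$-fraction of $U$, and then tune $t\approx2^p/k$ so that simultaneously (i) exactly two samples hit a band-$p$ cluster with probability $\Omega(1/k^2)$ and (ii) no sample hits $V\setminus C$ with probability $\Omega(1)$ — requirements that push $t$ in opposite directions, with the $|V|$ bound being exactly what opens a common window. A minor secondary nuisance is the boundary cluster of relative size $\ge\tfrac12$ in the small-$p$ regime, together with observing that the outer union bound over iterations is really over the at most $\min(k,\log n)$ nonempty bands, which is what makes the per-iteration bound $\delta/k$ suffice.
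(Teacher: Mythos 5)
Your proposal is correct and follows essentially the same route as the paper: the same case split at the threshold $\log(k^2\log(n/\delta))$, the same "hit the cluster exactly once with $R_j$, amplify over $\Theta(\log(k/\delta))$ samples, union bound over $\leq k$ clusters" argument for small $p$ via \Cref{cor:1-bit-1}, and for large $p$ the same residual-universe bound $|U\setminus\cR_p|\leq kn/2^{p-1}$ followed by extracting same-cluster pairs from random sets of size $2^p/k$ and invoking random-graph connectivity (the paper bounds the per-pair edge probability and applies \Cref{fact:random-connected} directly, rather than counting edges via Chernoff, but this is only a cosmetic difference). Your explicit flagging of the $p=1$ boundary cluster of relative size $\geq 1/2$ is if anything slightly more careful than the paper's Case 1 estimate, which silently degenerates at $p=1$.
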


\paragraph{Proof of \Cref{thm:1}:} Before proving \Cref{lem:main-analysis-1}, we first observe that it immediately implies the correctness of \Alg{1} and thus proves \Cref{thm:1}. For $1 \leq p \leq \log n$, let $I_p = [\frac{n}{2^{p}},\frac{n}{2^{p-1}})$. If there are no clusters $C$ for which $|C| \in I_p$, then trivially $\Pr[\neg \cE_p ~|~ \cE_{p-1}] = 0$, and otherwise $\Pr[\neg \cE_p ~|~ \cE_{p-1}] \leq \delta/k$ by the lemma. Since there are $k$ clusters, clearly there are at most $k$ values of $p$ for which there exists a cluster with size in the interval $I_p$. Using this observation and a union bound, we have
\[
\Pr[\neg \cE_{\log n}] \leq \Pr[\neg \cE_0] + \sum_{p=1}^{\log n} \Pr[\neg \cE_p ~|~ \cE_{p-1}] \leq \delta
\]
which completes the proof of correctness since the algorithm succeeds iff $\cE_{\log n}$ occurs.

\paragraph{Query complexity:} During the $p$'th iteration, where $p < \log(k^2\log (n/\delta))$, the algorithm calls the subroutine from \Cref{cor:1-bit-1} on $n \cdot s = O(n \log (k/\delta))$ pairs $(x,R)$ where $R \subseteq U$ is of size $2^p$. By \Cref{cor:1-bit-1}, each such call to the subroutine uses $O(p)$ queries. Thus, the total number of queries made during the first $\log(k^2\log \frac{n}{\delta})$ iterations is
\[
O(n \log(k/\delta)  ) \cdot \left( \sum_{p = 1}^{\log (k^2 \log (n/\delta))} p \right) = O\big(n \cdot \log (k/\delta) \cdot (\log k + \log \log (n/\delta))^2\big) \text{.}
\]
The total number of queries made during iterations $p > \log(k^2\log (n/\delta))$ is at most
\[
O(nk^2 \log (n/\delta)) \sum_{p > \log(k^2 \log (n/\delta))} 2^{-p} = O(n)
\]
since $k \leq n$. This completes the proof of \Cref{thm:1}. \qed \\

\begin{algorithm}
\caption{Main Algorithm for Unrestricted Query Size} \label{alg:1}
\textbf{Input:} Subset query access to a hidden partition $C_1 \sqcup \cdots \sqcup C_k = U$ of $|U| = n$ points\;
\emph{(Query Selection Phase)} \\
\For{$p = 1,2,\ldots,\log n$} {
\If{$p \leq \log (k^2 \log (n/\delta))$} { 
Sample $s = 5\log(k^2/\delta))$ sets $R_1^{(p)},\ldots,R_{s}^{(p)}$ each formed by $2^p$ uniform samples from $U$\;
Let $R^{(p)} = \bigcup_{\ell \in [s]} R^{(p)}_{\ell}$\;
Initialize $E_p \gets \emptyset$ and consider the bipartite graph $G_p(U,R^{(p)},E_p)$\;
\For{$x \in U$, $i \in [s]$} {
        Run the non-adaptive algorithm $\mathsf{FINDREP}$ from \Cref{cor:1-bit-1} on $(x,R_i)$\;
        \If{$\mathsf{FINDREP}(x,R_i)$ outputs $y \in R_i$} {
            Add edge $(x,y)$ to $E_p$\;
        }
}
}
\If{$p > \log(k^2 \log (n/\delta))$} {
Initialize $Q_p \gets \emptyset$\;
\textbf{Repeat} $\frac{40 nk^2 \ln (3nk^2/\delta)}{2^p}$ times\;
$\longrightarrow$ Sample $T \subseteq U$ formed by $\frac{2^p}{k}$ independent uniform samples from $U$\;
$\longrightarrow$ \textbf{Query} $T$ and add it to $Q_p$\;
}
}
\emph{(Reconstruction Phase)} \\
Initialize learned cluster set $\cC_1 \gets \emptyset$\;
\For{$p = 1,2,\ldots,\log n$} {
Let $\cC_p$ denote the collection of clusters reconstructed before iteration $p$\;
Let $\cR_p = \bigcup_{C \in \cC_p} C$ denote the points belonging to these clusters\;
Initialize $\cC_{p+1} \gets \cC_p$\;
\If{$p \leq \log(k^2 \log (n/\delta))$} {
    Consider the bipartite graph $G_p(U,R^{(p)},E_p)$ constructed in lines 5-13\;
    Let $\widetilde{C}_1,\ldots,\widetilde{C}_{\ell} \subseteq U$ denote the connected components of $G_p$ (taking only vertices on the left) with size at least $\frac{n}{2^p}$\;
    Add $\widetilde{C}_1,\ldots,\widetilde{C}_{\ell}$ to $\cC_{p+1}$\;
}
\If{$p > \log(k^2 \log (n/\delta))$} {
Let $Q_p' = \{T \setminus \cR_p \colon T \in Q_p \text{ and } |T \setminus \cR_p| = 2\}$. Since each $T \in Q_p$ is a uniform random set, the elements of $Q_p'$ are uniform random pairs in $U \setminus \cR_p$\;
Let $Q_p'' = \{\{x,y\} \in Q_p' \colon \query(\{x,y\} = 1)\}$ denote the set of pairs in $Q_p'$ where both points lie in the same cluster. This set can be computed since $\query(T \setminus \cR_p) = \query(T) - \query(T \cap \cR_p)$ and $\query(T \cap \cR_p)$ is known since at this point we have reconstructed the clustering on $\cR_p$\;
Let $G_p$ denote the graph with vertex set $U \setminus \cR_p$ and edge set $Q_p''$\;
Let $\widetilde{C}_1,\ldots,\widetilde{C}_{\ell}$ denote the connected components of $G_p$ with size at least $\frac{n}{2^{p}}$\; 
Add $\widetilde{C}_1,\ldots,\widetilde{C}_{\ell}$ to $\cC_{p+1}$\;
}
}
\textbf{Output} clustering $\cC_{\log n + 1}$
\end{algorithm}

We now prove the main \Cref{lem:main-analysis-1}.
\begin{proof} \emph{of \Cref{lem:main-analysis-1}.} Let $\cC_p$ denote the set of clusters recovered before phase $p$ and let $\cR_p = \bigcup_{C \in \cC_p} C$. When $p = 1$, both of these sets are empty. We will consider two cases depending on the value of $p$. \\

\noindent \textbf{Case 1:} $1 \leq p \leq \log (k^2 \log (n/\delta))$. Let $C_1,\ldots,C_t$ denote the clusters with size $|C_j| \in [\frac{n}{2^{p}},\frac{n}{2^{p-1}}]$. Note that $t \leq \min(k, 2^p)$. By \Cref{cor:1-bit-1} and the definition of \Alg{1} (see lines 5-13 and lines 29-31) if some $R^{(p)}_i$ sampled in line (5) satisfies $|R^{(p)}_i \cap C_j| = 1$, then $C_j$ appears as a connected component in lines (30-31) during the $p$'th iteration of the reconstruction phase, and is thus correctly recovered. Let $\cA_{i,j}$ denote the (good) event that $|R^{(p)}_i \cap C_j| = 1$. Then, for fixed $i \in [s]$ and $j \in [t]$, we have

\[
\Pr_{R^{(p)}_i}\left[\cA_{i,j}\right] = |R^{(p)}_i| \cdot \frac{|C_j|}{n} \cdot \left(1-\frac{|C_j|}{n}\right)^{|R^{(p)}_i|-1} \geq 2^p \cdot \frac{1}{2^p} \cdot \left(1-\frac{1}{2^{p-1}}\right)^{2^p} \geq \frac{1}{e^2} \text{.}
\]
We need for every $j \in [t]$, that $\cA_{i,j}$ occurs for some $i \in [s]$. Fix $j \in [t]$. Since $R^{(p)}_1,\ldots,R^{(p)}_{s}$ are independent, we have
\[
\Pr_{R^{(p)}_1,\ldots,R^{(p)}_{s}}\left[\bigwedge_{i \in [s]} \neg \cA_{i,j}\right] \leq (1-1/e^2)^{5\log(k^2/\delta)} \leq 2^{-\log(k^2/\delta)} = \frac{\delta}{k^2} \text{.}
\]
Finally, since $t \leq k$, by a union bound we have
\[
\Pr_{R^{(p)}_1,\ldots,R^{(p)}_{s}}\left[\bigvee_{j \in [t]} \bigwedge_{i \in [s]} \neg \cA_{i,j}\right] \leq \frac{\delta}{k}
\]
which completes the proof. \\

\noindent \textbf{Case 2: } $p > \log (k^2 \log (n/\delta))$. Let $C$ denote some cluster with size $|C| \in [\frac{n}{2^{p}},\frac{n}{2^{p-1}})$. Note that we are conditioning on all clusters of size greater than $\frac{n}{2^{p-1}}$ having already been learned (i.e. every such cluster belongs to $\cC_{p}$) and so $|U \setminus \cR_p| \leq k \cdot \frac{n}{2^{p-1}} = \frac{nk}{2^{p-1}}$. Recall from lines (34-36) the definition of $Q_p''$ and recall that $G_p$ is the graph with vertex set $U \setminus \cR_p$ and edge set $Q_p''$. We need to argue that the induced subgraph $G_p[C]$ is connected, and thus $C$ is successfully recovered in lines (36-38), with probability at least $1-\delta/k^2$. Once this is established, the lemma follows by a union bound since there are at most $k$ such clusters. We rely on the following standard fact from the theory of random graphs. For completeness, we give a proof in \Cref{sec:random-graph}.

\begin{restatable}{fact}{randomconnected} \label{fact:random-connected} Let $\smash{G(N,p)}$ denote an Erd\H{o}s-R\'{e}nyi random graph. That is, $G$ contains $N$ vertices and each pair is an edge independently with probability $p$. If $\smash{p \geq 1-\left(\alpha/3N\right)^{2/N}}$, then $G(N,p)$ is connected with probability at least $1-\alpha$. \end{restatable}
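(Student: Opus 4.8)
The plan is a standard first-moment (union bound) argument over candidate ``bad cuts''. Recall that $G(N,p)$ fails to be connected exactly when there is a nonempty proper vertex subset $S \subsetneq [N]$ with no edge of $G$ crossing the cut $(S, [N]\setminus S)$, and that we may always choose such an $S$ with $|S| \le N/2$. Since distinct pairs are present independently, the probability that a fixed $S$ with $|S| = j$ has no crossing edge is $(1-p)^{j(N-j)}$. Writing $q := 1-p$, a union bound over all such $S$ gives
\[
\Pr[G(N,p)\text{ disconnected}] \;\le\; \sum_{j=1}^{\lfloor N/2 \rfloor} \binom{N}{j}\, q^{\,j(N-j)} .
\]

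Next I would bound each term using the hypothesis. For $1 \le j \le \lfloor N/2\rfloor$ we have $N-j \ge N/2$, hence $j(N-j) \ge jN/2$ and therefore $q^{\,j(N-j)} \le (q^{N/2})^{j}$. The assumption $p \ge 1-(\alpha/3N)^{2/N}$ is precisely $q^{N/2} \le \alpha/(3N)$, so together with the crude estimate $\binom{N}{j} \le N^{j}$ each summand is at most $N^{j}\,(\alpha/(3N))^{j} = (\alpha/3)^{j}$. Summing the resulting geometric series and using $\alpha \le 1$ (so $1-\alpha/3 \ge 2/3$),
\[
\Pr[G(N,p)\text{ disconnected}] \;\le\; \sum_{j \ge 1} (\alpha/3)^{j} \;=\; \frac{\alpha/3}{1-\alpha/3} \;\le\; \frac{\alpha}{2} \;\le\; \alpha ,
\]
which is the claimed bound. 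The tiny cases $N=1$ (vacuously connected) and $N=2$ are also covered by the same inequality, so no separate treatment is needed.

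I do not expect any real obstacle here: the only point requiring a moment's care is restricting the union bound to cuts with $|S| \le N/2$, which is what makes the exponent $j(N-j)$ at least $jN/2$ and hence large enough for the geometric series to converge to something of order $\alpha$. Everything else is elementary, and the argument is self-contained modulo the combinatorial identity for disconnectedness.
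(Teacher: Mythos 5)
Your proof is correct and follows essentially the same route as the paper's: a union bound over cuts, lower-bounding the exponent $j(N-j)$ by $jN/2$, bounding $\binom{N}{j}$ by $N^j$, and summing the resulting geometric series $\sum_j (\alpha/3)^j$. The only cosmetic difference is that you restrict at the outset to the smaller side of a disconnecting cut ($|S|\le N/2$), whereas the paper sums over all $1\le t\le N-1$ and absorbs the symmetry into a factor of $2$; both yield the same bound.
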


Recall $Q_p$ defined in line (16) which is a collection of sets $T \subseteq U$ each of which is formed by $\frac{2^p}{k}$ independent uniform samples from $U$. Consider any $x,y \in C$ and observe that 
\begin{align} 
    \Pr_{T \colon |T| = 2^p/k}[T \setminus \cR_p = \{x,y\}] = {2^p/k \choose 2} \cdot \frac{1}{n^2} \cdot \left(\frac{|\cR_p|}{n}\right)^{\frac{2^p}{k}-2} \geq \frac{2^{2p}}{3k^2n^2}  \left(1-\frac{k}{2^{p-1}}\right)^{\frac{2^{p}}{k}} \geq \frac{2^{2p}}{25k^2n^2} \text{.} \nonumber
\end{align} 
Recall that the algorithm queries $|Q_p| = \frac{50 \cdot nk^2 \ln (3nk^2/\delta)}{2^p}$ random sets $T$ of size $2^p$. Thus,
\begin{align}
    \Pr_{Q_p}\left[(x,y) \in E(G_p[C])\right] &= \Pr_{Q_p}\left[\{x,y\} \in Q_p''\right] = \Pr_{Q_p}\left[\exists T \in Q_p \colon T \setminus \cR_p = \{x,y\}\right] \nonumber \\
    &\geq 1-\left(1-\frac{2^{2p}}{25k^2n^2}\right)^{50\frac{n}{2^p} \cdot k^2\ln (3 nk^2/\delta)} \nonumber \\
    &\geq 1-\exp\left(-\frac{2^p}{n} \cdot 2 \ln (3nk^2/\delta)\right) \nonumber
\end{align}
and using $|C| \geq \frac{n}{2^{p}}$ and $|C| \leq n$, we obtain
\begin{align}
   \Pr_{Q_p}\left[(x,y) \in E(G_p[C])\right] &\geq 1-\exp\left(-\frac{2\ln (3nk^2/\delta)}{|C|}\right) \nonumber \\
   &\geq 1-\exp\left(-\frac{2\ln (3 k^2|C|/\delta)}{|C|}\right) = 1-\left(\frac{\delta}{3k^2|C|}\right)^{\frac{2}{|C|}} \text{.} \nonumber 
\end{align}
Thus, $(x,y)$ is an edge in $G_p[C]$ with probability at least $1-\left(\frac{\delta}{3k^2|C|}\right)^{\frac{2}{|C|}}$ and so by \Cref{fact:random-connected} $G_p[C]$ is connected with probability at least $1-\delta/k^2$, as claimed. This completes the proof of \Cref{lem:main-analysis-1}. \end{proof}

\subsection{An Improved Algorithm for Small $k$} \label{sec:unbounded-2}

\begin{theorem} \label{thm:nloglogn} There is a non-adaptive algorithm for $k$-clustering which for any $\delta > 0$, uses 
\[
O(n\log\log (n/\delta) \cdot k\log (k/\delta))
\]
subset queries and succeeds with probability at least $1-\delta$.
\end{theorem}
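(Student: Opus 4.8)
The plan is to mimic the structure of \Alg{1} and \Cref{lem:main-analysis-1}, but to recover medium-sized clusters with the cheaper ``Strategy (1)'' of \Cref{sec:tech-2} in place of the group-testing subroutine \Cref{cor:1-bit-1}, and to choose the threshold between the two strategies more carefully. As in \Alg{1}, the algorithm has a non-adaptive query-selection phase followed by a reconstruction phase that processes scales $p=1,\dots,\log n$ in increasing order while maintaining the set $\cR_p$ of points lying in clusters already recovered; at scale $p$ the goal is to recover every cluster of size in $[\tfrac{n}{2^p},\tfrac{n}{2^{p-1}})$, and one argues inductively (exactly as for the events $\cE_p$ in \Cref{lem:main-analysis-1}, with base case $\cE_0$ holding vacuously) that by the time scale $p$ is processed all larger clusters already lie in $\cR_p$, so that $|U\setminus\cR_p|\le \tfrac{kn}{2^{p-1}}$. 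All query sets are drawn in advance; only the reconstruction step inspects responses and the running set $\cR_p$. As in \Alg{1} we treat $k$ as known.

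Set $\tau:=\log k+\log\log(n/\delta)$. For $p\le\tau$ we use Strategy (1): sample $m_p$ independent uniformly random subsets $S$ of size $\sigma_p:=\lceil 2^{p-1}/k\rceil$ (note $\sigma_p=1$, i.e.\ ordinary pair queries, once $p\le\log k$), and for each such $S$ query $S\cup\{y\}$ for every $y\in U$. Since $x\notin\cR_p$ forces $C(x)\cap\cR_p=\emptyset$, whenever $S\setminus\cR_p=\{x\}$ one has $\query(S\cup\{y\})=\query(S)$ iff $y\in C(x)$, so a single such $S$ recovers $C(x)$ in the reconstruction phase. A short calculation, using $|U\setminus\cR_p|\le kn/2^{p-1}$ and $|C|\ge n/2^p$, shows that for a fixed target cluster $C$ the event ``$S\setminus\cR_p=\{x\}$ and $x\in C$'' has probability $\Omega(2^{-p})$ when $p\le\log k$ and $\Omega(1/k)$ when $\log k<p\le\tau$; since there are at most $\min(2^p,k)$ target clusters at scale $p$, taking $m_p=O(\min(2^p,k)\cdot\log(k/\delta))$ ensures that every target cluster at scale $p$ is recovered except with probability $\le\delta/k$. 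As at most $k$ scales have a nonempty target set, a union bound over scales gives overall failure $O(\delta)$ (rescale $\delta$). The cost of scale $p$ is $m_p\cdot n$; here $\sum_{p\le\log k}n\cdot2^p\log(k/\delta)=O(nk\log(k/\delta))$ (geometric sum), while $\sum_{\log k<p\le\tau}nk\log(k/\delta)=O(nk\log(k/\delta)\log\log(n/\delta))$ because only $\log\log(n/\delta)$ scales lie in that range.

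For $p>\tau$ we use Strategy (2) essentially verbatim from Case~2 of the proof of \Cref{lem:main-analysis-1}: query $|Q_p|=\Theta(nk^2\log(nk/\delta)/2^p)$ independent uniform subsets $T$ of size $\Theta(2^p/k)$, build the graph $G_p$ on $U\setminus\cR_p$ with an edge $\{x,y\}$ whenever some $T\in Q_p$ satisfies $T\setminus\cR_p=\{x,y\}$ and $\query(T)-\query(T\cap\cR_p)=1$ (both quantities are computable at reconstruction time since the clustering on $\cR_p$ is known), and output the connected components of size $\ge n/2^p$. The probability computation in \Cref{lem:main-analysis-1} together with \Cref{fact:random-connected} shows $G_p[C]$ is connected with probability $\ge 1-\delta/k^2$ for each target $C$, hence failure $\le\delta/k$ per scale. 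The total cost here is $\sum_{p>\tau}O(nk^2\log(nk/\delta)/2^p)=O\big(nk^2\log(nk/\delta)\cdot2^{-\tau}\big)=O(nk)$, using $2^\tau=k\log(n/\delta)$ and $\log(nk/\delta)=O(\log(n/\delta))$ (as $k\le n$). Summing the three contributions gives $O(nk\log(k/\delta)\log\log(n/\delta))=O\big(n\log\log(n/\delta)\cdot k\log(k/\delta)\big)$, as claimed.

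The main obstacle is the choice of $\tau$ together with the accounting that makes it pay off: one must see that Strategy (1) need only be run over the dyadic scales in the multiplicative window $[\,n/(k\log(n/\delta)),\,n/k\,)$ of cluster sizes — a mere $\log\log(n/\delta)$ scales beyond $\log k$ — because for smaller clusters Strategy (2)'s per-scale cost has already decayed geometrically below $O(nk)$, while for larger clusters ($p\le\log k$) ``Strategy (1) with singleton sets'' is just pair querying, whose cost $\sum_{p\le\log k}n\cdot2^p\log(k/\delta)$ telescopes to $O(nk\log(k/\delta))$ instead of incurring an extra $\log k$ factor. Everything else — the isolation-probability estimate for Strategy (1) under the conditioning $|U\setminus\cR_p|\le kn/2^{p-1}$, and the union-bound bookkeeping across the $\le k$ relevant scales — is routine and parallels \Cref{lem:main-analysis-1}.
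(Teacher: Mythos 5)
Your proposal is correct and follows essentially the same route as the paper's own proof of \Cref{thm:nloglogn}: split the dyadic scales at a threshold so that medium/large clusters are recovered by isolating a single new point in a random set and classifying every point against it (Strategy (1)), while small clusters are recovered from random pairs via the connectivity of $G_p[C]$ and \Cref{fact:random-connected} (Strategy (2)), with the same conditioning on $\cR_p$, the same isolation/edge-probability estimates, and the same union-bound bookkeeping. The differences are only cosmetic reparameterizations — you index scales by cluster size $n/2^p$ with sets of size about $2^p/k$ and threshold $\log k + \log\log(n/\delta)$, whereas the paper indexes by $n/(2k\cdot 2^p)$ with sets of size $2^p$ and threshold $\log\log(n/\delta)$, and you shave the repetition count to $O(\min(2^p,k)\log(k/\delta))$ at the largest scales — none of which changes the argument or the bound.
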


Note that when $k = O(1)$ and $\delta = 1/\poly(n)$, this algorithm makes $O(n \log \log n)$ queries, improving on \Cref{thm:1} by a $\log \log n$ factor.

\paragraph*{Algorithm.}
A full description of the algorithm is given in pseudocode \Alg{nloglogn}, which is split into two phases: a "query selection phase", which describes how queries are chosen by the algorithm, and a "reconstruction phase" which describes how the algorithm uses the query responses to determine the clustering. Both phases contain a for-loop iterating over all $p \in \{0,1,\ldots,\log n\}$ where the goal of the algorithm during the $p$'th iteration is to learn all remaining clusters of size at least $\frac{n}{2k \cdot 2^p}$. This is accomplished by two different strategies depending on whether $p$ is small or large. See \Cref{sec:tech-2} for an informal description of the algorithm for the case of $k=3$.

When $p \leq \log \log (n/\delta)$, the algorithm samples $O(k \log (k/\delta))$ random sets $T$ formed by $2^p$ samples from $U$ and makes a query on $T$ and $T \cup \{x\}$ for every $x \in U$ (see lines 5-9 of \Alg{nloglogn}). 
Let $\cR_p$ be the union of all clusters reconstructed before phase $p$ (i.e., clusters of size at least $\frac{n}{2k\cdot2^{p-1}}$). 
If such a $T$ contains exactly one point $z \in T \setminus \cR_p$ belonging to an unrecovered cluster, then we can use these queries to learn the cluster containing $z$ (see lines 24-28 of \Alg{nloglogn}), since for $x \in U \setminus \cR_p$, $\query(T) = \query(T \cup \{x\})$ if and only if $x,z$ belong to the same cluster. Moreover, we show that this occurs with probability $\Omega(1)$ and repeat this $O(k \log (k/\delta))$ times to ensure that all clusters with size in the interval $[\frac{n}{2k \cdot 2^{p}}, \frac{n}{2k \cdot 2^{p-1}})$ are learned with probability $1-\delta$. 

When $p > \log \log (n/\delta)$, the algorithm queries $O(nk \cdot \frac{\log (n/\delta)}{2^p})$ random sets $T$ again formed by $2^p$ samples from $U$ (see lines 11-14 of \Alg{nloglogn}). If $T$ contains exactly two points $x,y \in T \setminus \cR_p$ belonging to unrecovered clusters, then we can use the fact that we already know the clustering on $\cR_p$ to tell whether or not $x,y$ belong to the same cluster or not, i.e. we can compute $\query(\{x,y\}) \in \{1,2\}$ from $\query(T)$. We then consider the set of all such pairs where $\query(\{x,y\}) = 1$ (this is $Q_p''$ defined in line 34) and consider the graph $G$ with this edge set, and vertex set $U \setminus \cR_p$, the set of points whose cluster hasn't yet been determined. If two points belong to the same connected component in this graph, then they belong to the same cluster. Thus, the analysis for this iteration boils down to showing that with high probability, the induced subgraph $G[C]$ will be connected for every $C$ where $|C| \in [\frac{n}{2k \cdot 2^{p}}, \frac{n}{2k \cdot 2^{p-1}})$. This is accomplished by applying a basic fact from the theory of random graphs, namely \Cref{fact:random-connected}.

\paragraph*{Analysis.} The following \Cref{lem:main-analysis} establishes that after the first $p$ iterations of the algorithm's query selection and reconstruction phases, all clusters of size at least $\frac{n}{2k \cdot 2^p}$ have been learned with high probability. This is the main technical component of the proof. After stating the lemma we show it easily implies that \Alg{nloglogn} succeeds with probability at least $1-\delta$ by an appropriate union bound.

\begin{lemma} \label{lem:main-analysis} For each $p = 0,1,\ldots,\log n$, let $\cE_p$ denote the event that all clusters of size at least $\frac{n}{2k \cdot 2^p}$ have been successfully recovered immediately following iteration $p$ of \Alg{nloglogn}. Then, 
\[
\Pr[\neg \cE_0] \leq \delta/k ~ \text{ and } ~\Pr[\neg \cE_p ~|~ \cE_{p-1}] \leq \delta/k~ \text{ for all } p \in \{1,2\ldots,\log n\} \text{.}
\]
\end{lemma}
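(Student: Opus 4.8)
The plan is to mirror almost exactly the structure of the proof of \Cref{lem:main-analysis-1}, splitting into two cases according to whether $p$ is small or large, and in each case bounding the probability that a cluster with size in the dyadic window $I_p := [\tfrac{n}{2k\cdot 2^p}, \tfrac{n}{2k\cdot 2^{p-1}})$ fails to be recovered by $\delta/k^2$, so that a union bound over the at most $k$ clusters in this window yields the claimed $\delta/k$. The base case $\neg\cE_0$ is handled by the $p=0$ instance of Case~1 (clusters of size at least $\tfrac{n}{2k}$), so it suffices to prove the conditional bound. Throughout, condition on $\cE_{p-1}$, so all clusters of size at least $\tfrac{n}{2k\cdot 2^{p-1}}$ have been recovered, $\cC_p$ is their collection, $\cR_p = \bigcup_{C\in\cC_p}C$ is known, and hence $|U\setminus\cR_p|\le k\cdot \tfrac{n}{2k\cdot 2^{p-1}} = \tfrac{n}{2^p}$; this last bound is what makes both strategies cheap, and it is the analogue of the $|U\setminus\cR_p|\le nk/2^{p-1}$ bound used in \Cref{lem:main-analysis-1}.

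\textbf{Case 1 ($0 \le p \le \log\log(n/\delta)$, ``Strategy 1'').} Fix a cluster $C$ with $|C|\in I_p$ and a sampled set $T$ formed by $2^p$ uniform samples. The good event is that $T$ hits exactly one point of $U\setminus\cR_p$ and that point lies in $C$; more precisely it suffices that $|T\cap C| = 1$ and $T\cap(U\setminus\cR_p\setminus C)=\emptyset$. Since $|C|/n \le \tfrac{1}{2k\cdot 2^{p-1}}$ and $|U\setminus\cR_p|/n\le \tfrac{1}{2^p}$, this probability is at least
\[
2^p\cdot \frac{|C|}{n}\cdot\Big(1-\frac{1}{2^p}\Big)^{2^p-1} \ge 2^p\cdot\frac{1}{2k\cdot 2^p}\cdot\frac{1}{e} \ge \frac{1}{2ek}\text{,}
\]
using $|C|\ge \tfrac{n}{2k\cdot 2^p}$. (One must double-check the lower end $p=0$ separately, where the factor is a constant.) Because the algorithm samples $\Theta(k\log(k/\delta))$ independent such $T$'s, the probability that none of them realizes the good event for this particular $C$ is at most $(1-\tfrac{1}{2ek})^{\Theta(k\log(k/\delta))} \le \delta/k^2$. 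When the good event holds for $T$, lines 24--28 correctly reconstruct $C$ via the identity $\query(T)=\query(T\cup\{x\})$ iff $x$ and $z$ are co-clustered (valid for $x\in U\setminus\cR_p$); a union bound over the $\le k$ clusters in $I_p$ finishes this case.

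\textbf{Case 2 ($p > \log\log(n/\delta)$, ``Strategy 2'').} Fix $C$ with $|C|\in I_p$ and two points $x,y\in C$. The relevant event is $T\setminus\cR_p = \{x,y\}$, i.e.\ both $x,y$ are sampled into $T$ and no other sampled point lies outside $\cR_p$. Since $|T|=2^p$ and $|U\setminus\cR_p|/n\le \tfrac{1}{2^p}$, this has probability at least ${2^p\choose 2}\cdot\tfrac{1}{n^2}\cdot(1-\tfrac{1}{2^p})^{2^p-2}\ge \Omega(\tfrac{2^{2p}}{n^2})$. With $|Q_p| = \Theta(nk\cdot 2^{-p}\log(n/\delta))$ such sets, the probability that $(x,y)$ becomes an edge of $G_p[C]$ is at least $1-\exp(-\Omega(\tfrac{2^p}{n}\log(n/\delta)))$; invoking $|C|\ge \tfrac{n}{2k\cdot 2^p}$, i.e.\ $\tfrac{2^p}{n}\ge \tfrac{1}{2k|C|}$, this is at least $1-(\tfrac{\delta}{3k^2|C|})^{2/|C|}$ after choosing the constant in $|Q_p|$ appropriately (one needs $\log(n/\delta)\ge \log(3k^2|C|/\delta)$ up to constants, which holds since $|C|\le n$ and, crucially, $p>\log\log(n/\delta)$ so the per-iteration cost is bounded). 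Then \Cref{fact:random-connected} with $N = |C|$ and $\alpha=\delta/k^2$ shows $G_p[C]$ is connected with probability $\ge 1-\delta/k^2$; when this holds, lines 34--38 recover $C$ as a connected component of size $\ge \tfrac{n}{2k\cdot 2^p}$. A union bound over the $\le k$ clusters in $I_p$ gives $\Pr[\neg\cE_p\mid\cE_{p-1}]\le\delta/k$.

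\textbf{Main obstacle.} The routine calculations are standard; the one point deserving care is verifying that the condition $p > \log\log(n/\delta)$ in Case~2 (and the symmetric truncation $p \le \log\log(n/\delta)$ in Case~1) is exactly what keeps both the query budget near-linear \emph{and} the failure probability summable --- in particular, that the argument $1-(\tfrac{\delta}{3k^2|C|})^{2/|C|}$ is still valid at the boundary $|C|\approx \tfrac{n}{2k\cdot 2^p}$ with $2^p\approx \log(n/\delta)$, which forces one to track the $k$-dependence carefully since here clusters can be as large as $\tfrac{n}{2k}$ rather than $\tfrac{n}{2}$. The rest is a direct transcription of the proof of \Cref{lem:main-analysis-1} with the window $[\tfrac{n}{2^p},\tfrac{n}{2^{p-1}})$ replaced by $[\tfrac{n}{2k\cdot 2^p},\tfrac{n}{2k\cdot 2^{p-1}})$ and the set sizes adjusted from $2^p/k$ to $2^p$.
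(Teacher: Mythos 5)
Your proposal is correct and follows essentially the same route as the paper: the same conditioning on $\cE_{p-1}$ to get $|U\setminus\cR_p|\le n/2^p$, the same $\Omega(1/k)$ bound for the ``single new point'' event in the small-$p$ regime, and the same random-pair/connectivity argument via \Cref{fact:random-connected} for $p>\log\log(n/\delta)$, with the $k$-dependence cancelled exactly as in the paper. The only cosmetic difference is that the paper treats $p=0$ (singleton samples) as a separate third case rather than folding it into Case~1, which matches the boundary check you flagged; also note the factor $k$ you carry in $|Q_p|$ should appear inside the exponent $\exp(-\Omega(\tfrac{2^p}{n}k\log(n/\delta)))$ before invoking $2^p/n\ge 1/(2k|C|)$, as in the paper's calculation.
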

\paragraph{Proof of \Cref{thm:nloglogn}:} Before proving \Cref{lem:main-analysis}, we first observe that it immediately implies the correctness of \Alg{nloglogn} and thus proves \Cref{thm:nloglogn}. Let $I_0 = (\frac{n}{2k},n]$ and for $1 \leq p \leq \log n$, let $I_p = [\frac{n}{2k \cdot 2^{p}},\frac{n}{2k \cdot 2^{p-1}})$. If there are no clusters $C$ for which $|C| \in I_p$, then trivially $\Pr[\neg \cE_p ~|~ \cE_{p-1}] = 0$, and otherwise $\Pr[\neg \cE_p ~|~ \cE_{p-1}] \leq \delta/k$ by the lemma. Since there are $k$ clusters, clearly there are at most $k$ values of $p$ for which there exists a cluster with size in the interval $I_p$. Using this observation and a union bound, we have
\[
\Pr[\neg \cE_{\log n}] \leq \Pr[\neg \cE_0] + \sum_{p=1}^{\log n} \Pr[\neg \cE_p ~|~ \cE_{p-1}] \leq \delta
\]
which completes the proof of correctness since the algorithm succeeds iff $\cE_{\log n}$ occurs. 

\paragraph{Query complexity:} During iterations $p < \log\log (n/\delta)$ the algorithm makes at most $O(n \log \log (n/\delta) \cdot k \log (k/\delta))$ queries. During iterations $p > \log\log (n/\delta)$, it makes at most $O(nk \log (n/\delta)) \sum_{p > \log\log (n/\delta)} 2^{-p} = O(nk)$ queries since $k \leq n$. This completes the proof of \Cref{thm:nloglogn}. \qed \smallskip

\begin{algorithm}
\caption{Improved Algorithm for Small $k$ with Unrestricted Query Size} \label{alg:nloglogn}
\textbf{Input:} Subset query access to a hidden partition $C_1 \sqcup \cdots \sqcup C_k = U$ of $|U| = n$ points\;
\emph{(Query Selection Phase)} \\
\For{$p = 0,1,\ldots,\log n$} {
Initialize $Q_p \gets \emptyset$\;
\If{$p \leq \log \log (n/\delta)$} { 
\textbf{Repeat} $2e k \ln (k^2/\delta)$ times\;
$\longrightarrow$ Sample $T \subseteq U$ formed by $2^p$ independent uniform samples from $U$\; 
$\longrightarrow$ \textbf{Query} $T$ and $T \cup \{x\}$ for all $x \in U$\; 
$\longrightarrow$ Add $T$ to $Q_p$\;
}
\If{$p > \log\log (n/\delta)$} {
\textbf{Repeat} $\frac{40 nk \ln (3nk^2/\delta)}{2^p}$ times\;
$\longrightarrow$ Sample $T \subseteq U$ formed by $2^p$ independent uniform samples from $U$\;
$\longrightarrow$ \textbf{Query} $T$ and add it to $Q_p$\;
}
}
\emph{(Reconstruction Phase)} \\
Initialize learned cluster set $\cC_0 \gets \emptyset$\;
\For{$p = 0,1,\ldots,\log n$} {
Let $\cC_p$ denote the collection of clusters reconstructed before iteration $p$\;
Let $\cR_p = \bigcup_{C \in \cC_p} C$ denote the points belonging to these clusters\;
Initialize $\cC_{p+1} \gets \cC_p$\;
\If{$p \leq \log\log (n/\delta)$} {
\For{$T \in Q_p$} { 
\If{$|T \setminus \cR_p| = 1$} {
    Let $z$ denote the unique point in $T \setminus \cR_p$\;
    If $x \in U \setminus \cR_p$, then $\query(T) = \query(T \cup \{x\})$ iff $x,z$ are in the same cluster\;
    Thus, we add $\{x \in U \setminus \cR_p \colon \query(T) = \query(T \cup \{x\})\}$ to $\cC_{p+1}$\;
    
    }
}
}
\If{$p > \log\log (n/\delta)$} {
Let $Q_p' = \{T \setminus \cR_p \colon T \in Q_p \text{ and } |T \setminus \cR_p| = 2\}$. Since each $T \in Q_p$ is a uniform random set, the elements of $Q_p'$ are uniform random pairs in $U \setminus \cR_p$\;
Let $Q_p'' = \{\{x,y\} \in Q_p' \colon \query(\{x,y\} = 1)\}$ denote the set of pairs in $Q_p'$ where both points lie in the same cluster. This set can be computed since $\query(T \setminus \cR_p) = \query(T) - \query(T \cap \cR_p)$ and $\query(T \cap \cR_p)$ is known since at this point we have reconstructed the clustering on $\cR_p$\;
Let $G_p$ denote the graph with vertex set $U \setminus \cR_p$ and edge set $Q_p''$\;
Let $\widetilde{C}_1,\ldots,\widetilde{C}_{\ell}$ denote the connected components of $G_p$ with size at least $\frac{n}{2k \cdot 2^{p}}$\; 
Add $\widetilde{C}_1,\ldots,\widetilde{C}_{\ell}$ to $\cC_{p+1}$\;
}
}
\textbf{Output} clustering $\cC_{\log n + 1}$
\end{algorithm}

\begin{proof} \emph{of \Cref{lem:main-analysis}.} Let $\cC_p$ denote the set of clusters recovered before phase $p$ and let $\cR_p = \bigcup_{C \in \cC_p} C$. When $p = 0$, both of these sets are empty. We will consider three cases depending on the value of $p$. \smallskip

\noindent \textbf{Case 1:} $p = 0$. Let $C$ denote some cluster of size $|C| \geq \frac{n}{2k}$. Note that in this iteration the sets $T$ sampled by the algorithm in line (7) are singletons. We need to argue that one of these singletons will land in $C$, and thus $C$ is recovered in line (28), with probability at least $1-\delta/k^2$. Since there are at most $k$ clusters, applying a union bound completes the proof in this case.

A uniform random element lands in $C$ with probability at least $1/2k$ and so this fails to occur for all $|Q_0| \geq 2k\ln (k^2/\delta)$ samples with probability at most $(1-\frac{1}{2k})^{2k \ln (k^2/\delta)} \leq \exp(- \ln (k^2/\delta)) = \delta/k^2$, as claimed. \smallskip

\noindent \textbf{Case 2:} $1 \leq p \leq \log \log (n/\delta)$. Let $C$ denote some cluster with size $|C| \in [\frac{n}{2k \cdot 2^{p}},\frac{n}{2k \cdot 2^{p-1}})$. Note that we are conditioning on the event that every cluster of size $\geq \frac{n}{2k \cdot 2^{p-1}}$ has already been successfully recovered after iteration $p-1$. Thus, the number of elements belonging to unrecovered clusters is $|U \setminus \cR_p| \leq k \cdot \frac{n}{2k \cdot 2^{p-1}} = \frac{n}{2^p}$. We need to argue that the set $Q_p$ will contain some $T$ sampled in line (7) such that $T \setminus \cR_p = \{z\}$ where $z \in C$, and thus $C$ is successfully recovered in line (28), with probability at least $1-\delta/k^2$. Once this is established, the lemma again follows by a union bound. We have
\[
\Pr_{T \colon |T| = 2^p}[|T \setminus \cR_p| = 1 \text{ and } T \setminus \cR_p \subseteq C] = |T| \cdot \frac{|C|}{n} \cdot \left(\frac{|\cR_p|}{n}\right)^{|T|-1} \geq \frac{2^p}{k \cdot 2^{p+1}} \left(1-\frac{1}{2^{p}}\right)^{2^{p}} \geq \frac{1}{2e k}
\]
and so the probability that this occurs for some $T \in Q_p$ is at least $1 - (1-\frac{1}{2ek})^{2ek \ln (k^2/\delta)} \geq 1 - \delta/k^2$, as claimed. 

\noindent \textbf{Case 3: } $p > \log \log (n/\delta)$. Let $C$ denote some cluster with size $|C| \in [\frac{n}{2k \cdot 2^{p}},\frac{n}{2k \cdot 2^{p-1}})$. Note that $|U \setminus \cR_p| \leq k \cdot \frac{n}{2k \cdot 2^{p-1}} = \frac{n}{2^{p}}$. Recall from lines (34-35) the definition of $Q_p''$ and recall that $G_p$ is the graph with vertex set $U \setminus \cR_p$ and edge set $Q_p''$. We need to argue that the induced subgraph $G_p[C]$ is connected, and thus $C$ is successfully recovered in lines (36-37), with probability at least $1-\delta/k^2$. Once this is established, the lemma again follows by a union bound. We again rely on the following standard fact. 

\randomconnected*

Consider any $x,y \in C$ and observe that 
\begin{align} 
    \Pr_{T \colon |T| = 2^p}[T \setminus \cR_p = \{x,y\}] = {2^p \choose 2} \cdot \frac{1}{n^2} \cdot \left(\frac{|\cR_p|}{n}\right)^{2^p-2} \geq \frac{2^{2p}}{3n^2}  \left(1-\frac{1}{2^{p}}\right)^{2^p} \geq \frac{2^{2p}}{10n^2} \text{.} \nonumber
\end{align} 
Recall that the algorithm queries $|Q_p| = \frac{40 \cdot nk \ln (3nk^2/\delta)}{2^p}$ random sets $T$ of size $2^p$. Thus,
\begin{align}
    \Pr_{Q_p}\left[(x,y) \in E(G_p[C])\right] &= \Pr_{Q_p}\left[\{x,y\} \in Q_p''\right] = \Pr_{Q_p}\left[\exists T \in Q_p \colon T \setminus \cR_p = \{x,y\}\right] \nonumber \\
    &\geq 1-\left(1-\frac{2^{2p}}{10n^2}\right)^{40\frac{n}{2^p} \cdot k\ln (3 nk^2/\delta)} \nonumber \\
    &\geq 1-\exp\left(-\frac{2^p}{n} \cdot 4k\ln (3nk^2/\delta)\right) \nonumber
\end{align}
and using $|C| \geq \frac{n}{2k \cdot 2^{p}}$ and $|C| \leq n$, we obtain
\begin{align}
   \Pr_{Q_p}\left[(x,y) \in E(G_p[C])\right] &\geq 1-\exp\left(-\frac{2\ln (3 nk^2/\delta)}{|C|}\right) \nonumber \\
   &\geq 1-\exp\left(-\frac{2\ln (3 k^2|C| /\delta)}{|C|}\right) = 1-\left(\frac{\delta}{3k^2|C|}\right)^{\frac{2}{|C|}} \text{.} \nonumber 
\end{align}
Thus, $(x,y)$ is an edge in $G_p[C]$ with probability at least $1-\left(\frac{\delta}{3k^2|C|}\right)^{\frac{2}{|C|}}$ and so by \Cref{fact:random-connected} $G_p[C]$ is connected with probability at least $1-\delta/k^2$, as claimed. This completes the proof of \Cref{lem:main-analysis}. \end{proof}

\section{Clustering with Subset Queries of Size at Most $s$} \label{sec:bounded}

In addition to query complexity, we also view the \emph{size} of queries as an interesting computational resource to consider. In this section, we design algorithms that use subset queries of size only at most some fixed $s$, where $2 \leq s \leq n$. We obtain algorithms\footnote{We remark that since the preliminary conference version of this work these results have been subsumed by \cite{DBLP:conf/colt/BlackMS25}, who discovered a nearly-optimal $\widetilde{O}(n^2/s^2)$ query algorithm for $2 \leq s \leq \sqrt{n}$.} making $\widetilde{O}(n^2/s)$ queries in \Cref{sec:bounded-1} and $\widetilde{O}(n^2k/s^2)$ queries (for $s \leq \sqrt{n}$) in \Cref{sec:bounded-2}.

\subsection{An $\widetilde{O}(n^2/s)$ Query Algorithm via Group Testing} \label{sec:bounded-1}

In this section we describe the following alternative non-adaptive algorithm using a simple group testing primitive. We refer the reader to \Cref{sec:tech-2} for an informal description of the idea. 

\begin{theorem} \label{thm:bounded-1} For every $s \in [2,n]$, there is a non-adaptive algorithm for $k$-clustering which for any $\delta > 0$, uses 
\[
O\left(n \log (n/\delta) \log (k/\delta) \cdot \left(\frac{n}{s} + \log s\right)\right)
\] 
subset queries of size at most $s$ and succeeds with probability $1-\delta$. In particular, letting $\delta = 1/\poly(k)$, for unbounded query size the algorithm makes $O(n \log^2 n \log k)$ queries, and when $s = O(n / \log n)$ the algorithm makes $O(\frac{n^2}{s}\log n \log k)$ queries. \end{theorem}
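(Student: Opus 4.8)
The plan is to combine the support‑recovery group testing primitive (\Cref{lem:t-ER}, used through \Cref{cor:t-ER-cor}) with a dyadic bucketing of clusters by size and random sampling of ``representative'' points. For each scale $p \in \{1,\ldots,\log n\}$ the algorithm tries to learn every cluster whose size lies in the band $[n/2^p, n/2^{p-1})$: it samples $m_p = \Theta(2^p \log(k/\delta))$ i.i.d.\ uniform points $x \in U$, and for each of them runs the procedure of \Cref{cor:t-ER-cor} with threshold $t_p \asymp n/2^{p-1}$ and error parameter $\alpha := \delta/\poly(n)$, discarding the runs that certify $|C(x)|>t_p$ and keeping the clusters that are returned. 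All $\sum_p m_p$ invocations are fixed in advance, so the algorithm is non‑adaptive, and in the reconstruction phase it simply outputs the union of the collected clusters. (If $k$ is unknown, one may replace $\log(k/\delta)$ by $\log(n/\delta)$ everywhere, since $k\le n$; this only weakens the bound in the stated way.) To respect the size bound $s$, I would first extend the primitive: \Cref{lem:t-ER} uses $\mathsf{OR}$ queries on sets of size $\ceil{n/t}$, and if $\ceil{n/t} \ge s$ each such query on a set $S$ can be replaced by the OR of $\ceil{\ceil{n/t}/(s-1)}$ $\mathsf{OR}$ queries on the blocks of a partition of $S$ into pieces of size $\le s-1$ (since $C(x)$ meets $S$ iff it meets some block); passing through \Cref{obs:subset->OR} then turns each of these into two subset queries of size at most $s$. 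Thus for threshold $t$ the primitive runs with $O\!\big(t\log(n/\alpha)\cdot\ceil{(n/t)/s}\big)$ subset queries of size $\le s$, which is $O(t\log(n/\alpha))$ when $n/t\le s$ and $O(\tfrac ns\log(n/\alpha))$ when $n/t>s$.

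For correctness, fix a true cluster $C$ and let $p$ be the scale with $|C|\in[n/2^p,n/2^{p-1})$. A uniform point lands in $C$ with probability $|C|/n\ge 2^{-p}$, so with $m_p=\Theta(2^p\log(k/\delta))$ samples the probability that none hits $C$ is $(1-2^{-p})^{m_p}\le \delta/(2k)$; conditioned on some sampled $x\in C$, we have $|C|=|C(x)|< n/2^{p-1}\le t_p$, so \Cref{cor:t-ER-cor} returns $C(x)=C$ except with probability $\alpha$. Choosing $\alpha=\delta/\Theta(\sum_p m_p)=\delta/\poly(n)$ (note $\sum_p m_p=O(n\log(k/\delta))$), a union bound over all invocations makes every recovery correct except with probability $\delta/2$, and a union bound over the $\le k$ true clusters handles the hitting events with probability $\delta/2$. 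Since the primitive never outputs a spurious cluster, on the good event the collected clusters are exactly the partition. The same choice of $\alpha$ keeps $\log(n/\alpha)=O(\log(n/\delta))$.

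For the query complexity, scale $p$ makes $m_p$ runs of the primitive with $t=t_p\asymp n/2^{p-1}$, each costing $O(t_p\log(n/\alpha))=O(\tfrac{n}{2^{p-1}}\log(n/\delta))$ queries if $2^{p-1}\le s$ and $O(\tfrac ns\log(n/\delta))$ queries if $2^{p-1}>s$; multiplying by $m_p$ gives a per‑scale cost of $O(n\log(n/\delta)\log(k/\delta))$ in the first regime and $O(\tfrac{2^p n}{s}\log(n/\delta)\log(k/\delta))$ in the second. Summing the first regime over the $O(\log s)$ scales with $2^{p-1}\le s$ gives $O(n\log s\,\log(n/\delta)\log(k/\delta))$; the second regime is geometric in $2^p$ and dominated by $p=\log n$, giving $O(\tfrac{n^2}{s}\log(n/\delta)\log(k/\delta))$. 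Adding the two yields $O\!\big(n\log(n/\delta)\log(k/\delta)(n/s+\log s)\big)$. The ``in particular'' claims follow by taking $\delta=1/\poly(k)$ and substituting $s=n$ (so $n/s+\log s=O(\log n)$) and $s=O(n/\log n)$ (so $n/s\ge\log s$, hence $n/s+\log s=O(n/s)$).

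I expect the main obstacle to be the bookkeeping rather than any conceptual difficulty: carefully tracking query sizes through the \Cref{obs:subset->OR} simulation (the extra $+1$, and the block‑splitting when $n/t>s$), and simultaneously tuning $m_p$ and $\alpha$ so that the combined union bound over hitting failures and per‑invocation recovery failures is at most $\delta$ while $\log(n/\alpha)$ stays $O(\log(n/\delta))$. A minor additional point is the dependence on $k$ in $m_p$ when $k$ is not known a priori, which is handled by the substitution noted above.
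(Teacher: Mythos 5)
Your proposal is correct and takes essentially the same route as the paper's proof: the same dyadic bucketing of cluster sizes with $\Theta(2^p\log(k/\delta))$ sampled representative points per scale, the same group-testing primitive (\Cref{lem:t-ER}, extended to size-$s$ queries by block splitting as in \Cref{obs:s-blocks} and ported via \Cref{obs:subset->OR}), and the same per-scale query accounting giving the $n/s+\log s$ factor. The only difference is minor bookkeeping: the paper runs the primitive with $\alpha=\delta/2k$ and argues per cluster, whereas you take $\alpha=\delta/\poly(n)$ and union bound over all invocations (which also rules out spurious outputs explicitly); both yield the stated bound since $\log(n/\alpha)=O(\log(n/\delta))$.
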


\begin{proof} The main subroutine in our algorithm is a procedure for recovering the support of a Boolean vector via $\mathsf{OR}$ queries. Given a vector $v \in \{0,1\}^n$, an $\mathsf{OR}$ query on a set $S \subseteq [n]$ returns $\mathsf{OR}_S(v) = \bigvee_{i \in S} v_i$, i.e. it returns $1$ iff $v$ has a $1$-valued coordinate in $S$. The main group testing primitive we use is due to the following lemma, which we prove in \Cref{sec:supp-recovery}.

\tER*

Moreover, we make the following observation which allows us to obtain a group testing primitive which queries only sets of size at most $s'$.

\begin{observation} \label{obs:s-blocks} A single $\mathsf{OR}$ query on a set $S$ can be simulated by $\ceil{|S|/s'}$ queries of size at most $s'$. \end{observation}

\begin{proof} Let $v \in \{0,1\}$ and $S \subseteq [n]$. Partition $S$ arbitrarily into $\ell = \ceil{|S|/s'}$ sets $S_1,\ldots,S_{\ell}$ of size at most $s'$. Then, $\mathsf{OR}_S(v) = \mathsf{OR}_{S_1}(v) \vee \cdots \vee \mathsf{OR}_{S_{\ell}}(v)$. Thus, the $\mathsf{OR}$ query on $S$ can be simulated by $\mathsf{OR}$ queries on $S_1,\ldots,S_{\ell}$. \end{proof}

Combining this observation with \Cref{lem:t-ER} gives us the following group testing primitive using bounded $\mathsf{OR}$ queries. 

\begin{restatable}{lemma}{tERbounded} \label{lem:t-ER-bounded-1} Let $v\in \{0,1\}^n$ and $s',t \geq 1$ be positive integers where $s' \leq \ceil{n/t}$. There is a non-adaptive algorithm that makes $O(\frac{n}{s'} \log (n/\alpha))$ $\mathsf{OR}$ queries on subsets of size at most $s'$, and if $|\supp(v)| \le t$, returns $\supp(v)$ with probability $1-\alpha$, and otherwise certifies that $|\supp(v)| > t$ with probability $1$. \end{restatable}

\begin{proof} \Cref{lem:t-ER} gives an algorithm making $O(t \log (n/\alpha))$ queries on subsets of size $\ceil{n/t}$ and by \Cref{obs:s-blocks}, each such query can be simulated by $O(n/ts')$ queries of size at most $s'$. Thus, we obtain an algorithm making 
\[
O\left(\frac{n}{ts'} \cdot t \log (n/\alpha)\right) = O\left(\frac{n}{s'} \log (n/\alpha)\right)
\]
queries of size at most $s'$. \end{proof}

The utility of the lemma for our problem is illuminated by the following observation.

\subsetOR*

Combining \Cref{lem:t-ER-bounded-1} and \Cref{obs:subset->OR} immediately yields the following corollary which serves as the main subroutine in our algorithm.

\begin{corollary} \label{cor:cluster-recovery} Let $x \in C$ and $r \geq 2, t \geq 1$ be positive integers where $s' \leq \frac{n}{t}$. Let $\alpha \in (0,1)$. There is a non-adaptive algorithm, $\mathsf{FIND}$-$\mathsf{CLUST}(x,t,s',\alpha)$ that makes $O(\frac{n}{s'} \log (n/\alpha))$ \emph{subset}-queries on sets of size at most $s'$, and if $|C| \leq t$, returns $C$ with probability $1-\alpha$, and otherwise certifies that $|C| > t$ with probability $1$. 
\end{corollary}

\paragraph*{Algorithm.} The pseudocode for the algorithm is given in \Alg{general-k}. Given $x$, we use the notation $C(x)$ to denote the cluster containing $x$.  The idea is to draw random points $x \in U$  (line 5) and then use the procedure $\mathsf{FIND}$-$\mathsf{CLUST}(x,\cdot,\cdot,\cdot)$ of \Cref{cor:cluster-recovery} as a subroutine to try to learn $C(x)$ (line 7). By the corollary, this will succeed with high probability in recovering $C(x)$ as long as $t$ is set to something larger than $|C(x)|$. Note that the query complexity of this subroutine depends\footnote{For intuition, if the subroutine is called with $s' = O(n/t)$, then \Cref{cor:cluster-recovery} makes $O(t \log (n/\alpha))$ queries.} on $t$. If a cluster $C$ is small, then $\Pr[x \in C]$ is small, but we can call the subroutine with small $t$, while if $C(x)$ is large, then $\Pr[x \in C]$ is reasonably large, though we will need to call the subroutine with larger $t$. Concretely, the algorithm iterates over every $p \in \{1,\ldots,\log n\}$ (line 3), and in iteration $p$ the goal is to learn every cluster $C$ with $|C| \in [\frac{n}{2^p},\frac{n}{2^{p-1}}]$. To accomplish this, we sample $\Theta(2^p \log (k/\delta))$ random points $x \in U$ (line 4-5) and for each one, call the subroutine with $t = \frac{n}{2^{p-1}}$ (line 6-7), which is an upper bound on the sizes of the clusters we are trying to learn. Note that we always invoke the corollary with query size $s' = \min(s,2^{p-1}) \leq s$, enforcing the query size bound stated in \Cref{thm:bounded-1}.

\begin{algorithm}
\caption{Bounded Query Algorithm using Group Testing}\label{alg:general-k}
\textbf{Input:} Subset query access to a hidden partition $C_1 \sqcup \cdots \sqcup C_k = U$ of $|U| = n$ points\;
Initialize hypothesis clustering $\cC \gets \emptyset$\;
\For{$p = 1,\ldots, \log n$}{
    \textbf{Repeat} $2^p \ln (2k/\delta)$ times: \\ 
    $\longrightarrow$ Sample $x \in U$ uniformly at random\;
    $\longrightarrow$ Set $t \gets \frac{n}{2^{p-1}}$, $s' \gets \min(s,2^{p-1})$, and $\alpha \gets \frac{\delta}{2k}$\;
    $\longrightarrow$ Run the procedure $\mathsf{FIND}$-$\mathsf{CLUST}(x,t,s',\alpha)$ of \Cref{cor:cluster-recovery}. (This outputs $C(x)$, the cluster containing $x$, with probability at least $1-\alpha$ if $|C(x)| \leq t$.)\;
    $\longrightarrow$ If the procedure returns a set $C$, then set $\cC \gets \cC \cup \{C\}$. Otherwise, continue\;
}
\textbf{Output} the clustering $\cC$.
\end{algorithm}

\paragraph{Query complexity:} When $2^{p-1} \geq s$, the number of queries made in line (7) during the $p$'th iteration is $O(\frac{n}{s}\log (n/\delta))$. When $2^{p-1} < s$, the number of queries is $O(\frac{n}{2^p} \log (n/\delta))$. Therefore, the total number of queries made is at most 
\[
O(\log (k/\delta)) \left( \sum_{p \colon 2^{p-1} < s} O\Big(2^p \cdot \frac{n}{2^p} \log (n/\delta)\Big) + \sum_{p \colon s \leq 2^{p-1}} O\Big(2^p \cdot \frac{n}{s}\log (n/\delta)\Big) \right) \text{.}
\]
The first sum is bounded by $O(n \log (n/\delta) \log s)$. The second sum is bounded by $O(\frac{n^2}{s} \log (n/\delta))$ since $\sum_{p = 1}^{\log n} 2^p = O(n)$. 


\paragraph{Correctness:} Consider any cluster $C$ and let $p \in \{1,\ldots,\log n\}$ be such that $\frac{n}{2^{p}} \leq |C| \leq \frac{n}{2^{p-1}}$. Let $\cE_C$ denote the event that some element $x \in C$ is sampled in line (5) during iteration $p$. Let $\cR_C$ denote the event that $C \in \cC$ when the algorithm terminates. Observe that by \Cref{cor:cluster-recovery}, $\Pr[\cR_C ~|~ \cE_C] \geq 1-\alpha = 1-\delta/2k$. Moreover, using our lower bound on $C$ we have
\[
\Pr[\neg \cE_C] \leq \left(1-\frac{|C|}{n}\right)^{2^p \ln (2k/\delta)} \leq \left(1-\frac{1}{2^p}\right)^{2^p \ln (2k/\delta)} \leq \frac{\delta}{2k} \text{.}
\]
Thus, $\Pr[\neg \cR_C] \leq \Pr[\neg \cE_C] + \Pr[\neg \cR_C ~|~ \cE_C] \leq \delta/k$ by a union bound. Taking another union bound over all $k$ clusters  completes the proof. \end{proof}

\subsection{An $\widetilde{O}(n^2 k/s^2)$ Algorithm} \label{sec:bounded-2}

In this section we present an algorithm which for constant $k$ matches the lower bound of \Cref{thm:3-2-NA-LB} up to a factor of $\log n \log \log n$. In particular, this algorithm has the optimal dependence on $s$. 


\begin{theorem} \label{thm:bounded-2} There is a non-adaptive $k$-clustering algorithm which for any $\delta > 0$ and $2 \leq s \leq O(\sqrt{n})$, uses
\[
O\left(\frac{n^2}{s^2} k \log (n/\delta) \log \log n\right)
\]
subset queries of size at most $s$ and succeeds with probability $1-\delta$. \end{theorem}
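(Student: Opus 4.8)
The plan is to instantiate strategy~(2) of \Cref{sec:tech-2} at every scale simultaneously, choosing the query size at each scale so that the cap $s$ is respected. The (non-adaptive) algorithm runs in phases $p = 1,\dots,\log n$; in phase $p$ it commits to a batch of $N_p$ random sets, each formed by $t_p := \min\bigl(s,\ \max(2,\lfloor 2^{p-1}/(8k)\rfloor)\bigr) \le s$ independent uniform samples from $U$. The reconstruction processes phases in increasing order of $p$, maintaining the invariant that before phase $p$ every cluster of size $\ge n/2^{p-1}$ has been recovered; let $\cR_p$ be the union of the clusters recovered so far. In phase $p$, for each queried $T$ of the batch with $|T \setminus \cR_p| = 2$, say $T \setminus \cR_p = \{x,y\}$, the algorithm computes $\query(T \setminus \cR_p) = \query(T) - \query(T \cap \cR_p)$ — legitimate because a recovered cluster meets $T$ iff it meets $T \cap \cR_p$, an unrecovered cluster is disjoint from $\cR_p$, and $\query(T \cap \cR_p)$ is known once the clustering on $\cR_p$ is reconstructed — and $\query(\{x,y\}) = 1$ exactly when $x,y$ lie in a common cluster. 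It forms the graph $G_p$ on vertex set $U \setminus \cR_p$ whose edges are all such same-cluster pairs, and adds every connected component of $G_p$ of size $\ge n/2^p$ to the list of recovered clusters. An edge of $G_p$ joins only two points of a common cluster, so every component of $G_p$ lies inside a single cluster; hence this step never outputs a spurious cluster, and it recovers exactly the clusters $C$ with $n/2^p \le |C| < n/2^{p-1}$ provided each $G_p[C]$ is connected. (Clusters of size $\ge n/2^{p-1}$ are already in $\cR_p$ and absent from $G_p$; clusters of size $< n/2^p$ have all their $G_p$-components too small to be output, and are handled in a later phase.)

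The correctness argument follows the template of \Cref{lem:main-analysis-1}. Let $\cE_p$ be the event that all clusters of size $\ge n/2^p$ have been recovered after phase $p$; $\cE_0$ holds vacuously (with $\cR_1 = \emptyset$), and since each of the $k$ clusters has its size in exactly one window $[n/2^p,n/2^{p-1})$, it suffices to show $\Pr[\neg\cE_p \mid \cE_{p-1}] \le \delta/k$ for every $p$, whence $\Pr[\neg\cE_{\log n}] \le \delta$ by a union bound over the at most $k$ nonempty windows. Fix such a $p$ and a cluster $C$ with $|C| \in [n/2^p,n/2^{p-1})$. Conditioned on $\cE_{p-1}$, every unrecovered cluster has size $< n/2^{p-1}$, so $|U \setminus \cR_p| < kn/2^{p-1}$, and the bound $t_p \le 2^{p-1}/(8k)$ gives $(|\cR_p|/n)^{t_p-2} \ge (1-k/2^{p-1})^{t_p} \ge 1 - kt_p/2^{p-1} \ge 7/8$ (trivially also when $t_p = 2$, and when $t_p = s \le 2^{p-1}/(8k)$). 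Hence, for fixed $x,y \in C$,
\[
\Pr\bigl[T \setminus \cR_p = \{x,y\}\bigr] \ \ge\ \binom{t_p}{2}\cdot\frac{1}{n^2}\cdot\Bigl(\frac{|\cR_p|}{n}\Bigr)^{t_p-2} \ \ge\ \frac{t_p^2}{8n^2},
\]
so over the $N_p$ queries of phase $p$ the pair $\{x,y\}$ becomes an edge of $G_p[C]$ with probability $\ge 1 - (1 - t_p^2/(8n^2))^{N_p}$. Choosing
\[
N_p \ =\ \Theta\Bigl(\frac{n^2}{t_p^2}\cdot\frac{\log(k|C|/\delta)}{|C|}\Bigr) \ =\ O\Bigl(\frac{n\,2^p\log(kn/\delta)}{t_p^2}\Bigr)
\]
(the second equality uses $n/2^p \le |C| \le n$) makes this probability $\ge 1 - (\delta/3k^2|C|)^{2/|C|}$, which is exactly the hypothesis of \Cref{fact:random-connected} with $N = |C|$ and $\alpha = \delta/k^2$; arguing as in the proof of \Cref{lem:main-analysis-1}, $G_p[C]$ is then connected with probability $\ge 1 - \delta/k^2$, and a union bound over the $\le k$ clusters of this window gives $\Pr[\neg\cE_p \mid \cE_{p-1}] \le \delta/k$. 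Note that a single $N_p$ works for all clusters of the window since $\log(k|C|/\delta)/|C| \le (2^p/n)\log(kn/\delta)$ throughout it.

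It remains to total $\sum_{p=1}^{\log n} N_p$, split by which term attains the minimum defining $t_p$. If $2^{p-1}/(8k) \ge s$ then $t_p = s$, and these phases contribute $\log(kn/\delta)\sum_p n2^p/s^2 = O\bigl((n^2/s^2)\log(kn/\delta)\bigr)$ since $\sum_{p\le\log n}2^p = O(n)$. If $2 < 2^{p-1}/(8k) < s$ then $t_p = \Theta(2^p/k)$, contributing $\log(kn/\delta)\sum_p n2^p/(2^p/k)^2 = \log(kn/\delta)\sum_p nk^2/2^p = O(nk\log(kn/\delta))$, as the smallest such $p$ has $2^p = \Theta(k)$. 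If $2^{p-1}/(8k) \le 2$ then $t_p = 2$ and $2^p = O(k)$ for such $p$, contributing $\log(kn/\delta)\sum_p n2^p = O(nk\log(kn/\delta))$. Summing, the algorithm makes $O\bigl((n^2/s^2 + nk)\log(kn/\delta)\bigr)$ queries; using $s \le \sqrt n$ (so $nk \le n^2k/s^2$) and $k \le n$ this is $O\bigl(\tfrac{n^2 k}{s^2}\log(n/\delta)\bigr)$, which is at most the claimed bound, and every queried set has at most $t_p \le s$ points.

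The step I expect to be the crux is the choice of $t_p$: it must be small enough that the event ``the other $t_p-2$ sampled points all fall in already-recovered clusters'' has constant probability — this is what forces $t_p = O(2^{p-1}/k)$, and is where the inductive conditioning on $\cE_{p-1}$ (to bound $|U\setminus\cR_p|$) enters — yet as large as possible, since $N_p$ scales like $t_p^{-2}$, and never larger than $s$. Balancing these three pressures, and gracefully degrading to $t_p = 2$ (ordinary pair queries) at the coarsest scales where $2^{p-1}/(8k)$ drops below $2$ — for which the geometric sums still close because only $O(k)$ points remain unrecovered — is the only delicate part; the rest is a direct application of \Cref{fact:random-connected} together with geometric-series bookkeeping.
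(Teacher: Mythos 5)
Your proposal is correct, and while it runs on the same engine as the paper's proof --- peel off clusters scale by scale, turn each queried set $T$ with $|T\setminus\cR_p|=2$ into a pair query via $\query(T\setminus\cR_p)=\query(T)-\query(T\cap\cR_p)$, and invoke \Cref{fact:random-connected} to get connectivity of $G_p[C]$, exactly as in \Cref{lem:main-analysis-SB-s} --- your scale/size schedule is genuinely different from the paper's, and that schedule is the heart of this particular theorem. The paper (\Alg{SB-bounded}) uses a doubly-exponential ladder with only $\log_r\log n$ phases, targeting clusters of size $\approx\frac{n}{k}2^{-r^{p+1}}$ with sets of size $2^{r^p}=(2^{r^{p+1}})^{1/r}\leq s=n^{1/r}$; since each phase can cost up to $\approx nk\log(n/\delta)\cdot n^{1-2/r}$ queries and the per-phase costs are bounded by their common maximum, the paper pays the $\log\log n$ factor and carries $k$ on the main term. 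You instead keep the standard singly-exponential ladder of $\log n$ windows $[n/2^p,n/2^{p-1})$ and choose the query size greedily as $t_p=\min\bigl(s,\Theta(2^p/k)\bigr)$, the largest size for which the ``all other samples land in $\cR_p$'' event retains constant probability given the inductive bound $|U\setminus\cR_p|\leq kn/2^{p-1}$; the per-phase budgets $N_p=\Theta(n2^p\log(kn/\delta)/t_p^2)$ then form geometric series in each of your three regimes, giving $O\bigl((n^2/s^2+nk)\log(n/\delta)\bigr)$, which for $s\leq O(\sqrt n)$ is within --- in fact slightly stronger than --- the stated $O(\tfrac{n^2}{s^2}k\log(n/\delta)\log\log n)$: you shed the $\log\log n$ factor and decouple $k$ from the $n^2/s^2$ term. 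Two minor caveats, neither a real gap relative to the paper: (i) your application of \Cref{fact:random-connected} uses only a marginal lower bound on each pair's edge probability even though edges generated by a common pool of random sets are not independent --- the paper's own proofs of \Cref{lem:main-analysis-1,lem:main-analysis-SB-s} make the identical simplification, and the cut-based union bound behind the fact does go through with i.i.d.\ sets since $(1-q_0)^{m}\geq 1-mq_0$; (ii) ``$\cE_0$ holds vacuously'' fails in the degenerate case of a cluster of size exactly $n$ (i.e.\ $k=1$), which your half-open top window $[n/2,n)$ excludes --- closing that window at $n$ fixes it, and your phase-$1$ analysis already covers $|C|=n$ verbatim.
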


For convenience, we will parameterize the query-size bound by $s = n^{1/r}$ where $r$ is any positive real number in the range $2 \leq r \leq \log n$. Before proving the theorem formally, we informally describe the algorithm and its analysis. A full description of the algorithm is given in pseudocode in \Alg{SB-bounded}, which is split into two phases: a "query selection phase", describing how queries are chosen by the algorithm, and a "reconstruction phase", describing how the algorithm uses the query responses to determine the clustering. Both phases contain a for-loop iterating over all $p \in \{0,1,\ldots,\log_r \log n - 1\}$ where the goal of the algorithm during the $p$'th iteration is to learn all remaining clusters of size at least $\smash{\frac{n}{k} \cdot 2^{-r^{p+1}}}$. We prove that this occurs with high probability in \Cref{lem:main-analysis}, which gives the main analysis. If each iteration is successful in doing so than the entire clustering has been learned successfully after iteration $p = \log_r \log n - 1$ (since $\smash{2^{-r^{\log_r \log n}} = 2^{-\log n} = \frac{1}{n}}$), and we justify this formally just after the statement of \Cref{lem:main-analysis-SB-s}. 

We describe the algorithm and it's analysis informally for the case of $r=2$, i.e. when the query sizes are bounded by $s = \sqrt{n}$. We also refer the reader to \Cref{sec:tech-2} for discussion of the ideas for the simple case of $k=3$. Consider some iteration $p \in \{0,1,\ldots,\log\log n - 1\}$ and suppose that prior to this iteration, all clusters of size at least $\frac{n}{k} \cdot 2^{-2^{p}}$ have been successfully recovered. Let $\cC_p$ denote the collection of all such clusters and let $\cR_p = \bigsqcup_{C \in \cC_p} C$ be the set of points they contain. The goal in iteration $p$ is to learn every cluster $C$ with $\smash{|C| \in [\frac{n}{k} \cdot 2^{-2^{p+1}}, \frac{n}{k} \cdot 2^{-2^{p}})}$. The algorithm queries $O(nk \log n)$ random sets $T$ formed by $2^{2^p}$ samples\footnote{Note that $p \leq \log\log n - 1$ and so $2^{2^p} \leq 2^{\frac{1}{2} \log n} = \sqrt{n}$.} from $U$ (see lines 5-7 of \Alg{SB-bounded}). If $T$ contains exactly two points $x,y \in T \setminus \cR_p$ belonging to unrecovered clusters, then we can use the fact that we already know the clustering on $\cR_p$ to tell whether or not $x,y$ belong to the same cluster or not, i.e. we can compute $\query(\{x,y\}) \in \{1,2\}$ from $\query(T)$. We then consider the set of all such pairs where $\query(\{x,y\}) = 1$ (this is $Q_p''$ defined in line 16) and consider the graph $G$ with this edge set, and vertex set $U \setminus \cR_p$, the set of points whose cluster hasn't yet been determined. If two points belong to the same connected component in this graph, then they belong to the same cluster. Thus, the analysis boils down to showing that with high probability, the induced subgraph $G[C]$ will be connected for every $C$ where $|C| \in [\frac{n}{k} \cdot 2^{-2^{p+1}}, \frac{n}{k} \cdot 2^{p})$. 

\paragraph{Proof of \Cref{thm:bounded-2}:} The following \Cref{lem:main-analysis-SB-s} establishes that after the first $p$ iterations of the algorithm's query selection and reconstruction phases, all clusters of size at least $\frac{n}{k} \cdot 2^{-r^{p+1}}$ have been learned with high probability. This is the main effort of the proof. After stating the lemma we show it easily implies that \Alg{SB-bounded} succeeds with probability at least $1-\delta$ by an appropriate union bound.

\begin{algorithm}
\caption{Algorithm with Optimal Dependence on Query Size} \label{alg:SB-bounded}
\textbf{Input:} Subset query access to a hidden partition $C_1 \sqcup \cdots \sqcup C_k = U$ of $|U| = n$ points\;
\emph{(Query Selection Phase)} \\
\For{$p = 0,1,\ldots,\log_r \log n - 1$} {
Initialize query set $Q_p \gets \emptyset$\;
\textbf{Repeat} $20 \cdot nk \ln (3 nk^2/\delta) \cdot 2^{r^{p+1}(1-\frac{2}{r})}$ times\;
$\longrightarrow$ Sample $T \subseteq U$ formed by $2^{r^p}$ independent uniform samples from $U$\; 
$\longrightarrow$ \textbf{Query} $T$ and add it to $Q_p$\; 
}
\emph{(Reconstruction Phase)} \\
Initialize learned cluster set $\cC_0 \gets \emptyset$\;
\For{$p = 0,1,\ldots,\log_r \log n - 1$} {
Let $\cC_p$ denote the collection of clusters reconstructed before iteration $p$\;
Let $\cR_p = \bigcup_{C \in \cC_p} C$ denote the points belonging to these clusters\;
Initialize $\cC_{p+1} \gets \cC_p$\;
Let $Q_p' = \{T \setminus \cR_p \colon T \in Q_p \text{ and } |T \setminus \cR_p| = 2\}$. Since each $T \in Q_p$ is a uniform random set, the elements of $Q_p'$ are uniform random pairs in $U \setminus \cR_p$\;
Let $Q_p'' = \{\{x,y\} \in Q_p' \colon \query(\{x,y\} = 1)\}$ denote the set of pairs in $Q_p'$ where both points lie in the same cluster. This set can be computed since $\query(T \setminus \cR_p) = \query(T) - \query(T \cap \cR_p)$ and $\query(T \cap \cR_p)$ is known since at this point we have reconstructed the clustering on $\cR_p$\;
Let $G_p$ denote the graph with vertex set $U \setminus \cR_p$ and edge set $Q_p''$\;
Let $\widetilde{C}_1,\ldots,\widetilde{C}_{\ell}$ denote the connected components of $G_p$ with size at least $\frac{n}{k} \cdot 2^{-r^{p+1}}$\;
Add $\widetilde{C}_1,\ldots,\widetilde{C}_{\ell}$ to $\cC_{p+1}$\;
}
\textbf{Output} clustering $\cC_{\log_r \log n}$\;
\end{algorithm}

\begin{lemma} \label{lem:main-analysis-SB-s} For each $p = 0,1,\ldots,\log_r \log n-1$, let $\cE_p$ denote the event that all clusters of size at least $\frac{n}{k} \cdot 2^{-r^{p+1}}$ have been successfully recovered immediately following iteration $p$ of \Alg{SB-bounded}. Then, 
\[
\Pr[\neg \cE_0] \leq \delta/k~ \text{ and } ~\Pr[\neg \cE_p ~|~ \cE_{p-1}] \leq \delta/k~ \text{ for all } p \in \{1,2,\ldots,\log_r \log n - 1\} \text{.}
\]
\end{lemma}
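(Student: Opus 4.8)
The plan is to follow the template of the proofs of \Cref{lem:main-analysis} and \Cref{lem:main-analysis-1}, but keeping only their ``connectivity'' branch: \Alg{SB-bounded} has no group-testing step, so the argument should reduce to a single clean application of \Cref{fact:random-connected}, with the thresholds $\tfrac nk 2^{-r^{p+1}}$ and the repetition count $20\,nk\ln(3nk^2/\delta)\cdot 2^{r^{p+1}(1-2/r)}$ substituted in. First I would pin down the conditioning. Writing $\cC_p$ for the clusters reconstructed before iteration $p$ and $\cR_p=\bigcup_{C\in\cC_p}C$, I claim that $\cE_{p-1}$ forces $\cC_p$ to be \emph{exactly} the set of clusters of size $\ge\tfrac nk 2^{-r^p}$: every such cluster is recovered by hypothesis, and conversely a spurious component (a proper subset $S\subsetneq C'$ of a cluster) added in some iteration $p'<p$ would have size $\ge\tfrac nk 2^{-r^{p'+1}}\ge\tfrac nk 2^{-r^p}$, forcing $|C'|\ge\tfrac nk 2^{-r^p}$; but once $S$ is declared a cluster it is removed from the working set forever, so $C'$ could never be fully recovered afterwards, contradicting $\cE_{p-1}$. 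Consequently there are at most $k$ ``remaining'' clusters, each of size $<\tfrac nk 2^{-r^p}$, so $|U\setminus\cR_p|\le n\,2^{-r^p}$ and $|\cR_p|/n\ge 1-2^{-r^p}$.

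Next I would record the per-iteration reduction. As explained in the algorithm's reconstruction phase, $\query(T\setminus\cR_p)=\query(T)-\query(T\cap\cR_p)$ is computable, so $G_p$ (on vertex set $U\setminus\cR_p$, with an edge $\{x,y\}$ whenever some $T\in Q_p$ has $T\setminus\cR_p=\{x,y\}$ and $\query(\{x,y\})=1$) is correctly built, and all of its edges join same-cluster points. Hence, if $G_p[C]$ is connected for every remaining cluster $C$ with $|C|\ge\tfrac nk 2^{-r^{p+1}}$, then each such $C$ appears verbatim as a connected component of size $\ge\tfrac nk 2^{-r^{p+1}}$ and, by the previous paragraph, no \emph{other} component of that size exists, so $\cC_{p+1}$ is exactly the clusters of size $\ge\tfrac nk 2^{-r^{p+1}}$, i.e.\ $\cE_p$ holds. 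So it suffices to show, for a fixed remaining $C$ with $|C|\in[\tfrac nk 2^{-r^{p+1}},\tfrac nk 2^{-r^p})$, that $\Pr[G_p[C]\text{ disconnected}\mid\cE_{p-1}]\le\delta/k^2$, and then union bound over the $\le k$ such $C$.

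Then I would carry out the core estimate, which is the same one as in Case~3 of \Cref{lem:main-analysis}: for $x,y\in C$, a single sampled $T$ (of $2^{r^p}$ uniform points) satisfies $T\setminus\cR_p=\{x,y\}$ with probability $\binom{2^{r^p}}{2}\tfrac1{n^2}\big(\tfrac{|\cR_p|}{n}\big)^{2^{r^p}-2}=\Omega(2^{2r^p}/n^2)$, using $(1-2^{-r^p})^{2^{r^p}}=\Omega(1)$ (for $p=0$ this is immediate since $\cR_0=\emptyset$ and $T$ is a uniform random pair); summing over the $|Q_p|=20\,nk\ln(3nk^2/\delta)\cdot 2^{r^{p+1}-2r^p}$ sampled sets makes $\{x,y\}$ an edge of $G_p[C]$ with probability at least $1-\exp\!\big(-\Omega(\tfrac kn 2^{r^{p+1}}\ln(3nk^2/\delta))\big)$, and plugging in $|C|\ge\tfrac nk 2^{-r^{p+1}}$ and $|C|\le n$ brings this above $1-\big(\tfrac{\delta}{3k^2|C|}\big)^{2/|C|}$. \Cref{fact:random-connected} with $N=|C|$ and $\alpha=\delta/k^2$ then gives connectivity with probability $\ge1-\delta/k^2$ (as in those earlier proofs, the mild dependence among the edges of $G_p[C]$ is harmless: the cut-based argument behind \Cref{fact:random-connected} uses only the per-pair edge probability). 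The concluding union bound over iterations, $\Pr[\neg\cE_{\log_r\log n-1}]\le\Pr[\neg\cE_0]+\sum_p\Pr[\neg\cE_p\mid\cE_{p-1}]\le\delta$, is identical to the ones already carried out for \Cref{thm:1} and \Cref{thm:nloglogn}.

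The main obstacle — really the only nontrivial bookkeeping — is verifying that the constants baked into $|Q_p|$ (the leading $20$ and the exponent $r^{p+1}(1-2/r)=r^{p+1}-2r^p$) are large enough that the exponent $\Omega(\tfrac kn 2^{r^{p+1}}\ln(3nk^2/\delta))$ dominates $\tfrac2{|C|}\ln(\tfrac{3k^2|C|}{\delta})$ uniformly over all $p\le\log_r\log n-1$ and all admissible $|C|$; along the way one also checks $2^{r^p}\le 2^{(\log n)/r}=n^{1/r}=s$, so that the claimed query-size bound holds. These are routine but slightly fiddly inequalities of exactly the kind already handled in \Cref{lem:main-analysis}.
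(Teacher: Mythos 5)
Your proposal is correct and follows essentially the same route as the paper's proof: reduce to showing $G_p[C]$ is connected for each remaining cluster $C$ with $|C|\ge \frac{n}{k}2^{-r^{p+1}}$, bound the per-pair probability $\Pr[T\setminus\cR_p=\{x,y\}]=\Omega(2^{2r^p}/n^2)$ using $|U\setminus\cR_p|\le n2^{-r^p}$, amplify over $|Q_p|$ independent samples to get edge probability at least $1-\bigl(\tfrac{\delta}{3k^2|C|}\bigr)^{2/|C|}$, apply \Cref{fact:random-connected}, and union bound over at most $k$ clusters and then over iterations. The only differences are cosmetic: you make explicit the (correct) bookkeeping that $\cE_{p-1}$ rules out spurious components, which the paper leaves implicit, while you defer the constant-checking that the paper writes out in full.
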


Before proving \Cref{lem:main-analysis-SB-s}, we observe that it immediately implies \Cref{thm:bounded-2} as follows. Let $I_0 = [\frac{n}{k} \cdot 2^{-r},n]$ and for $1 \leq p < \log_r\log n$, let $I_p = [\frac{n}{k} \cdot 2^{-r^{p+1}},\frac{n}{k} \cdot 2^{-r^p})$. If there are no clusters $C$ for which $|C| \in I_p$, then trivially $\Pr[\neg \cE_p ~|~ \cE_{p-1}] = 0$, and otherwise $\Pr[\neg \cE_p ~|~ \cE_{p-1}] \leq \delta/k$ by the lemma. Since there are $k$ clusters, clearly there are at most $k$ values of $p$ for which there exists a cluster with size in the interval $I_p$. Using this observation and a union bound, we have
\[
\Pr[\neg \cE_{\log_r \log n - 1}] \leq \Pr[\neg \cE_0] + \sum_{p=1}^{\log_r \log n} \Pr[\neg \cE_p ~|~ \cE_{p-1}] \leq \delta
\]
which completes the proof of correctness since the algorithm succeeds iff $\cE_{\log_r \log n - 1}$ occurs. 

\paragraph{Query complexity:} The total number of queries made by \Alg{SB-bounded} is $O(nk \log (n/\delta)) \cdot \sum_{p=1}^{\log_r\log n} 2^{r^{p}(1-\frac{2}{r})}$ and 
\[
\sum_{p=1}^{\log_r\log n} 2^{r^{p}(1-\frac{2}{r})} \leq \log_r \log n \cdot 2^{r^{\log_r \log n}(1-\frac{2}{r})} = \log_r \log n \cdot n^{1-\frac{2}{r}} \leq \log \log n \cdot \frac{n}{s^2}
\]
which completes the proof. \qed \smallskip 

\begin{proof} \emph{of \Cref{lem:main-analysis-SB-s}.} Let $\cR_p$ denote the set of points belonging to a cluster which has been recovered before iteration $p$.  

\noindent \textbf{Case 1:} $p=0$. In this iteration, the algorithm queries $|Q_0| \geq 8 \cdot nk \ln (3 nk^2/\delta) \cdot 2^{r-2}$ random pairs and we need to show that it successfully recovers all clusters with size at least $\frac{n}{k \cdot 2^r}$ with probability at least $1-\delta/k$. Let $C$ denote any such cluster and recall from lines (16-17) the definition of the graph $G_0$ with vertex set $U$ and edge set $Q_0''$. We will show that the induced subgraph $G_0[C]$ is connected, and thus $C$ is correctly recovered in lines (18-19), with probability at least $1-\delta/k^2$. Since there are at most $k$ clusters, the lemma holds by a union bound. 

Consider any two vertices $x,y \in C$ and note that $|Q_0| \geq \frac{2n^2 \ln (3 nk^2/\delta)}{|C|}$ since $|C| \geq \frac{n}{k \cdot 2^r}$. We lower bound the probability that $(x,y)$ is an edge in $G_0[C]$ as follows. Note that this occurs iff $\{x,y\} \in Q_0$. Thus,
\begin{align} \label{eq:p=0_bound}
    \Pr_{Q_0}[(x,y) \in E(G_0[C])] &= \Pr_{Q_0}[\{x,y\} \in Q_0] = 1-\left(1-\frac{1}{n^2}\right)^{|Q_0|} \nonumber \\
    &\geq 1-\exp\left(-\frac{2\ln (3 nk^2/\delta)}{|C|}\right) \nonumber \\
    &\geq 1-\exp\left(-\frac{2\ln (3 k^2 |C|/\delta)}{|C|}\right) = 1-\left(\frac{\delta}{3 k^2|C|}\right)^{\frac{2}{|C|}}
\end{align}
and so by \Cref{fact:random-connected}, $G_0[C]$ is connected with probability at least $1-\delta/k^2$ as claimed.

\noindent \textbf{Case 2:} $1 \leq p < \log_r \log n$. Recall from lines (12-13) that $\cC_p$ denotes the set of clusters recovered prior to iteration $p$ and $\cR_p = \bigcup_{C \in \cC_p} C$ is the set of points belonging to these clusters. Note that we are conditioning on the event that every cluster of size at least $\frac{n}{k} \cdot 2^{-r^p}$ has been recovered prior to iteration $p$. Let $C$ denote some cluster with size 
\[
|C| \in \left[\frac{n}{k} \cdot 2^{-r^{p+1}},\frac{n}{k}\cdot 2^{-r^{p}}\right) \text{ and note that } |U \setminus \cR_p| \leq k \cdot \frac{n}{k} \cdot 2^{-r^{p}} = n \cdot 2^{-r^{p}} \text{.}
\]
Recall from lines (16-17) the definition of $Q_p''$ and that $G_p$ is the graph with vertex set $U \setminus \cR_p$ and edge set $Q_p''$. We need to argue that the induced subgraph $G_p[C]$ is connected, and thus $C$ is correctly recovered in lines (18-19), with probability at least $1-\delta/k^2$. Since there are at most $k$ clusters, a union bound completes the proof of the lemma. 

Consider any two vertices $x,y \in C$. We lower bound the probability that $(x,y)$ is an edge in $G_p[C]$, which occurs iff there is some $T \in Q_p$ where $T \setminus \cR_p = \{x,y\}$. We have
\begin{align}
    \Pr_{T \colon |T| = 2^{r^p}}[T \setminus \cR_p = \{x,y\}] = {2^{r^p} \choose 2} \cdot \frac{1}{n^2} \cdot \left(\frac{|\cR_p|}{n}\right)^{t-2} \geq \frac{2^{2r^p}}{3n^2}  \left(1-2^{-r^p}\right)^{t} \geq \frac{2^{2r^p}}{10 n^2} \nonumber 
\end{align} 
and since $|Q_p| = 20 nk \ln (3 nk^2/\delta) \cdot 2^{r^{p+1}(1-\frac{2}{r})}$, we have
\begin{align}
    \Pr_{Q_p}\left[(x,y) \in E(G_p[C])\right] &= \Pr_{Q_p}\left[\{x,y\} \in Q_p''\right] = \Pr_{Q_p}\left[\exists T \in Q_p \colon T \setminus \cR_p = \{x,y\}\right] \nonumber \\
    &\geq 1 - \left(1-\frac{2^{2r^p}}{10 n^2}\right)^{20 \cdot nk \cdot 2^{r^{p+1}-2r^p}\ln (3nk^2/\delta)} \nonumber \\
    &\geq 1-\exp\left(-\frac{2\cdot 2^{r^{p+1}}k \ln (3nk^2/\delta)}{n}\right) \nonumber
\end{align}
and plugging in $|C| \geq \frac{n}{k} \cdot 2^{-r^{p+1}}$ and $|C| \leq n$ into the RHS yields
\begin{align} 
    \Pr_{Q_p}\left[(x,y) \in E(G_p[C])\right] &\geq 1-\exp\left(-\frac{2\ln (3nk^2/\delta)}{|C|}\right) \nonumber \\
    &\geq 1-\exp\left(-\frac{2\ln (3k^2|C|/\delta)}{|C|}\right) = 1-\left(\frac{\delta}{3k^2|C|}\right)^{\frac{2}{|C|}}\text{.} \nonumber
\end{align}
Therefore, $(x,y)$ is an edge in $G_p[C]$ with probability at least $1-\left(\frac{\delta}{3k^2|C|}\right)^{\frac{2}{|C|}}$, which by \Cref{fact:random-connected} implies that $G_p[C]$ is connected with probability at least $1-\delta/k^2$ as claimed. This completes the proof of \Cref{thm:bounded-2}. \end{proof}

\subsection{An $\Omega(n^2/s^2)$ Lower Bound for $k \geq 3$} \label{sec:LB}

\begin{theorem} \label{thm:3-2-NA-LB} Non-adaptive $3$-clustering requires $\Omega(n^2)$ pair queries. \end{theorem}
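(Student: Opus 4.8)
The plan is to exhibit, for each of the $\binom{n}{2}$ pairs $\{x,y\}\subseteq U$, two $3$-clusterings that give identical answers to every pair query except the single query $\{x,y\}$; since a correct algorithm must distinguish them, its (possibly random) query set is forced to contain essentially every pair, yielding $\binom{n}{2}=\Omega(n^2)$ queries. Concretely, for a pair $\{x,y\}$ (assume $n\ge 3$) let $A_{xy}$ be the partition with parts $\{x\}$, $\{y\}$, $U\setminus\{x,y\}$, and let $B_{xy}$ be the partition with parts $\{x,y\}$, $U\setminus\{x,y\}$ (with an empty third part, which is permitted in the ``at most $k$ parts'' convention used throughout --- this is exactly the pair of instances highlighted in the informal warm-up discussion in \Cref{sec:tech}).

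First I would verify the key structural claim: for every pair $\{a,b\}\neq\{x,y\}$, the same-cluster query on $\{a,b\}$ returns the same value under $A_{xy}$ and under $B_{xy}$. This is a one-line case analysis on $|\{a,b\}\cap\{x,y\}|\in\{0,1\}$ (the case $2$ is excluded since $\{a,b\}\ne\{x,y\}$): if the intersection is empty then $a$ and $b$ lie in the common large part under both clusterings, so the answer is ``same'' ($\query=1$) in both; if the intersection is a single element then $a$ and $b$ lie in distinct parts under both clusterings, so the answer is ``different'' ($\query=2$) in both. The only query separating $A_{xy}$ from $B_{xy}$ is $\{x,y\}$ itself, which returns $\query=2$ under $A_{xy}$ and $\query=1$ under $B_{xy}$.

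Next I would run the standard averaging (Yao-style) argument to handle randomized algorithms and thereby also the deterministic case. Fix a non-adaptive algorithm that, on every input, outputs the correct clustering with probability at least a constant $p>\tfrac12$; using its internal randomness $\rho$ it commits to a random set $\mathcal{Q}(\rho)$ of pair queries, observes the answers, and outputs a clustering. Draw the hidden instance by picking $\{x,y\}$ uniformly among all $\binom{n}{2}$ pairs and then choosing the truth to be $A_{xy}$ or $B_{xy}$ with probability $\tfrac12$ each, all independent of $\rho$. On the event $\{x,y\}\notin\mathcal{Q}(\rho)$, the structural claim says the vector of query answers is identical under $A_{xy}$ and $B_{xy}$, so the algorithm's output cannot depend on which of the two is the truth and is therefore wrong with conditional probability at least $\tfrac12$. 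Since $\{x,y\}$ is uniform and independent of $\rho$, $\Pr[\{x,y\}\in\mathcal{Q}(\rho)]\le \EX[|\mathcal{Q}(\rho)|]/\binom{n}{2}$, so the overall error probability is at least $\tfrac12\bigl(1-\EX[|\mathcal{Q}|]/\binom{n}{2}\bigr)$. Comparing with the allowed error $1-p$ forces $\EX[|\mathcal{Q}|]\ge (2p-1)\binom{n}{2}=\Omega(n^2)$; a deterministic algorithm is the special case where $\mathcal{Q}$ is fixed and must literally contain all $\binom{n}{2}$ pairs.

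The argument is essentially self-contained, so there is no single ``hard part'', but the two points I would be careful to spell out are (i) that the instance model genuinely permits the two-part clustering $B_{xy}$ as a valid $3$-clustering (i.e., that a ``$k$-clustering'' is a partition into \emph{at most} $k$ nonempty parts), matching the paper's own informal treatment; and (ii) the independence of $\mathcal{Q}(\rho)$ from the hidden pair, which is what makes the probability that the unique informative query $\{x,y\}$ is asked bounded by $\EX[|\mathcal{Q}|]/\binom{n}{2}$. Glossing over either would leave a genuine gap, but neither requires any new idea.
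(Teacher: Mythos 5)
Your proposal is correct and matches the paper's argument: the paper uses exactly the same pair of instances $(\{x\},\{y\},U\setminus\{x,y\})$ versus $(\{x,y\},\emptyset,U\setminus\{x,y\})$, the same observation that only the query $\{x,y\}$ distinguishes them, and then bounds $\sum_{\{x,y\}}\Pr[Q\text{ distinguishes}]$ by $\EX[|Q|]$ via linearity of expectation, which is the same counting as your Yao-style averaging. The only cosmetic difference is that you make the distribution over hidden instances explicit and track the constant $(2p-1)$, whereas the paper sums the distinguishing probabilities directly.
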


\begin{proof} For every $(x,y) \in {U \choose 2}$ consider the following pair of partitions:
\[
P^1_{x,y} = (\{x,y\},\emptyset,U \setminus \{x,y\}) \text{ and } P^2_{x,y} = (\{x\},\{y\},U \setminus \{x,y\}) \text{.}
\]
Observe that $\query$ returns the same value for $P^1_{x,y}$ and $P^2_{x,y}$ on every possible pair query except $\{x,y\}$. Thus, if query set $Q \subseteq U \times U$ distinguishes these two clusterings, then $Q \ni \{x,y\}$. Therefore, the number of pairs $\{x,y\}$ such that $Q$ distinguishes $P^1_{x,y}$ and $P^2_{x,y}$ is at most $|Q|$. Now, let $A$ be any non-adaptive pair-query algorithm which successfully recovers an arbitrary $3$-clustering with probability $\geq 2/3$. The algorithm $A$ queries a random set $Q \subseteq U \times U$ according to some distribution, $\cD_A$. In particular, for every $\{x,y\} \in {U \choose 2}$, $Q$ distinguishes $P^1_{x,y}$ and $P^2_{x,y}$ with probability $\geq 2/3$. Thus,
\begin{align}
    \frac{2}{3} {n \choose 2} &\leq \sum_{\{x,y\} \in {U \choose 2}} \Pr_{Q \sim \cD_A}[Q \text{ distinguishes } P^1_{x,y} \text{ and } P^2_{x,y}] \nonumber \\
    &= \Exp_{Q\sim \cD_A}\left[ \left|\left\{\{x,y\} \in {U \choose 2} \colon Q \text{ distinguishes } P^1_{x,y} \text{ and } P^2_{x,y} \right\}\right| \right] \leq |Q| \nonumber
\end{align}
using linearity of expectation, and this completes the proof. \end{proof}

\begin{corollary} \label{cor:3-s-LB} Non-adaptive $3$-clustering requires $\Omega(n^2/s^2)$ subset queries of size at most $s$. \end{corollary}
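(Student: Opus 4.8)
The plan is to derive this as a direct corollary of the pair-query lower bound \Cref{thm:3-2-NA-LB}, via a simulation argument. The central observation is that a single subset query on a set $S$ can be recovered from the $\binom{|S|}{2}$ pair queries on pairs inside $S$. Indeed, for any clustering, $\query(S)$ equals the number of equivalence classes that the ``same-cluster'' relation induces on $S$, and this relation is precisely read off from pair queries: $x,y \in S$ lie in the same cluster if and only if $\query(\{x,y\}) = 1$. So from the answers to all pair queries $\{x,y\} \in \binom{S}{2}$ one reconstructs the partition of $S$ into cluster classes, hence its size $\query(S)$. In particular, if $|S| \leq s$, then $\query(S)$ is determined by at most $\binom{s}{2} = O(s^2)$ pair queries.

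Next I would turn any hypothetical non-adaptive subset-query algorithm into a non-adaptive pair-query algorithm. Suppose $A$ is non-adaptive, makes at most $q$ subset queries each of size at most $s$, and recovers an arbitrary $3$-clustering with probability at least $2/3$. Because $A$ is non-adaptive, its (random) collection of query sets $\{S_1, \dots, S_q\}$ is chosen before any answers are seen; define $A'$ to be the pair-query algorithm that uses the same internal randomness, queries every pair contained in some $S_j$ --- at most $q \cdot \binom{s}{2}$ pairs in total, all specified up front so that $A'$ is non-adaptive --- then uses the observation above to compute $\query(S_j)$ for every $j$, and finally applies $A$'s reconstruction map. On every input clustering, $A'$ outputs exactly what $A$ outputs, so $A'$ also succeeds with probability at least $2/3$ using $O(q s^2)$ pair queries.

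Finally, \Cref{thm:3-2-NA-LB} forces $O(q s^2) = \Omega(n^2)$, i.e.\ $q = \Omega(n^2/s^2)$, which is the claim. This argument has essentially no hard step: the only points that need a line of care are that the simulation does not break non-adaptivity (it does not, since all of $A$'s sets, and hence all of $A'$'s pairs, are fixed in advance of any oracle response) and that the notion of success transfers (it does, by coupling the internal randomness of $A$ and $A'$, exactly as in the distributional argument used in the proof of \Cref{thm:3-2-NA-LB}).
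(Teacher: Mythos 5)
Your proposal is correct and is exactly the paper's argument: the paper proves \Cref{cor:3-s-LB} in one line by noting that a single query on a set of size at most $s$ can be simulated by $\binom{s}{2}$ pair queries, and then invoking \Cref{thm:3-2-NA-LB}; you have simply spelled out the simulation and the preservation of non-adaptivity in more detail.
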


\begin{proof} This follows from \Cref{thm:3-2-NA-LB} since one $s$-sized query can be simulated by ${s \choose 2}$ pair-queries. \end{proof}

Thus, in order to achieve a near-linear non-adaptive upper bound for $3$-clustering, we require an algorithm which makes queries of size $\widetilde{\Omega}(\sqrt{n})$.

\section{The Special Case of Balanced Clusters} \label{sec:balanced}

In this section we present improved non-adaptive algorithms for recovering a $k$-clustering with subset queries when the clusters are roughly balanced according to the following notion.

\begin{definition} Given $|U| = n$ and $B \geq 1$, we say that a $k$-partition $C_1 \sqcup \cdots \sqcup C_k = U$ is $B$-balanced if $\frac{n}{B k} \leq |C_i| \leq \frac{B n}{k}$ for all $i \in [k]$. \end{definition}

Given constant $B > 1$ and failure probability $\delta = \poly^{-1}(k)$, our first algorithm in \Cref{sec:bal-1} makes $O(n \log k + k \log^4 k)$ queries and our second algorithm in \Cref{sec:bal-2} makes $O(n \log^2 k)$ queries. In particular, as long as $\smash{k = O(\frac{n}{\log^3 n})}$, the algorithm of \Cref{sec:bal-1} makes $O(n \log k)$ queries. Note that we also describe a two-round algorithm for this setting making $O(n\log\log k)$ queries in \Cref{sec:2round-2}. 

\subsection{An $O(n \log k) + \widetilde{O}(k)$ Algorithm} \label{sec:bal-1}

\begin{theorem} \label{thm:k-bal-1} There is a non-adaptive algorithm that recovers a $B$-balanced $k$-clustering using $O(B^2 n \log (k/\delta)) + O(B k \log^4 (k/\delta))$ subset queries of size $O(k \log k)$ and succeeds with probability at least $1-\delta$. In particular, for $\delta = 1/\poly(k)$, any constant $B \geq 1$, and $k = O(n/\log^3 n)$, the query complexity is $O(n \log k)$. 
\end{theorem}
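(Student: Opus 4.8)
The plan is a two-phase ``bootstrap, then broadcast'' strategy. In Phase~1 we spend only $\widetilde{O}(k)$ queries to recover the clustering restricted to a small random sub-universe $W \subseteq U$, and from it extract a \emph{system of distinct representatives}: a set $Z = \{z_1,\dots,z_k\}$ containing exactly one point $z_i$ of each cluster $C_i$, together with the labelling $i \mapsto z_i$. In Phase~2 we spend $O(n\log k)$ queries to broadcast this labelling to every point of $U$, by invoking the subroutine $\mathsf{FINDREP}(x,Z)$ of \Cref{cor:1-bit-1} once per point $x$; since $|Z \cap C(x)| = 1$, this call returns the unique representative lying in $x$'s cluster and hence classifies $x$. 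Grouping the points of $U$ according to the returned representative then reconstructs the partition exactly.

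\emph{Phase 1.} Form a set $W$ from $m = \Theta(B k \log(k/\delta))$ independent uniform samples of $U$. This is the only place the balancedness hypothesis is used: since $|C_i| \ge n/(Bk)$ for every $i$, each sample misses $C_i$ with probability at most $1 - 1/(Bk)$, so $\Pr[W \cap C_i = \emptyset] \le e^{-m/(Bk)}$, and a union bound over the $k$ clusters shows that $W$ meets every cluster with probability at least $1 - \delta/3$. Conditioned on this, run the algorithm of \Cref{thm:1} on the universe $W$, with failure parameter $\delta/3$, to recover the induced partition $W = \bigsqcup_i (W \cap C_i)$. On $|W| \le m$ points this costs $O(m \log k\,(\log k + \log\log m)^2) = \widetilde{O}(k)$, which lies inside the stated $O(Bk\log^4(k/\delta))$ term, and all its queries have size $O(|W|) = O(k\log k)$. (Alternatively, a Chernoff bound shows $W$ is $O(B)$-balanced with high probability, so one may instead run the balanced algorithm of \Cref{sec:bal-2} on $W$, whose queries have size only $O(k)$.) Finally, for each $i$ choose an arbitrary $z_i \in W \cap C_i$ and set $Z = \{z_1,\dots,z_k\}$, so $|Z| = k$ and $|Z \cap C_i| = 1$ for every $i$.

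\emph{Phase 2 and conclusion.} For every $x \in U$, run $\mathsf{FINDREP}(x,Z)$. Since $Z$ contains exactly one point of each cluster, in particular $|Z \cap C(x)| = 1$, so by \Cref{cor:1-bit-1} this call is deterministic, makes $4\lceil \log |Z| \rceil = O(\log k)$ subset queries of size at most $|Z| + 1 = O(k)$, and outputs the unique $z_{i(x)} \in Z \cap C(x)$; assigning $x$ to the class labelled by $z_{i(x)}$ is correct because $z_{i(x)} \in C(x)$ and the cluster of $z_{i(x)}$ was determined in Phase~1. This phase makes $O(n\log k)$ queries and adds nothing to the failure probability, so the overall algorithm is non-adaptive, uses $O(n\log k) + \widetilde{O}(Bk\log^4(k/\delta))$ subset queries of size $O(k\log k)$, and fails with probability at most $\delta/3 + \delta/3 \le \delta$, which is within the claimed bound. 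The main obstacle is Phase~1: one must choose $m$ large enough that $W$ \emph{simultaneously} meets every cluster and is balanced enough for whichever inner algorithm is invoked, yet small enough that the Phase~1 query count and query size stay near-linear in $k$; this is exactly where the size lower bound $|C_i| \ge n/(Bk)$ from $B$-balancedness is indispensable and where the $\poly(B, \log(k/\delta))$ overhead in the additive term originates. Phase~2 is then a clean deterministic non-adaptive coding argument, but its correctness rests entirely on $Z$ being a genuine system of distinct representatives, so the guarantee of the Phase~1 sub-routine must be invoked with care.
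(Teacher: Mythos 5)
There is a genuine gap: your algorithm is not non-adaptive. In Phase~2 you invoke $\mathsf{FINDREP}(x,Z)$, whose internal subset queries are subsets of $Z$ (paired with $x$) determined by the identity and ordering of the elements of $Z$. But $Z$ is extracted from the clustering of $W$, which you only know after receiving the answers to the Phase~1 queries. Hence the Phase~2 queries cannot be written down before any oracle answers arrive, and what you have built is a two-round algorithm, not a one-round one. Indeed, your scheme is essentially the paper's \Cref{thm:2-round} (\Alg{2round}): round one finds a system of distinct representatives $R$, round two runs $\mathsf{FINDREP}(x,R)$ for every $x$, giving $O(n\log k)$ queries \emph{with two rounds} and without even needing balancedness. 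There is no cheap fix that keeps $\mathsf{FINDREP}$ on an unknown representative set: pre-committing to queries that would simulate $\mathsf{FINDREP}(x,Z)$ for every possible $Z\subseteq W$ is exponentially expensive, and running $\mathsf{FINDREP}(x,W)$ on the whole sample fails because $|W\cap C(x)|$ is typically $\Theta(\log(k/\delta))\gg 1$, so the subroutine merely certifies ``not unique'' and identifies nothing. The paper's non-adaptive workaround for exactly-one-representative sets is its \Cref{thm:k-bal-2} (\Cref{sec:bal-2}): sample many sets $R_i$ of size $k/B$ \emph{before} any answers, so each cluster is hit exactly once by some $R_i$ with constant probability, and run $\mathsf{FINDREP}(x,R_i)$ for all $i$ -- but that costs $O(n\log^2 k)$, not $O(n\log k)$.

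The paper's actual proof of \Cref{thm:k-bal-1} gets $O(n\log k)$ by replacing your broadcast step with an elimination argument that needs only \emph{one} query per $(x,i)$ pair, all specifiable up front: it fixes random sets $T_1,\dots,T_q$ of size $k/B$ with $q=\Theta(B^2\log(k/\delta))$, queries $T_i$ and $T_i\cup\{x\}$ for all $i,x$ in the same round in which it (also non-adaptively) learns the clustering on $\bigcup_i T_i$ via \Cref{thm:1}, and then observes that $\query(T_i\cup\{x\})=\query(T_i)$ iff $T_i$ hits $C(x)$; by \Cref{clm:miss-bound-1}, with high probability for every ordered pair $(j,j')$ some $T_i$ hits $C_j$ but not $C_{j'}$, so every cluster other than $C(x)$ gets ruled out and $x$ is classified. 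Your Phase~1 sampling analysis is fine as far as it goes, but the heart of the theorem -- classifying all $n$ points with $O(\log(k/\delta))$ queries each, chosen before any answers are seen -- is exactly the part your proposal does not supply.
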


Pseudocode for the algorithm is given in \Alg{bal-1}. In line (3) we draw $q = \Theta(B^2 \log k)$ sets $T_1,\ldots,T_q$ each formed by $k/B$ samples from $U$ and in line (5) learn the clustering over their union using \Cref{thm:1}. I.e., for $T = T_1 \cup \cdots \cup T_q$, we find $R_j = T \cap C_j$. Then, we query $T_i$ and $T_i \cup \{x\}$ for every $x \in U$ and every $i \in [q]$ in line (5). Now, consider some point $x \in U$ and let $j^{\ast}$ be it's cluster's index. Note that $\query(T_i \cup \{x\}) = \query(T_i)$ iff $T_i$ intersects $C_{j^{\ast}}$. Thus, if $T_i$ does not intersect $C_{j^{\ast}}$, then every cluster $j$ that $T_i$ intersects can be ruled out as a candidate for being the cluster containing $x$. The set $J_x$ computed in line (8) is the set of all $j$ which can be ruled out in this way. If for every $j \neq j^{\ast}$, there is some $T_i$ containing $j$, but not $j^{\ast}$, then $J_x = \{j^{\ast}\}$ and we determine $j^{\ast}$ in line (9). This occurs \emph{for every $x \in U$} if the following holds: for every pair $(j,j') \in {U \choose 2}$, there exists $T_i$ intersecting $C_j$, but not $C_{j'}$. We show in \Cref{clm:miss-bound-1} that this happens with high probability.

\paragraph{Proof of \Cref{thm:k-bal-1}:} There are $O(B^2 n \log (k/\delta))$ queries made in line (5) and $O(B \cdot k \log^4 (k/\delta))$ queries in line (4), since $|\bigcup_{i\in [q]} T_i| = O(B k \log (k/\delta))$. We now prove correctness, which is due to the following claim.



\begin{algorithm}
\caption{First Algorithm for Balanced Clustering} \label{alg:bal-1}
\textbf{Input:} Subset query access to a $B$-balanced partition $C_1 \sqcup \cdots \sqcup C_k = U$ of $|U| = n$ points\;
\emph{(Query Selection Phase)} \\
Choose $q = 2eB^2 \ln (2k^2/\delta)$ sets $T_{1},\ldots,T_{q}$ each formed by $\frac{k}{B}$ uniform samples from $U$\;
Run the algorithm from \Cref{thm:1} with error probability $\delta/2$ to learn the clustering restricted on $\widetilde{C} = \bigcup_{i=1}^q T_i$. Let $(\widetilde{C}_1,\ldots,\widetilde{C}_k)$ be the clustering returned by the algorithm. I.e., if the algorithm is successful, then $\widetilde{C}_j = \widetilde{C} \cap C_j$ for all $j \in [k]$\; 
\textbf{Query} $T_{i}$ and $T_{i} \cup \{x\}$ for all $i \in [q]$ and all $x \in U$\;
\emph{(Reconstruction Phase)} \\
\For{$x \in U$} {
Let $J_x = \bigcup_{i \in [q] \colon \query(T_{i} \cup \{x\}) \neq \query(T_{i})} \{j \in [k] \colon T_{i} \cap \widetilde{C}_j \neq \emptyset\}$. Note that $T_i \cap \widetilde{C}_j \neq \emptyset$ iff $T_i \cap C_j \neq \emptyset$. Note that $\query(T_{i} \cup \{x\}) \neq \query(T_{i})$ iff $x$ does not belong to any cluster that is hit by $T_{i}$. Thus, $J_x$ is the collection of all $j$ such that some set $T_{i}$ has revealed that $x \notin C_j$\;
\eIf{$|J_x| = k-1$} {
Add $x$ to $\widetilde{C}_{j^{\ast}}$ where $j^{\ast}$ is the unique element of $[k] \setminus J_x$\;
}
{
\textbf{Output} fail\;
}
}
\textbf{Output} clustering $(\widetilde{C}_1,\ldots,\widetilde{C}_k)$\;
\end{algorithm}

\begin{claim} \label{clm:miss-bound-1} For $i \in [q], j \in [k]$, let $\cE_{i,j}$ denote the event that $T_i \cap C_j \neq \emptyset$. Then, 
\begin{align} \label{eq:miss-bound}
    \Pr_{T_1,\ldots,T_q}\left[\forall (j,j') \in {[k] \choose 2} \text{, } \exists i \in [q] \colon \cE_{i,j} \wedge \neg \cE_{i,j'}\right] \geq 1-\delta/2 \text{.}
\end{align}    
\end{claim}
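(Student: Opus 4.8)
\emph{Proof proposal.} The plan is to reduce the claim to a single‑pair estimate and then use independence of $T_1,\dots,T_q$ together with a union bound. Fix an ordered pair $(j,j')$ with $j\neq j'$ (the unordered formulation of \Eqn{miss-bound} follows by also considering the reversed orientation, which the factor $2k^2$ inside the logarithm absorbs). For a single random set $T_i$ — which is $m:=k/B$ i.i.d.\ uniform samples from $U$ — I would lower‑bound the probability that $T_i$ \emph{witnesses} this pair, i.e.\ that $\cE_{i,j}\wedge\neg\cE_{i,j'}$ holds.

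To estimate this, condition on $\neg\cE_{i,j'}$, the event that all $m$ samples avoid $C_{j'}$. Since the clustering is $B$‑balanced, $|C_{j'}|\le Bn/k$, so
\[
\Pr[\neg\cE_{i,j'}]=\Big(1-\tfrac{|C_{j'}|}{n}\Big)^{m}\ge\Big(1-\tfrac{B}{k}\Big)^{k/B}=\Omega(1).
\]
Conditioned on $\neg\cE_{i,j'}$, the $m$ samples are i.i.d.\ uniform on $U\setminus C_{j'}$, a set of size at most $n$; since $|C_j|\ge n/(Bk)$, each conditioned sample lies in $C_j$ with probability at least $1/(Bk)$, and therefore
\[
\Pr[\cE_{i,j}\mid\neg\cE_{i,j'}]\ge 1-\Big(1-\tfrac{1}{Bk}\Big)^{k/B}\ge 1-e^{-1/B^2}=\Omega(1/B^2).
\]
Multiplying the two bounds gives $\Pr[\cE_{i,j}\wedge\neg\cE_{i,j'}]\ge\frac{1}{2eB^2}$. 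Verifying this constant — one must combine two $e^{-\Theta(1)}$‑type factors and check that the product clears $\tfrac{1}{2eB^2}$, which is exactly where the choice $q=2eB^2\ln(2k^2/\delta)$ is calibrated — is the step I expect to require the most care; the regime where $k$ is not substantially larger than $B$ (so that $T_i$ has very few samples) can be treated separately and is easy.

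Given this single‑pair estimate, independence of $T_1,\dots,T_q$ yields that no $T_i$ witnesses $(j,j')$ with probability at most
\[
\Big(1-\tfrac{1}{2eB^2}\Big)^{q}\le\exp\!\Big(-\tfrac{q}{2eB^2}\Big)=\exp\big(-\ln(2k^2/\delta)\big)=\frac{\delta}{2k^2}.
\]
A union bound over the at most $k^2$ ordered pairs then shows that every pair is witnessed except with probability at most $k^2\cdot\frac{\delta}{2k^2}=\frac{\delta}{2}$, which is exactly \Eqn{miss-bound}. Finally, combining this with the $\delta/2$ failure probability of the call to \Cref{thm:1} in line~(4) of \Alg{bal-1}, and observing that whenever the event of \Eqn{miss-bound} holds the set $J_x$ in line~(8) equals $[k]\setminus\{j^\ast\}$ for every $x$ (so that line~(9) assigns every point correctly), a last union bound gives that \Alg{bal-1} succeeds with probability $1-\delta$.
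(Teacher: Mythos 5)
Your proposal is correct and follows essentially the same route as the paper: a per-set witness probability $\Pr[\cE_{i,j}\wedge\neg\cE_{i,j'}]\geq\frac{1}{2eB^2}$, then independence of $T_1,\ldots,T_q$ and a union bound over pairs, with the choice $q=2eB^2\ln(2k^2/\delta)$ doing exactly the work you describe. The only (immaterial) difference is the order of conditioning — you condition on missing $C_{j'}$ first, while the paper writes $\Pr[\cE_{i,j}]\cdot\Pr[\neg\cE_{i,j'}\mid\cE_{i,j}]$ — and the constant-level slack you flag is present to a comparable degree in the paper's own chain of inequalities.
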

\begin{proof} Firstly, for fixed $i \in [q]$ and $j \neq j'$, since each cluster's size is bounded in the interval $[\frac{n}{Bk},\frac{Bn}{k}]$, we have
\begin{align}
    \Pr_{T_i}[\cE_{i,j} \wedge \neg \cE_{i,j'}] &= \Pr[\cE_{i,j}] \cdot \Pr[\neg \cE_{i,j'} ~\lvert~ \cE_{i,j}] \nonumber \\
    &= \left(1 - \left(1-\frac{|C_j|}{n}\right)^{|T_i|} \right) \cdot \left(1-\frac{|C_{j'}|}{n}\right)^{|T_i|-1} \nonumber \\ 
    &\geq  \left(1 - \left(1-\frac{1}{Bk}\right)^{k/B} \right) \cdot \left(1-\frac{B}{k}\right)^{k/B} \geq \left(1 - \exp\left(B^{-2}\right)\right) \cdot \frac{1}{e} \geq \frac{1}{2eB^2} \nonumber
\end{align}
and so for a fixed $(j,j') \in {[k] \choose 2}$, we have
\[
\Pr_{T_1,\ldots,T_q}\left[\forall i \in [q] \colon \neg \left(\cE_{i,j} \wedge \neg \cE_{i,j'}\right) \right] \leq \left(1-\frac{1}{2eB^2}\right)^{2eB^2 \ln (2k^2/\delta)} \leq \frac{\delta}{2k^2}
\]
and the claim follows by a union bound over all $(j,j') \in {[k] \choose 2}$. \end{proof}
    
By \Cref{clm:miss-bound-1}, with probability at least $1-\delta/2$, for every $j \neq j' \in [k]$ we have some $T_i$ such that $T_i \cap C_j \neq \emptyset$ and $T_i \cap C_{j'} = \emptyset$. In particular, for $x \in U$, let $C_{j^{\ast}}$ be the cluster containing $x$. For every $j \neq j^{\ast}$ we have some $T_i$ such that $T_i \cap C_j \neq \emptyset$ and $T_i \cap C_{j^{\ast}} = \emptyset$ which means that in line (9) of the algorithm, we have $J_x = [k] \setminus \{j^{\ast}\}$ and so we successfully identify the cluster containing $x$. Moreover, this occurs for all $x$. Finally, line (4) succeeds with probability $\delta$ and thus the entire algorithm succeeds with probability at least $1-\delta$ by a union bound. 

\subsection{An $O(n \log^2 k)$ Algorithm} \label{sec:bal-2}

We now give an alternative algorithm which improves on \Cref{thm:k-bal-1} when $k \gg n/\log^3 n$.

\begin{theorem} \label{thm:k-bal-2} There is a non-adaptive algorithm for recovering a $B$-balanced $k$-clustering using $O(B^2 n \log (k/\delta) \log k)$ subset queries of size $O(k)$ which succeeds with probability $1-\delta$. In particular, for failure probability $\delta = 1/\poly(k)$ and any constant $B \geq 1$, the query complexity is $O(n \log^2 k)$. \end{theorem}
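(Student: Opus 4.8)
The plan is to reduce the balanced case to repeated use of the $\mathsf{FINDREP}$ primitive from \Cref{cor:1-bit-1}, following the outline given for item~(b) of \Cref{thm:bal-informal} in \Cref{sec:tech-3}. First I would sample $q = \Theta(B^2 \log(k/\delta))$ sets $R_1,\dots,R_q \subseteq U$, each formed by $\Theta(k/B)$ independent uniform samples from $U$, and then --- all in one non-adaptive round --- run $\mathsf{FINDREP}(x,R_i)$ for every $x \in U$ and every $i \in [q]$. Since $|R_i| = O(k)$, each call makes $O(\log k)$ subset queries of size $O(k)$ by \Cref{cor:1-bit-1}, so the total is $nq \cdot O(\log k) = O(B^2 n \log(k/\delta) \log k)$ queries of size $O(k)$, matching the claimed bound.

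For the reconstruction phase I would exploit the one-sided guarantee of $\mathsf{FINDREP}$: for each $i \in [q]$ and each $y \in R_i$, the set $S_{i,y} := \{ x \in U : \mathsf{FINDREP}(x,R_i) \text{ outputs } y \}$ is \emph{always} contained in the cluster $C(y)$ --- because by \Cref{cor:1-bit-1} the subroutine outputs a witness only after certifying $|R_i \cap C(x)| = 1$, and that witness is then necessarily the unique element of $R_i$ in $x$'s cluster --- and moreover $S_{i,y} = C(y)$ whenever $|R_i \cap C(y)| = 1$. I would then form the graph $G$ on vertex set $U$ with an edge $\{x,x'\}$ whenever $x,x' \in S_{i,y}$ for some $i,y$, and output the connected components of $G$ as the clustering. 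Every edge of $G$ lies inside a single cluster, so components never straddle two clusters; hence correctness reduces to showing that, with probability $\ge 1-\delta$, every cluster $C_j$ satisfies $|R_i \cap C_j| = 1$ for at least one $i \in [q]$. When this holds, writing $z_j$ for the unique point of $R_i \cap C_j$, we get $S_{i,z_j} = C_j$, so all of $C_j$ forms a clique in $G$ and thus lies in one component, and the components are exactly the $k$ clusters.

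The probabilistic core is the estimate
\[
\Pr_{R_i}\big[|R_i \cap C_j| = 1\big] \;\ge\; |R_i| \cdot \frac{|C_j|}{n} \Big(1 - \frac{|C_j|}{n}\Big)^{|R_i|-1} \;=\; \Omega(1/B^2),
\]
which is exactly where the $B$-balance hypothesis $\frac{n}{Bk} \le |C_j| \le \frac{Bn}{k}$ and the choice $|R_i| = \Theta(k/B)$ enter: the product of the first two factors is $\Omega(1/B^2)$, while $(1 - |C_j|/n)^{|R_i|-1} \ge (1 - B/k)^{\Theta(k/B)} = \Omega(1)$. Given this, independence of $R_1,\dots,R_q$ gives a per-cluster failure probability of $(1 - \Omega(1/B^2))^q \le \delta/k$ for $q = \Theta(B^2 \log(k/\delta))$, and a union bound over the $k$ clusters finishes the argument.

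I do not anticipate a serious obstacle here: the two points needing care are (i)~verifying that $\mathsf{FINDREP}$ never yields a false positive, which is immediate from the certification property of \Cref{cor:1-bit-1}, and (ii)~pinning down the constant in the $\Omega(1/B^2)$ bound so that $q = O(B^2 \log(k/\delta))$ repetitions suffice --- both routine. An alternative reconstruction, closer in spirit to \Alg{bal-1}, would instead record for each $x$ the set of witnesses it maps to and argue that under the good event this uniquely determines $C(x)$; the connected-components formulation above is slightly cleaner, so that is the route I would take.
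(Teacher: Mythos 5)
Your proposal is correct and follows essentially the same route as the paper's proof: sample $q = \Theta(B^2\log(k/\delta))$ sets of $\Theta(k/B)$ uniform points, run $\mathsf{FINDREP}(x,R_i)$ for all $x,i$, show via the same $\Omega(1/B^2)$ hit-exactly-once estimate and union bound that every cluster is singly hit by some $R_i$, and read off clusters as connected components (the paper uses a bipartite point--representative graph rather than your same-witness cliques, but this is an immaterial difference). The query count and the one-sided soundness of $\mathsf{FINDREP}$ are handled exactly as in the paper.
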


\begin{proof} We will use the following subroutine which we introduced in \Cref{sec:tech-3}.

\corollaryonebit*

The pseudocode for the algorithm is given in \Alg{bal-2}. In words, \Cref{cor:1-bit-1} says that if we have a set $R$ containing exactly one representative from $C(x)$, then with $O(\log |R|)$ subset queries we can identify that representative. Thus, suppose we have a collection of sets $R_1,\ldots,R_q$ such that for every cluster $j \in [k]$, there is some $R_i$ containing a unique representative from $C_j$. Consider the bipartite graph where on the left we have $U$ and on the right we have $R_1 \cup \cdots \cup R_q$. Then, for every $x \in U$ and every $R_i$ we can run the procedure from \Cref{cor:1-bit-1}, and if it returns a representative $y \in R_i \cap C(x)$, then we add the edge $(x,y)$ to this graph. By the property of $R_1,\ldots,R_q$, two vertices $x,y \in U$ belong to the same cluster iff they are connected by a path of length $2$ in this graph. We show that setting $q = \Theta(B^2 \log (k/\delta))$ and letting each $R_i$ be a random sample of $k/B$ elements from $U$ results in a collection of sets with this good property with high probability. This leads to a query complexity of $n \cdot q \cdot O(\log k) = O(n \log(k/\delta) \log k)$. 

\begin{algorithm}
\caption{Second Algorithm for Balanced Clustering} \label{alg:bal-2}
\textbf{Input:} Subset query access to a $B$-balanced partition $C_1 \sqcup \cdots \sqcup C_k = U$ of $|U| = n$ points\;
Choose $q = eB^2 \ln (k/\delta)$ sets $R_{1},\ldots,R_{q}$ each formed by $\frac{k}{B}$ uniform samples from $U$\;
Construct a bipartite graph $G(U,\bigcup_{j=1}^{q} R_j,E)$ as follows\; \For{$x \in U$ and $i \in [q]$} {
Run the algorithm $\mathsf{FINDREP}$ from \Cref{cor:1-bit-1} on $(x,R_i)$\;
\If{the algorithm certifies there is a unique $y \in R_i$ such that $x,y$ are in the same cluster} {
Add the edge $(x,y)$ to $E(G)$\;
}
}
Let $\widetilde{C}_1,\ldots,\widetilde{C}_{\ell}$ denote the connected components of $G$\;
\textbf{Output} the clustering $(\widetilde{C}_1,\ldots,\widetilde{C}_{\ell})$\;
\end{algorithm}



The algorithm makes $n \cdot q \cdot O(\log (k/B)) = O(B^2 n \log(k/\delta) \log k)$ queries. The correctness follows immediately from the following claim. \end{proof}

\begin{claim} \label{clm:reps} With probability at least $1-\delta$, for every $j \in [k]$, there exists $i \in [q]$ such that $|R_i \cap C_j| = 1$. \end{claim}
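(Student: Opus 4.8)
The plan is to show that for each fixed cluster $C_j$, a single random set $R_i$ of $k/B$ uniform samples hits $C_j$ exactly once with probability bounded below by a constant depending only on $B$; then boost over the $q$ independent sets $R_1,\ldots,R_q$, and finally take a union bound over the $k$ clusters. Since the clustering is $B$-balanced we have $\frac{n}{Bk} \leq |C_j| \leq \frac{Bn}{k}$, so writing $p_j = |C_j|/n \in [\frac{1}{Bk}, \frac{B}{k}]$ and $m = |R_i| = k/B$, the probability that $R_i$ hits $C_j$ exactly once is
\[
\Pr_{R_i}[|R_i \cap C_j| = 1] = m \cdot p_j \cdot (1-p_j)^{m-1} \text{.}
\]
First I would lower bound each factor: $m p_j \geq \frac{k}{B} \cdot \frac{1}{Bk} = \frac{1}{B^2}$, and $(1-p_j)^{m-1} \geq (1-\frac{B}{k})^{k/B} \geq \frac{1}{e}$ for $k$ not too small (a standard inequality, e.g. $(1-x)^{1/x} \geq \frac{1}{4}$ for $x \leq 1/2$, suffices up to constants; the paper's other proofs use the $\frac{1}{e}$ bound freely). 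Hence $\Pr_{R_i}[|R_i \cap C_j| = 1] \geq \frac{1}{eB^2}$.

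Next, since $R_1,\ldots,R_q$ are independent, the probability that \emph{no} $R_i$ hits $C_j$ exactly once is at most $\left(1-\frac{1}{eB^2}\right)^q$. Plugging in $q = eB^2 \ln(k/\delta)$ gives
\[
\Pr_{R_1,\ldots,R_q}\left[\forall i \in [q] \colon |R_i \cap C_j| \neq 1\right] \leq \left(1-\frac{1}{eB^2}\right)^{eB^2 \ln(k/\delta)} \leq \exp(-\ln(k/\delta)) = \frac{\delta}{k} \text{.}
\]
Finally, union bounding over the $k$ clusters $j \in [k]$ shows that with probability at least $1-\delta$, every cluster $C_j$ has some $R_i$ with $|R_i \cap C_j| = 1$, which is exactly the statement of the claim.

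I do not expect a genuine obstacle here; the only mild subtlety is the lower bound on $(1-p_j)^{m-1}$, which requires $k$ to be at least a small constant (for $k=1$ the statement is vacuous or trivial anyway, and one can absorb tiny cases into the choice of constants). The structure mirrors \Cref{clm:miss-bound-1} and the Case 1 / Case 2 arguments already carried out in the paper, so the bookkeeping is routine. Once the claim holds, the discussion preceding it already explains why \Alg{bal-2} is correct: running $\mathsf{FINDREP}(x,R_i)$ for all $x \in U$, $i \in [q]$ connects each $x$ to the unique representative of its cluster in whichever $R_i$ is good for that cluster, so two points lie in the same connected component of $G$ iff they lie in the same cluster.
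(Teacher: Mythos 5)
Your proposal is correct and follows essentially the same route as the paper: lower bound the single-set hit-exactly-once probability by $\frac{1}{eB^2}$ using $B$-balancedness, boost over the $q = eB^2\ln(k/\delta)$ independent sets, and union bound over the $k$ clusters. The only cosmetic difference is your explicit aside about the $(1-p_j)^{m-1}\geq \frac1e$ step, which (with the exponent $|R_i|-1$) holds exactly as in the paper's calculation.
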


\begin{proof} Fix $j \in [k]$ and $i \in [q]$. We have
\[
\Pr[|R_i \cap C_j| = 1] = |R_i| \cdot \frac{|C_j|}{n} \cdot \left(1-\frac{|C_j|}{n}\right)^{|R_i|-1} \geq \frac{k}{B} \cdot \frac{1}{B k} \cdot \left(1-\frac{B}{k}\right)^{k/B} \geq \frac{1}{e B^2}
\]
and so for a fixed $j \in [k]$,
\[
\Pr[\forall i \in [q] \colon |R_i \cap C_j| \neq 1] \leq \left(1-\frac{1}{eB^2}\right)^{e B^2 \ln (k/\delta)} \leq \delta/k
\]
and so by a union bound
\[
\Pr[\exists j \in [k] \text{, } \forall i \in [q] \colon |R_i \cap C_j| \neq 1] \leq \delta
\]
and this completes the proof. \end{proof}

\section{Algorithms with Two Rounds of Adaptivity} \label{sec:2round}

In this section we describe two algorithms that use \emph{two rounds} of adaptivity. That is, these algorithms are allowed to specify a round of queries, receive the responses, perform some computation, then specify a second round of queries and receive the responses, before finally recovering the clustering. We give a simple \emph{deterministic} algorithm making $O(n \log k)$ queries in \Cref{sec:2round-1} and a randomized algorithm for recovering a balanced clustering with $O(n \log \log k)$ queries in \Cref{sec:2round-2}. Both algorithms exploit the additional round of queries to first compute a set containing exactly one representative from every cluster. 

\subsection{A $2$-Round $O(n \log k)$ Deterministic Algorithm} \label{sec:2round-1}

\begin{theorem} \label{thm:2-round} There is a $2$-round, non-adaptive, deterministic $k$-clustering algorithm using $O(n \log k)$ subset queries. 
\end{theorem}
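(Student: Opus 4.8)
The plan is a two-round strategy: use the first round to pin down a single representative of every cluster, and the second round to binary-search each point against that representative set. Both rounds are deterministic, and the second-round query family depends only on what the first round reveals, so each round is non-adaptive.

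\textbf{Round 1 (finding representatives).} First I would fix an arbitrary ordering $U = \{x_1,\dots,x_n\}$ and, non-adaptively, query every prefix $P_i := \{x_1,\dots,x_i\}$ for $i=1,\dots,n$ (with $P_0 := \emptyset$ and $\query(P_0):=0$). Since $P_i = P_{i-1}\cup\{x_i\}$, the difference $\query(P_i)-\query(P_{i-1})$ lies in $\{0,1\}$ and equals $1$ exactly when the cluster of $x_i$ contains no earlier point, i.e.\ exactly when $x_i$ is the first element of its cluster in this ordering. Call such points \emph{leaders} and set $R = \{x_i : x_i \text{ is a leader}\}$. Every cluster has exactly one leader, so $R$ contains precisely one point from each cluster and $|R| = k = \query(P_n)$; in particular the algorithm also learns $k$. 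This round makes $n$ deterministic queries.

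\textbf{Round 2 (assigning points to representatives).} Writing $R = \{\rho_1,\dots,\rho_k\}$, for every $x \in U$ and every bit position $b \in \{0,\dots,\lceil \log_2 k\rceil-1\}$ I would query $S_b \cup \{x\}$, where $S_b := \{\rho_j : \text{bit } b \text{ of } (j-1) \text{ is } 1\}$. Because the $\rho_j$ lie in distinct clusters, $\query(S_b) = |S_b|$ is known without a query, and $\query(S_b\cup\{x\}) = |S_b| + \mathbf{1}[\text{no } \rho_j \in S_b \text{ shares } x\text{'s cluster}]$. Hence the response on $S_b\cup\{x\}$ reveals bit $b$ of the index $j-1$, where $\rho_j$ is the unique representative in $x$'s cluster; reading off all $\lceil\log_2 k\rceil$ bits identifies $j$, and grouping the points of $U$ by this index recovers the partition. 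Round 2 makes $n\lceil\log_2 k\rceil$ queries, for a total of $O(n\log k)$ deterministic queries over the two rounds.

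\textbf{Where the work is.} There is no heavy calculation here; the only real content is ensuring non-adaptivity within each round. Round 1 works because ``being a leader'' is certified by a fixed family of prefix queries rather than a sequential scan, and Round 2 works because once $R$ is known the quantity $\query(S_b)$ is forced (each $\rho_j$ is alone in its cluster), so the entire binary-search query set for all $x$ can be written down at once. The items to check carefully are exactly these two facts — that prefix differences detect leaders and that $\query(S_b) = |S_b|$ — together with the harmless edge cases ($x$ itself being a leader, or $S_b$ being empty at the top bit), all of which are immediate.
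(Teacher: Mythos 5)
Your proposal is correct and follows essentially the same route as the paper: an identical first round of prefix queries to extract one representative per cluster, and a second round that binary-encodes each point's representative index via subset queries. Your Round 2 simply inlines the paper's $\mathsf{FINDREP}$ primitive (the binary-representation matrix of \Cref{lem:1-bit-cert-1}), streamlined by using the facts that every point has exactly one representative in $R$ and that $\query(S_b)=|S_b|$ is forced since $R$ is an independent set, which saves a constant factor but is the same idea.
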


\begin{proof} Pseudocode for the algorithm is given in \Alg{2round}. In the first round (lines 3-5), we use $n$ queries to compute a set $R \subseteq U = \{x_1,\ldots,x_n\}$ containing exactly one representative from every cluster. This is done by querying every prefix $P_t = \{x_1,\ldots,x_t\}$ and observing that $\query(P_t) - \query(P_{t-1}) = 1$ iff $x_t$ is the only representative for its cluster in $P_t$. Thus, the set $R$ computed in line (4) contains, for each cluster $C$, the first member of $C$ in the ordering $x_1,\ldots,x_n$. 
The second round of queries is used to determine, for every $x \in U$, the unique representative of $C(x)$ in $R$ (see line 8). To accomplish this we use the following subroutine which we introduced in \Cref{sec:tech-3}.

\corollaryonebit*

\begin{algorithm}
\caption{Deterministic $2$-Round Algorithm} \label{alg:2round}
\textbf{Input:} Subset query access to a hidden partition $C_1,\ldots,C_k$ of $U = \{x_1,\ldots,x_n\}$\;
\emph{Round 1:} \\
\textbf{Query} $P_t = \{x_i \colon i \leq t\}$ for every $t \in [n]$\;
Define $R = \{x_t \colon \query(P_t) - \query(P_{t-1}) = 1\}$ containing exactly one point from every cluster\;
For each $y \in R$, define cluster $\widetilde{C}_y = \{y\}$\;
\emph{Round 2:} \\ 
\For{$x \in U$} {
Run the algorithm $\mathsf{FINDREP}$ from \Cref{cor:1-bit-1} on $(x,R)$, which is guaranteed to output the unique $y \in R$ for which $x,y$ lie in the same cluster\;
Place $x$ into $\widetilde{C}_y$\;
}
\textbf{Output} clustering $(\widetilde{C}_y \colon y \in R)$\;
\end{algorithm}

By \Cref{cor:1-bit-1}, the number of queries made in the second round is clearly $O(n \log k)$. \end{proof}


\subsection{A $2$-Round $O(n\log\log k)$ Algorithm for Balanced Clusters} \label{sec:2round-2}

Recall that a clustering $C_1 \sqcup \cdots \sqcup C_k = U$ is $B$-balanced if $\frac{n}{Bk} \leq |C_j| \leq \frac{Bn}{k}$. 

\begin{theorem} \label{thm:2-round-bal} There is a $2$-round, non-adaptive algorithm which recovers a $B$-balanced $k$-clustering using 
\[
O\left(\sqrt{B} \cdot n \log \Big(\frac{\log k}{\delta}\Big)\right)
\]
subset queries and succeeds with probability $1-\delta$. In particular, if $B$ is a constant and $\delta = 1/\poly\log k$, then the query complexity is $O(n \log \log k)$. \end{theorem}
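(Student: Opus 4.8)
The plan is to reuse the template behind \Cref{thm:2-round}: spend the first round computing, via the $n$ prefix queries $P_t=\{x_1,\dots,x_t\}$, a set $R\subseteq U$ containing exactly one representative per cluster (here $\query(P_t)-\query(P_{t-1})=1$ certifies that $x_t$ is the first member of its cluster), and spend the second round attaching each $x\in U$ to its representative in $R$. In the deterministic algorithm the second step costs $O(n\log k)$, because it runs $\mathsf{FINDREP}(x,R)$ from \Cref{cor:1-bit-1} for every $x$ and $|R|=k$, so each call costs $\Theta(\log k)$. The whole point of the balanced case is to bring the per-point cost of this second step down to $O(\log\log k)$, which I would do by exploiting balanced-ness to replace the single size-$k$ representative set by a small family of representative sets of size only $\polylog(k/\delta)$, together with a cheap mechanism that tells each $x$ which of these sets to use.

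Concretely, in round 1 I would additionally sample $q=\Theta(\sqrt{B}\cdot\log(\log k/\delta))$ independent ``hash sets'' $H_1,\dots,H_q$, each formed by $\Theta(k/\sqrt B)$ uniform samples from $U$, and query them as in \Alg{bal-2}; by a calculation analogous to \Cref{clm:reps} (now with an $\Omega(1/\sqrt B)$ hitting probability and a union bound over only $\polylog(k)$ events), the balanced assumption should guarantee w.p.\ $1-\delta$ that these sets, once intersected with the $R$ found in round 1, partition the clusters into $g=\polylog(k/\delta)$ groups $\Gamma_1,\dots,\Gamma_g$, each a union of $O(k/g)$ clusters, in a way computable once round 1 is over. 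Round 2 would then (i) route each $x$ to the unique group $\Gamma_{i(x)}$ with $C(x)\subseteq\bigcup_{C\in\Gamma_{i(x)}}C$ using $O(\log g)=O(\log\log k)$ bit-indexed queries of the form $\query\!\big(\bigl(\bigcup_{i:\ \mathrm{bit}_\ell(i)=1}(R\cap\Gamma_i)\bigr)\cup\{x\}\big)$, reading the bits of $i(x)$ off one at a time (a bit is $1$ iff adding $x$ does not increase $\query$), and (ii) recover $C(x)$ by $\mathsf{FINDREP}(x,R\cap\Gamma_{i(x)})$. Correctness then follows by a union bound over the two failure events (the prefix computation is exact and deterministic; the grouping succeeds w.p.\ $1-\delta$), exactly as in the proofs of \Cref{thm:k-bal-1,thm:2-round}.

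The obstacle I expect to dominate is that this two-level split makes the routing cost $O(\log g)$ and the within-group recovery cost $O(\log(k/g))$, whose sum is $\Theta(\log k)$ for \emph{every} choice of $g$ --- so naively this only reproduces the $O(n\log k)$ bound, and getting to $O(n\log\log k)$ cannot be done by a single coarse/fine split. The resolution must be genuinely multi-resolution: one wants round 1 to already do \emph{most} of the clustering, leaving only $\polylog(k/\delta)$ residual clusters, so that round 2 can attach every point using one \emph{shared} representative set of size $O(\log(k/\delta))$ at total cost $O(n\log(\log k/\delta))$. Making round 1 recover all but $\polylog(k/\delta)$ of the (balanced) clusters inside the $O(\sqrt B\cdot n\log(\log k/\delta))$ budget --- presumably by interleaving the prefix trick with the random-set/connectivity ideas of \Cref{sec:tech-2}, using balanced-ness both to bound the effective ``depth'' inside round 1 by $O(\log\log k)$ and to drive the $\sqrt B$ factor --- is where I expect the real work, and the delicate counting, to lie.
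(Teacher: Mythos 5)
There is a genuine gap: your write-up correctly sets up round 1 (prefix queries giving a representative set $R$) and correctly diagnoses that any two-level ``route to a group, then $\mathsf{FINDREP}$ inside the group'' scheme costs $\Theta(\log g)+\Theta(\log(k/g))=\Theta(\log k)$ per point, but it never supplies the idea that actually breaks this barrier --- you explicitly defer it (``where I expect the real work\ldots to lie''). The paper's resolution is not to do more clustering in round 1; it is to make the per-point attachment cost $O(1)$ amortized in round 2 by identifying representatives for many points \emph{simultaneously}. Concretely, round 1 additionally samples $s=10\sqrt{B/k}\,n\ln(\tau/\delta)$ random sets $I_1,\dots,I_s$, each of $\sqrt{k/(10B)}$ uniform points; by balancedness and a birthday-type union bound over pairs, each $I_i$ is an \emph{independent set} (one point per cluster) with constant probability, and this is certified by the single query $\query(I_i)=|I_i|$. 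In round 2, for each independent $I_i$ one reconstructs the matching $M(I_i,R)$ using the Grebinski--Kucherov non-adaptive bipartite reconstruction (\Cref{lem:GK-matching}); since an edge-count query between independent sets reduces to three subset queries via $m(S,T)=\query(S)+\query(T)-\query(S\cup T)$ (\Cref{cor:matching}), this costs $O\big(|I_i|\cdot\frac{\log|R|}{\log|I_i|}\big)=O(|I_i|)$ subset queries, i.e.\ $O(1)$ per point of $I_i$, which is exactly the $o(\log k)$-per-point mechanism your scheme lacks. Summing over all $I_i$ gives $O(\sqrt{B}\,n\ln(\tau/\delta))$ queries; the points not covered by any independent set number at most $n/\tau$ with probability $1-\delta$ (Markov on $\Exp[|U\setminus V|]$), and only these fall back to $\mathsf{FINDREP}$ at $O(\log k)$ each, so choosing $\tau=\ln k$ yields the claimed $O(\sqrt{B}\,n\log(\log k/\delta))$ bound.

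Your speculated alternative --- have round 1 already recover all but $\polylog(k/\delta)$ clusters within an $O(\sqrt{B}\,n\log(\log k/\delta))$ budget --- is also unlikely to be salvageable as stated: round-1 queries are non-adaptive, and the paper's best fully non-adaptive bounds for balanced clustering are $O(n\log k)$ and $O(n\log^2 k)$ (\Cref{thm:k-bal-1,thm:k-bal-2}), so you would be asking round 1 alone to beat the very barrier the two-round structure is meant to circumvent. The essential ingredient to import is the matching-reconstruction primitive, not a stronger first round.
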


\begin{proof} We will use the following result of \cite{GK00} on query-based reconstruction of bipartite graphs as a black-box. Given a bipartite graph $G(V,W,E)$, an edge-count query on $(S,T)$ where $S \subseteq V$, $T \subseteq W$ returns $|E \cap S \times T|$, the number of edges between $S$ and $T$.
     
\begin{lemma} [\cite{GK00}, see Section 4.3] \label{lem:GK-matching} There is a deterministic, non-adaptive algorithm which reconstructs any bipartite graph $G(V,W,E)$ where (a) $|V| = n$, (b) $|W| = m$, and (c) every vertex in $V$ has degree at most $1$, using $O(n \cdot \frac{\log m}{\log n})$ edge-count queries. \end{lemma} 

We will say a set $A \subseteq U$ is an \emph{independent set} if each element of $A$ belongs to a distinct cluster. Given two independent sets $A,B$ let $M(A,B)$ be the matching where there is an edge from $x \in A$ to $y \in B$ if $x,y$ belong to the same cluster. We observe that edge-count queries in $M(A,B)$ can be simulated by subset queries, leading to the following corollary.

\begin{corollary} \label{cor:matching} Suppose that $A,B \subseteq U$ are independent sets. There is a deterministic, non-adaptive algorithm $\mathsf{IS}$-$\mathsf{MATCH}$ which reconstructs $M(A,B)$ using $O(|A| \cdot \frac{\log |B|}{\log |A|})$ subset queries. \end{corollary}

\begin{proof} We need to show that an edge-count query $(S,T)$ where $S \subseteq A$, $B \subseteq T$ can be simulated by a constant number of subset queries. Let $m(S,T)$ denote the number of edges in $M(A,B)$ between $S$ and $T$. Since $A,B$ are independent sets, $S,T$ are also independent sets, and so we have 
\[
m(S,T) = \query(S) + \query(T) - \query(S \cup T)
\]
since $m(S,T)$ is the number of clusters intersected by both $S$ and $T$. Thus, one edge-count query to $M(A,B)$ can be simulated by three subset queries and this completes the proof. \end{proof}

Pseudocode for the algorithm is given in \Alg{2round-bal}. The algorithm is parameterized in terms of a value $\tau > 1$ which we will choose later in the proof so as to minimize the query complexity. The first round is used to accomplish the following. In lines (4-5) we construct a set $R$ containing exactly one representative from every cluster and use this to define an initial clustering. In line (6) we sample random sets $I_1,\ldots,I_s$ and in line (8) make a query to each to check whether or not it is an independent set. Line (10) defines $V$ which is the union of all the $I_i$'s which are independent sets. We now describe the second round. In line (14) we run the procedure of \Cref{cor:matching} to construct the matching $M(I_i,R)$ whenever $I_i$ is an independent set. Finally, we determine for every $x \in U$, the unique $y \in R$ for which $x,y$ belong to the same cluster. If $x \in V$ this is done in lines (18-20) by taking $x$'s neighbor in $M(I_i,R)$ for some independent set $I_i$. If $x \notin V$, this is done in lines (23-24) by running the $\mathsf{FINDREP}$ procedure of \Cref{cor:1-bit-1}. 

\begin{algorithm}
\caption{Randomized $2$-Round Algorithm for Balanced Clustering} \label{alg:2round-bal}
\textbf{Input:} Subset query access to a hidden partition $C_1 \sqcup \cdots \sqcup C_k = U$ of $|U| = n$ points\;
\emph{Round 1:} \\
\textbf{Query} $P_t = \{x_i \colon i \leq t\}$ for every $t \in [n]$\;
Define $R = \{x_t \colon \query(P_t) - \query(P_{t-1}) = 1\}$ containing exactly one element of every cluster\;
For each $y \in R$, define initial cluster $\widetilde{C}_y = \{y\}$\;
Sample $\smash{s = 10 \sqrt{\frac{B}{k}} \cdot n \ln (\tau/\delta)}$ sets $I_1,\ldots,I_s \subset U$ each formed by $\sqrt{\frac{k}{10B}}$ independent uniform samples from $U$\;
\For{$i \in [s]$}
{
\textbf{Query} $I_i$. (This is to check if $\query(I_i) = |I_i|$, i.e. whether $I_i$ is an independent set.)\;
}
Let $V = \bigcup_{i \in [s] \colon \query(I_i) = |I_i|} I_i$ be the points in $U$ lying in an independent set among $I_1,\ldots,I_s$\;
If $|V| < n(1-\frac{1}{\tau})$, then \textbf{output fail}. Otherwise, continue\;
\emph{Round 2:} \\ 
\For{$i \in s \colon \query(I_i) = |I_i|$}
{
Run the algorithm $\mathsf{IS}$-$\mathsf{MATCH}$ of \Cref{cor:matching} on sets $(I_i,R)$ and let $M_i \subset I_i \times R$ be the output\;
(Note: by construction of $R$, \Cref{cor:matching} guarantees that every element of $I_i$ is matched by $M_i$)\;
}
\For{$x \in U$}
{
\If{$x \in V$}
{
Choose $I_i$ such that $x \in I_i$ and $I_i$ is an independent set\;
Let $y \in R$ denote the neighbor of $x$ in the matching $M_i \subset I_i \times R$\;
Place $x$ into $\widetilde{C}_y$\;
}
\If{$x \in U \setminus V$} {
Run the algorithm $\mathsf{FINDREP}$ of \Cref{cor:1-bit-1} to find the unique $y \in R$ for which $x,y$ lie in the same cluster\;
Place $x$ into $\widetilde{C}_y$\;
}
}
\textbf{Output} clustering $(\widetilde{C}_y \colon y \in R)$\;
\end{algorithm}

The algorithm always either outputs fail in line (11), or correctly reconstructs the clustering by \Cref{cor:matching} and \Cref{cor:1-bit-1}. Thus we only need to argue that $|U \setminus V| \leq \frac{n}{\tau}$ occurs with probability at least $1-\delta$ allowing it to pass the check in line (11), and that conditioned on this, the algorithm makes $O(n \ln \ln k)$ queries when we set $\tau$ appropriately. Let us first count the number of queries conditioned on this event. Line (8) performs $s$ queries. Since each $I_i$ is of size $\sqrt{k}$ and $|R| = k$, by \Cref{cor:matching}, lines (13-14) perform a total of $O(s \cdot \sqrt{k}) = O(\sqrt{B} \cdot n \ln (\tau/\delta))$ queries. Lines (23-24) use $|U \setminus V| O(\log k) = O(\frac{n}{\tau} \log k)$ queries. Setting $\tau = \ln k$ yields a query complexity of $O(\sqrt{B} n \ln (\frac{\ln k}{\delta}))$. We now prove in \Cref{clm:Vlarge} that the required bound on $|U \setminus V|$ holds with high probability, and this completes the proof. \end{proof}

\begin{claim} \label{clm:Vlarge} With probability at least $1-\delta$, we have $|U \setminus V| \leq \frac{n}{\tau}$. \end{claim}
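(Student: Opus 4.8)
The plan is to show that each fixed point $x\in U$ fails to lie in any of the sampled independent sets only with probability polynomially small in $\delta/\tau$, and then to conclude by linearity of expectation together with Markov's inequality. Fix $x\in U$ and write $t=\sqrt{k/(10B)}$ for the number of samples forming each $I_i$ (we may assume $k\ge 10B$; otherwise one reads $t$ as $1$, which only makes the estimate easier). Let $p=\Pr[x\in I_i\text{ and }\query(I_i)=|I_i|]$ be the probability that $I_i$ contains $x$ and is an independent set. I would lower bound $p$ by restricting to the sub-event that exactly one of the $t$ samples equals $x$ while the remaining $t-1$ samples land in $t-1$ clusters that are pairwise distinct and distinct from $C(x)$; on this event $I_i$ consists of $t$ points in $t$ distinct clusters, hence is an independent set and contains $x$. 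Summing over the (disjoint) choices of which sample equals $x$, and revealing the other samples one at a time, the next one lands in a cluster not yet hit (and distinct from $C(x)$) with probability at least $1-\frac{jB}{k}$ when $j$ clusters have already been hit, since those $j$ clusters have total size at most $j\cdot\frac{Bn}{k}$ by $B$-balancedness; because $\sum_{j=1}^{t-1}\frac{jB}{k}\le \frac{Bt^2}{2k}=\frac1{20}$, the product of these probabilities is at least $\frac{19}{20}$, giving $p\ge \frac{19\,t}{20\,n}=\Omega\!\big(\tfrac1n\sqrt{k/B}\big)$.

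Next, since $I_1,\dots,I_s$ are independent, $\Pr[x\notin V]=(1-p)^s\le e^{-ps}$. Plugging in $s=10\sqrt{B/k}\,n\ln(\tau/\delta)$ makes $ps$ equal to a constant (at least $\tfrac32$) times $\ln(\tau/\delta)$, so $\Pr[x\notin V]\le (\delta/\tau)^{3/2}\le(\delta/\tau)^2$, where we used $\tau>1\ge\delta$ (the claim is trivial when $\delta\ge1$). By linearity of expectation $\EX[\,|U\setminus V|\,]=\sum_{x\in U}\Pr[x\notin V]\le n(\delta/\tau)^2$, and Markov's inequality then gives
\[
\Pr\!\left[|U\setminus V|\ \ge\ \tfrac n\tau\right]\ \le\ \frac{n(\delta/\tau)^2}{n/\tau}\ =\ \frac{\delta^2}{\tau}\ \le\ \delta,
\]
which is exactly the claim.

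The step needing the most care is the lower bound on $p$: one has to confirm that the particular parameters chosen in \Alg{2round-bal}, namely $t=\sqrt{k/(10B)}$ samples per set and $s=10\sqrt{B/k}\,n\ln(\tau/\delta)$ sets, really make $ps$ bounded below by a constant strictly greater than $1$ times $\ln(\tau/\delta)$, so that the final Markov step closes with room to spare. This is only constant-chasing, but it is also where one should note the (harmless) point that sampling with replacement can only shrink $I_i$, so any duplicate sample is automatically excluded from the good sub-event and never hurts the bound. Everything else is routine.
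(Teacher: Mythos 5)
Your proof is correct and follows essentially the same route as the paper: fix $x$, lower-bound the per-set probability that $x$ lands in an independent $I_i$ (the paper factors this as $\Pr[x\in I_i]\cdot\Pr[\text{independent}\mid x\in I_i]$ with a union bound over pairs, getting $\ge\frac{1}{10n}\sqrt{k/B}$, while you count a disjoint good sub-event by sequential revealing, getting $\ge\frac{19t}{20n}$ — the same order), then use independence of the $I_i$'s, linearity of expectation, and Markov, exactly as in the paper. One small slip: the step $(\delta/\tau)^{3/2}\le(\delta/\tau)^2$ is backwards since $\delta/\tau<1$, but it is harmless because your constant is really $\frac{19}{2\sqrt{10}}\approx 3\ge 2$, so $e^{-ps}\le(\delta/\tau)^2$ holds directly, and even the weaker bound $(\delta/\tau)^{3/2}$ closes the Markov step since $\delta^{3/2}\tau^{-1/2}\le\delta$ whenever $\delta\le\tau$.
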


\begin{proof} We prove an appropriate bound on $\Exp[|U \setminus V|]$ and then apply Markov's inequality. Fix $x\in U$. For $i \in [s]$, let $\cE_{x,i}$ denote the event that $x \in I_i$ and $I_i$ is an independent set. Observe that $x \in U \setminus V$ iff $\cE_{x,i}$ does not occur for every $i \in [s]$. We first lower bound the probability of $\cE_{x,i}$. Observe that
\begin{align} \label{eq:Exi}
    \Pr_{I_i}[\cE_{x,i}] = \Pr[x \in I_i] \Pr[I_i \text{ an independent set } ~|~ x \in I_i]
\end{align}
and 
\[
\Pr_{I_i}[x \in I_i] = 1-\left(1-\frac{1}{n}\right)^{|I_i|} \geq 1-\exp\left(-\frac{|I_i|}{n}\right) \geq \frac{|I_i|}{2n} \geq \sqrt{\frac{k}{B}} \cdot \frac{1}{8n}
\]
where we have used the inequality $\exp(-z) \leq 1-\frac{z}{2}$ for $z \in [0,1]$. Next, by a simple union bound over all pairs in $I_i$ and the fact that every cluster is bounded as $|C_j| \leq \frac{Bn}{k}$, we have
\[
\Pr[I_i \text{ not an independent set } ~|~ x \in I_i] \leq |I_i|^2 \frac{B}{k} \leq \frac{1}{10} \text{.}
\]
Plugging these bounds back into \Cref{eq:Exi} yields $\Pr_{I_i}[\cE_{x,i}] \geq \sqrt{\frac{k}{B}} \cdot \frac{1}{10n}$ and noting that these events are independent due to the $I_i$'s being independent yields
\begin{align}
    \Pr[x \notin V] = \Pr[\neg \cE_{x,i},~ \forall i \in [s] ] \leq \left(1-\sqrt{\frac{k}{B}} \cdot \frac{1}{10n}\right)^s = \exp(-\ln (\tau/\delta)) = \delta/\tau
\end{align}
where we have used the definition of $s = 10 \sqrt{B/k} \cdot n \ln (\tau/\delta)$. Finally, this implies $\Exp[|U\setminus V|] \leq \delta n/\tau$ and so by Markov's inequality $\Pr[|U\setminus V| > \frac{n}{\tau}] < \delta$. This completes the proof. \end{proof}

\bibliographystyle{alpha}
\bibliography{biblio}

\newcommand{\etalchar}[1]{$^{#1}$}
\begin{thebibliography}{BCBLP20}

\bibitem[Abb18]{abbe2018community}
Emmanuel Abbe.
\newblock Community detection and stochastic block models: recent developments.
\newblock {\em Journal of Machine Learning Research}, 2018.

\bibitem[AC04]{AC04}
Dana Angluin and Jiang Chen.
\newblock Learning a hidden graph using {$O(\log n)$} queries per edge.
\newblock In {\em COLT}, 2004.

\bibitem[ACK21]{assadi2021graph}
Sepehr Assadi, Deeparnab Chakrabarty, and Sanjeev Khanna.
\newblock Graph connectivity and single element recovery via linear and or
  queries.
\newblock In {\em 29th Annual European Symposium on Algorithms (ESA)}, 2021.

\bibitem[ACN08]{ACN08}
Nir Ailon, Moses Charikar, and Alantha Newman.
\newblock Aggregating inconsistent information: {R}anking and clustering.
\newblock {\em Journal of the ACM}, 2008.

\bibitem[AEG{\etalchar{+}}22]{AEGLMN22}
Simon Apers, Yuval Efron, Pawel Gawrychowski, Troy Lee, Sagnik Mukhopadhyay,
  and Danupon Nanongkai.
\newblock Cut query algorithms with star contraction.
\newblock In {\em Proceedings, IEEE Symposium on Foundations of Computer
  Science (FOCS)}, 2022.

\bibitem[Aig88]{A88}
Martin Aigner.
\newblock Combinatorial search.
\newblock {\em John Wiley \& Sons, Inc.}, 1988.

\bibitem[AKBD16]{ashtiani2016clustering}
Hassan Ashtiani, Shrinu Kushagra, and Shai Ben-David.
\newblock Clustering with same-cluster queries.
\newblock {\em Advances in neural information processing systems}, 2016.

\bibitem[AL26]{DBLP:journals/corr/abs-2603-14306}
Tomer Adar and Amit Levi.
\newblock Almost-uniform edge sampling: Leveraging independent-set and local
  graph queries.
\newblock {\em CoRR}, abs/2603.14306, 2026.

\bibitem[AMM22]{AMM22}
Raghavendra Addanki, Andrew McGregor, and Cameron Musco.
\newblock Non-adaptive edge counting and sampling via bipartite independent set
  queries.
\newblock In {\em 30th Annual European Symposium on Algorithms, {ESA}}, 2022.

\bibitem[Bai13]{bair2013semi}
Eric Bair.
\newblock Semi-supervised clustering methods.
\newblock {\em Wiley Interdisciplinary Reviews: Computational Statistics},
  2013.

\bibitem[BB08]{balcan2008clustering}
Maria-Florina Balcan and Avrim Blum.
\newblock Clustering with interactive feedback.
\newblock In {\em International Conference on Algorithmic Learning Theory}.
  Springer, 2008.

\bibitem[BBC04]{BBC04}
Nikhil Bansal, Avrim Blum, and Shuchi Chawla.
\newblock Correlation clustering.
\newblock {\em Machine learning}, 2004.

\bibitem[BCBLP20]{bressan2020exact}
Marco Bressan, Nicol{\`o} Cesa-Bianchi, Silvio Lattanzi, and Andrea Paudice.
\newblock Exact recovery of mangled clusters with same-cluster queries.
\newblock {\em Advances in Neural Information Processing Systems}, 2020.

\bibitem[BHR{\etalchar{+}}20]{BHRRS20}
Paul Beame, Sariel Har{-}Peled, Sivaramakrishnan~Natarajan Ramamoorthy, Cyrus
  Rashtchian, and Makrand Sinha.
\newblock Edge estimation with independent set oracles.
\newblock {\em {ACM} Trans. Algorithms}, 2020.

\bibitem[BM11a]{BM11b}
Nader~H. Bshouty and Hanna Mazzawi.
\newblock Algorithms for the coin weighing problems with the presence of noise.
\newblock {\em Electron. Colloquium Comput. Complex.}, 2011.

\bibitem[BM11b]{BM11a}
Nader~H. Bshouty and Hanna Mazzawi.
\newblock On parity check (0, 1)-matrix over $\mathbb{Z}_p$.
\newblock In {\em Proceedings, ACM-SIAM Symposium on Discrete Algorithms
  (SODA)}, 2011.

\bibitem[BMS25]{DBLP:conf/colt/BlackMS25}
Hadley Black, Arya Mazumdar, and Barna Saha.
\newblock Learning partitions with optimal query and round complexities.
\newblock In {\em COLT}, 2025.

\bibitem[Bsh09]{bshouty2009optimal}
Nader~H Bshouty.
\newblock Optimal algorithms for the coin weighing problem with a spring scale.
\newblock In {\em COLT}, 2009.

\bibitem[CCL{\etalchar{+}}24]{CCLLNV24}
Nairen Cao, Vincent {Cohen-Addad}, Euiwoong Lee, Shi Li, Alantha Newman, and
  Lukas Vogl.
\newblock Understanding the cluster {LP} for correlation clustering.
\newblock In {\em Proceedings of the 56th Annual ACM Symposium on Theory of
  Computing (STOC)}, 2024.

\bibitem[CCM03]{cohn2003semi}
David Cohn, Rich Caruana, and Andrew McCallum.
\newblock Semi-supervised clustering with user feedback.
\newblock {\em Constrained clustering: advances in algorithms, theory, and
  applications}, 2003.

\bibitem[Cho13]{choi2013polynomial}
Sung-Soon Choi.
\newblock Polynomial time optimal query algorithms for finding graphs with
  arbitrary real weights.
\newblock In {\em Conference on Learning Theory}. PMLR, 2013.

\bibitem[CK08]{choi2008optimal}
Sung-Soon Choi and Jeong~Han Kim.
\newblock Optimal query complexity bounds for finding graphs.
\newblock In {\em Symposium on Theory of Computing (STOC)}, 2008.

\bibitem[CL24a]{CL24}
Deeparnab Chakrabarty and Hang Liao.
\newblock Learning partitions using rank queries.
\newblock In {\em Proceedings, Foundations of Software Technology and
  Theoretical Computer Science. (FSTTCS)}, 2024.

\bibitem[CL24b]{liao2024learning}
Deeparnab Chakrabarty and Hang Liao.
\newblock Learning spanning forests optimally in weighted undirected graphs
  with cut queries.
\newblock In {\em International Conference on Algorithmic Learning Theory
  (ALT)}, 2024.

\bibitem[CLW20]{LW20}
Xi~Chen, Amit Levi, and Erik Waingarten.
\newblock Nearly optimal edge estimation with independent set queries.
\newblock In {\em Proceedings, ACM-SIAM Symposium on Discrete Algorithms
  (SODA)}, 2020.

\bibitem[CM66]{cantor1966determination}
David~G Cantor and WH~Mills.
\newblock Determination of a subset from certain combinatorial properties.
\newblock {\em Canadian Journal of Mathematics}, 1966.

\bibitem[CMSY15]{CMSY15}
Shuchi Chawla, Konstantin Makarychev, Tselil Schramm, and Grigory Yaroslavtsev.
\newblock Near optimal {LP} rounding algorithm for correlation clustering on
  complete and complete $k$-partite graphs.
\newblock In {\em Proceedings of the 47th Annual ACM Symposium on Theory of
  Computing (STOC)}, 2015.

\bibitem[DH00]{DH00}
Dingzhu Du and Frank~K Hwang.
\newblock Combinatorial group testing and its applications.
\newblock {\em World Scientific}, 2000.

\bibitem[DHH00]{du2000combinatorial}
Dingzhu Du, Frank~K Hwang, and Frank Hwang.
\newblock {\em Combinatorial group testing and its applications}, volume~12.
\newblock World Scientific, 2000.

\bibitem[DPMT22]{del2022clustering}
Alberto Del~Pia, Mingchen Ma, and Christos Tzamos.
\newblock Clustering with queries under semi-random noise.
\newblock In {\em Conference on Learning Theory}. PMLR, 2022.

\bibitem[FM21]{flodin2021probabilistic}
Larkin Flodin and Arya Mazumdar.
\newblock Probabilistic group testing with a linear number of tests.
\newblock In {\em 2021 IEEE International Symposium on Information Theory
  (ISIT)}. IEEE, 2021.

\bibitem[GH12]{gu2012towards}
Quanquan Gu and Jiawei Han.
\newblock Towards active learning on graphs: An error bound minimization
  approach.
\newblock In {\em 2012 IEEE 12th International Conference on Data Mining}.
  IEEE, 2012.

\bibitem[GK00]{GK00}
Vladimir Grebinski and Gregory Kucherov.
\newblock Optimal reconstruction of graphs under the additive model.
\newblock {\em Algorithmica}, 2000.

\bibitem[HMMP19]{huleihel2019same}
Wasim Huleihel, Arya Mazumdar, Muriel M{\'e}dard, and Soumyabrata Pal.
\newblock Same-cluster querying for overlapping clusters.
\newblock {\em Advances in Neural Information Processing Systems}, 32, 2019.

\bibitem[HS87]{HS87}
F.~Hwang and V.~S\'{o}s.
\newblock Non-adaptive hypergeometric group testing.
\newblock {\em Studia Sci. Math. Hungar}, 1987.

\bibitem[LHK10]{leskovec2010predicting}
Jure Leskovec, Daniel Huttenlocher, and Jon Kleinberg.
\newblock Predicting positive and negative links in online social networks.
\newblock In {\em Proceedings of the 19th international conference on World
  wide web}, 2010.

\bibitem[Lin64]{lindstrom1964combinatory}
Bernt Lindstr{\"o}m.
\newblock On a combinatory detection problem {I}.
\newblock {\em A Magyar Tudom{\'a}nyos Akad{\'e}mia Matematikai Kutat{\'o}
  Int{\'e}zet{\'e}nek K{\"o}zlem{\'e}nyei}, 1964.

\bibitem[Lin66]{lindstrom1966combinatory}
Bernt Lindstr{\"o}m.
\newblock On a combinatory detection problem {II}.
\newblock {\em Studia Sci. Math. Hungar}, 1966.

\bibitem[LSZ21]{lee2021quantum}
Troy Lee, Miklos Santha, and Shengyu Zhang.
\newblock Quantum algorithms for graph problems with cut queries.
\newblock In {\em Proceedings of the 2021 ACM-SIAM Symposium on Discrete
  Algorithms (SODA)}. SIAM, 2021.

\bibitem[Maz10]{mazzawi2010optimally}
Hanna Mazzawi.
\newblock Optimally reconstructing weighted graphs using queries.
\newblock In {\em Proceedings of the Twenty-First Annual ACM-SIAM Symposium on
  Discrete Algorithms}. SIAM, 2010.

\bibitem[Maz16]{mazumdar2016nonadaptive}
Arya Mazumdar.
\newblock Nonadaptive group testing with random set of defectives.
\newblock {\em IEEE Transactions on Information Theory}, 2016.

\bibitem[MN20]{MN20}
Sagnik Mukhopadhyay and Danupon Nanongkai.
\newblock Weighted min-cut: sequential, cut-query, and streaming algorithms.
\newblock In {\em Proceedings, ACM Symposium on Theory of Computing (STOC)},
  2020.

\bibitem[MP17]{mazumdar2017semisupervised}
Arya Mazumdar and Soumyabrata Pal.
\newblock Semisupervised clustering, and-queries and locally encodable source
  coding.
\newblock {\em Advances in Neural Information Processing Systems}, 2017.

\bibitem[MPZ24]{mukherjee2024recovering}
Chandra~Sekhar Mukherjee, Pan Peng, and Jiapeng Zhang.
\newblock Recovering unbalanced communities in the stochastic block model with
  application to clustering with a faulty oracle.
\newblock {\em Advances in Neural Information Processing Systems}, 36, 2024.

\bibitem[MS17a]{MS17a}
Arya Mazumdar and Barna Saha.
\newblock Clustering with noisy queries.
\newblock In {\em Neural Information Processing Systems (NeurIPS)}, 2017.

\bibitem[MS17b]{MS17b}
Arya Mazumdar and Barna Saha.
\newblock Query complexity of clustering with side information.
\newblock In {\em Neural Information Processing Systems (NeurIPS)}, 2017.

\bibitem[MS17c]{mazumdar2017theoretical}
Arya Mazumdar and Barna Saha.
\newblock A theoretical analysis of first heuristics of crowdsourced entity
  resolution.
\newblock In {\em Proceedings of the AAAI Conference on Artificial
  Intelligence}, 2017.

\bibitem[MT16]{mitzenmacher2016predicting}
Michael Mitzenmacher and Charalampos~E Tsourakakis.
\newblock Predicting signed edges with $o(n^{1+o(1)}\log n)$ queries.
\newblock {\em arXiv preprint arXiv:1609.00750}, 2016.

\bibitem[PR08]{PR08}
Ely Porat and Amir Rothschild.
\newblock Explicit non-adaptive combinatorial group testing schemes.
\newblock In {\em Automata, Languages and Programming, 35th International
  Colloquium, {ICALP} 2008}, Lecture Notes in Computer Science, 2008.

\bibitem[RS07]{reyzin2007learning}
Lev Reyzin and Nikhil Srivastava.
\newblock Learning and verifying graphs using queries with a focus on edge
  counting.
\newblock In {\em Algorithmic Learning Theory (ALT)}, 2007.

\bibitem[RSW17]{rubinstein2017computing}
Aviad Rubinstein, Tselil Schramm, and S~Matthew Weinberg.
\newblock Computing exact minimum cuts without knowing the graph.
\newblock {\em arXiv preprint arXiv:1711.03165}, 2017.

\bibitem[SM00]{shi2000normalized}
Jianbo Shi and Jitendra Malik.
\newblock Normalized cuts and image segmentation.
\newblock {\em IEEE Transactions on pattern analysis and machine intelligence},
  22(8):888--905, 2000.

\bibitem[SS19]{saha2019correlation}
Barna Saha and Sanjay Subramanian.
\newblock Correlation clustering with same-cluster queries bounded by optimal
  cost.
\newblock In {\em European Symposium on Algorithms (ESA)}, 2019.

\bibitem[VBD14]{vesdapunt2014crowdsourcing}
Norases Vesdapunt, Kedar Bellare, and Nilesh~N Dalvi.
\newblock Crowdsourcing algorithms for entity resolution.
\newblock {\em Proc. VLDB Endow.}, 7(12):1071--1082, 2014.

\bibitem[WKFF12]{wang2012crowder}
Jiannan Wang, Tim Kraska, Michael~J Franklin, and Jianhua Feng.
\newblock Crowder: crowdsourcing entity resolution.
\newblock {\em Proceedings of the VLDB Endowment}, 5(11):1483--1494, 2012.

\end{thebibliography}

\appendix

\section{Useful Lemmas}

\subsection{Proofs of Group Testing Primitives} \label{sec:supp-recovery}

Given $v \in \{0,1\}^n$, let $\supp(v) = \{i \colon x_i = 1\}$ denote the support of $v$. An $\mathsf{OR}$-query on set $S \subseteq [n]$ returns 
\[
\mathsf{OR}_S(v) = \bigvee_{i \in S} v_i = \mathbf{1}\left(\supp(x) \cap S \neq \emptyset\right) \text{.}
\]
This section discusses the problem of recovering the support of a vector via $\mathsf{OR}$ queries. In particular, we are interested in \emph{non-adaptive} algorithms for this problem. The results in this section are standard in the combinatorial group testing and coin-weighing literature. See e.g. \cite{HS87,PR08} and also \cite{assadi2021graph}, who applied these results to obtain query algorithms for graph connectivity.


\begin{lemma} \label{lem:1-bit} Let $v \in \{0,1\}^n$ such that $|\supp(v)| = 1$. There is a deterministic, non-adaptive algorithm that makes $\lceil \log n \rceil$ $\mathsf{OR}$ queries and returns $\supp(v)$. \end{lemma}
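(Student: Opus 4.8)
The plan is to use the standard bit-indexing trick from group testing: since $v$ has exactly one nonzero coordinate, say at the unknown index $j \in [n]$, it suffices to recover the binary expansion of $j-1$, which has at most $m := \lceil \log n \rceil$ bits. For each bit position $b \in \{1,\ldots,m\}$, define the (fixed, data-independent) query set
\[
S_b = \{\, i \in [n] \colon \text{the } b\text{-th least significant bit of the integer } i-1 \text{ equals } 1 \,\}.
\]
These $m$ sets are specified before any query is made, so the algorithm is non-adaptive and deterministic.

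Next I would observe that $\mathsf{OR}_{S_b}(v) = \mathbf{1}(\supp(v) \cap S_b \neq \emptyset) = \mathbf{1}(j \in S_b)$, which by definition of $S_b$ equals the $b$-th bit of $j-1$. Hence the vector of answers $(\mathsf{OR}_{S_1}(v),\ldots,\mathsf{OR}_{S_m}(v))$ is precisely the $m$-bit binary representation of $j-1$. Since $0 \le j-1 \le n-1 < 2^m$, this representation determines $j-1$, and therefore $j$, uniquely; the algorithm outputs $\supp(v) = \{j\}$.

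The only points requiring (minimal) care are bookkeeping ones: that $m = \lceil \log n \rceil$ bits genuinely suffice to encode all $n$ possible values of the index (true since $n \le 2^{\lceil \log n\rceil}$), and consistency of the $0$- versus $1$-indexing convention when defining $S_b$. There is no real obstacle here — the lemma is essentially a restatement of binary search made non-adaptive by querying all "bit-slices" at once — so the write-up is short and the argument above is complete modulo these conventions. \qed
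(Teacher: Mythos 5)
Your proposal is correct and follows essentially the same route as the paper: the paper phrases the construction as a $\lceil \log n\rceil \times n$ query matrix whose $i$-th column is the binary representation of the index, which is exactly your collection of bit-slice sets $S_b$, and the recovery argument is identical. Your explicit use of $0$-indexing ($j-1 \in \{0,\ldots,n-1\}$) is if anything slightly cleaner on the bookkeeping point about $\lceil \log n\rceil$ bits sufficing.
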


\begin{proof} Since $|\supp(v)| = 1$, an $\mathsf{OR}$ query on set $S$ is equivalent to taking $\langle v, w \rangle$ where $w_i = 1$ iff $i \in S$. Let $M$ be the $\lceil \log n \rceil \times n$ matrix whose $i$'th column is simply $b^i \in \{0,1\}^{\lceil \log n \rceil}$, the binary representation of $i$. The rows of $M$ correspond to $\mathsf{OR}$ queries. Then, $Mv = \sum_{i = 1}^n x_i b^i = \sum_{i \colon x_i = 1} b^i = b_j$ where $j$ is the unique coordinate where $x_j = 1$. \end{proof}

\onebitcertone*

\begin{proof} Let $M$ be the $\lceil \log n \rceil \times n$ matrix described in the proof of \Cref{lem:1-bit}. Let $\mathbf{1} = 1^{\lceil \log n \rceil \times n}$ denote the all $1$'s matrix with the same dimensions. We query $M \cdot v$ and $(\mathbf{1} - M) \cdot v$ where here $(\cdot)$ denotes the "$\mathsf{OR}$ product". I.e. the $i$'th coordinate of $M \cdot v$ is $\mathbf{1}((Mv)_i > 0)$. Note that $\mathbf{1} - M$ is obtained by flipping every bit in $M$. Note that if $|\supp(v)| = 1$, then $M \cdot v$ is guaranteed to return the unique coordinate where $v$ has a one, as in the proof of \Cref{lem:1-bit}. Thus, it suffices to show that we can use these queries to determine whether $|\supp(v)|$ is $0$, $1$, or strictly greater than $1$. 

First, $|\supp(v)| = 0$ iff $(M \cdot v)_1 = 0$ and $((\mathbf{1} - M) \cdot v)_1 = 0$ since the sets of $1$-coordinates in the first row of $M$ and $\mathbf{1}-M$ partition $[n]$.

Next, we claim that $|\supp(v)| > 1$ iff there exists some $i \in [\lceil \log n \rceil]$ such that $(M \cdot v)_i = 1$ and $((\mathbf{1} - M) \cdot v)_i = 1$. Note that for every row $i$, the $1$-coordinates in the $i$'th row of $M$ and $\mathbf{1}-M$ partition $[n]$. Thus, clearly if $(M \cdot v)_i = 1$ and $((\mathbf{1} - M) \cdot v)_i = 1$, then there are at least $2$ coordinates where $v$ has a one. Now we prove the converse. Suppose there exists $i \neq j \in [n]$ where $v_i = v_j = 1$. Let $b^i,b^j \in \{0,1\}^{\lceil \log n \rceil}$ denote the binary representations of $i,j$ respectively. Since $i \neq j$, there exists some bit $k$ where $b^i_k \neq b^j_k$. Without loss of generality let $b^i_k = 1$ and $b^j_k = 0$. Then,
\[
(M \cdot v)_k = \mathbf{1}\left( \left(\sum_{\ell = 1}^n v_\ell b^{\ell}\right)_k > 0\right) = \mathbf{1}\left( \sum_{\ell \colon v_{\ell} = 1}^n b^{\ell}_k > 0\right) = 1 \text{,}
\]
\[
((\mathbf{1}-M) \cdot v)_k = \mathbf{1}\left( \left(\sum_{\ell = 1}^n v_\ell (\vec{1}-b^{\ell})\right)_k > 0\right) = \mathbf{1}\left( \sum_{\ell \colon v_{\ell} = 1}^n (1-b^{\ell}_k) > 0\right) = 1
\]
and this completes the proof. \end{proof}

Next, we describe a \emph{randomized} non-adaptive algorithm for recovering the entire support of $x$.

\tER*

\begin{proof} For brevity, we assume that $t$ divides $n$. Let $m = e \cdot t \ln \frac{n}{\alpha}$. We make $\mathsf{OR}$ queries on sets $S_1, \dots, S_m$, each formed by taking $n/t$ i.i.d. uniform samples from $[n]$ and define
    \begin{align} \label{eq:X}
        X = [n]\setminus \bigcup_{\ell \in [m] \colon \mathsf{OR}_{S_{\ell}}(v)=0} S_{\ell} \text{.}
    \end{align}
If $|X| > t$, we certify $|\supp(v)| > t$ and if $|X| \leq t$, then we output $X$.



Suppose that $|\supp(v)| > t$. Observe that $\supp(v) \subseteq X$ and so $|X| > t$ with probability $1$. Thus, the algorithm is always correct in this case.

Now suppose $|\supp(v)| \leq t$. We argue that $X = \supp(v)$ with probability at least $1-\alpha$. Consider some $i \notin \supp(v)$. Note that $i \notin X$ iff there is some query $S_{\ell} \ni i$ for which $S_{\ell} \cap \supp(v) = \emptyset$. Let $\cE_{i,\ell}$ denote the event that $i \in S_{\ell}$ and $S_{\ell} \cap \supp(v) = \emptyset$. Then, since $|\supp(v)| \leq t$, we have
\begin{align}
    \Pr[\cE_{i,\ell}] = \frac{n}{t} \cdot \frac{1}{n} \cdot \left(1-\frac{|\supp(v)|}{n}\right)^{\frac{n}{t} - 1} \geq \frac{1}{t} \left(1-\frac{t}{n}\right)^{\frac{n}{t}} \geq \frac{1}{et}\nonumber
\end{align}
and so
\[ 
    \Pr[i \in X] = \Pr\big[\neg \cE_{i,\ell} \text{ for all } \ell \in [m]\big] \leq \Big(1-\frac{1}{e t}\Big)^m \leq \frac{\alpha}{N}
\]
since $m = e \cdot t \ln \frac{N}{\alpha}$. Thus, by a union bound, we have $\Pr[X \neq \supp(v)] \leq \alpha$. \end{proof}

\subsection{Connectivity Threshold of Erd\H{o}s-R\'{e}nyi Random Graphs} \label{sec:random-graph}

Our proofs in \Cref{sec:unbounded,sec:bounded-2} make use of the following bound on the probability of a random graph being connected. For intuition, note that for sufficiently large $N$, 
\[
1-(\alpha/3N)^{2/N} \approx 1 - \exp\Big(-\frac{2\ln(3N/\alpha)}{N}\Big) \approx \frac{\ln(3N/\alpha)}{N} \text{.}
\]
Thus, \Cref{fact:random-connected} asserts that for sufficiently large $N$ a random graph with $N$ vertices and $\gg N \ln (N/\alpha)$ edges is connected with probability at least $1-\alpha$, which may be a more familiar statement. However, we need such a bound to be true even for very small $N$ and so we give the following more broadly applicable version.

\randomconnected*

\begin{proof} A graph $G = (V,E)$ is connected if and only if for every cut $\emptyset \neq S \subset V$, there exists an edge $(u,v) \in E \cap (S \times \overline{S})$. When $G$ is drawn from $G(N,p)$, this does not occur for a cut $S$ of size $|S| = t$ with probability exactly $(1-p)^{t(N-t)}$. There are exactly ${N \choose t}$ such cuts. Thus, taking a union bound over all cuts and using our lower bound on $p$, we have
\begin{align}
    \Pr_{G \sim G(N,p)}[G \text{ not connected}] &\leq \sum_{t = 1}^{N-1} {N \choose t} \left(\frac{\alpha}{3N}\right)^{\frac{2}{N} \cdot t(N-t)} \nonumber \\
    &\leq 2\sum_{t = 1}^{\floor{N/2}} {N \choose t} \left(\frac{\alpha}{3N}\right)^{\frac{2}{N} \cdot t(N-t)} \nonumber \\
    &\leq 2\sum_{t = 1}^{\floor{N/2}} {N \choose t} \left(\frac{\alpha}{3N}\right)^{\frac{2}{N} \cdot \frac{tN}{2}} \leq 2\sum_{t = 1}^{\floor{N/2}} N^t \left(\frac{\alpha}{3N}\right)^{t} = 2 \sum_{t = 1}^{\floor{N/2}} \left(\alpha/3\right)^t \leq \alpha \nonumber
\end{align}   
and this completes the proof. \end{proof}

\section{A Simple $O(n\log{k})$ Adaptive Algorithm}
\label{sec:simple-adaptive}
Here we sketch a simple adaptive algorithm using $O(n\log{k})$ queries. Suppose, we have identified one element from $i$ clusters (initially $i=0$, and we have $i \leq k$ always). Suppose they are $X=\{x_1, x_2,...,x_i\}$. We now want to find the cluster to which a new point $y$ belongs to. We first query $X \cup \{y\}$. If the answer is $i+1$, then $y$ is part of a new cluster and $i$ grows to $i+1$. Otherwise, $y$ is part of the $i$ clusters, and we detect the cluster to which $y$ belongs to using a binary search. We consider the two sets $X_1=\{x_1, x_2,..,x_{\ceil{i/2}}\}$, and $X_2=\{x_{\ceil{i/2}+1},..,x_i\}$. We then query $X_1 \cup \{y\}$. If the answer is $\ceil{{i/2}}+1$, then we search recursively in $X_2$, else if the query answer is $\ceil{{i/2}}$, then we search recursively in $X_1$. Clearly, the query complexity is $O(\log{k})$ per item, and it requires $O(\log{k})$ rounds of adaptivity even to place one element.

\end{document}